\documentclass[journal,twoside,web]{ieeecolor}
\usepackage{generic}
\usepackage{cite}
\usepackage{amsmath,amssymb,amsfonts}
\usepackage{algorithm}
\usepackage{algorithmic}
\usepackage{graphicx}
\usepackage[caption=false,font=footnotesize]{subfig}
\usepackage{hyperref}
\hypersetup{
	colorlinks=true,
	citecolor=blue,
	linkcolor=blue,
	urlcolor=blue
}
\hypersetup{hidelinks}
\usepackage{textcomp}

\newcommand{\X}{\mathcal{X}}
\newcommand{\E}{\mathbb{E}}
\newcommand{\projX}{\Pi_{\X}}

\newtheorem{definition}{Definition}
\newtheorem{proposition}{Proposition}
\newtheorem{remark}{Remark}
\newtheorem{assumption}{Assumption}
\newtheorem{lemma}{Lemma}
\newtheorem{theorem}{Theorem}
\newtheorem{corollary}{Corollary}

\begin{document}

\title{Federated Zeroth-Order Optimization with Direction Aggregation and Variance Reduction}
	\author{Qianlong Dang,~\IEEEmembership{Member,~IEEE},~Baosheng Li
	\thanks{This work was supported in part by the National Nature Science Foundation of China under Grant 12501711 and 126011000, in part by the Postdoctoral Research Project of Shaanxi Province Grant 2025BSHEDZZ012, in part by the China Postdoctoral Science Foundation under Grant 2024M762632, in part by the Social Science Foundation of Shaanxi under Grant 2024R027, in part by Chinese Universities Scientific Fund Grant 2452026073, and in part by the Fundamental Research Funds for the Central Universities Grant 2452026134.
	}
	\thanks{
		Qianlong Dang and Baosheng Li are with the College of Science, Northwest A\&F University, Yangling 712100, China (e-mail: xidianqldang@163.com, baoshengli96@nwafu.edu.cn).}
}

\maketitle

\begin{abstract}
We study constrained nonsmooth nonconvex stochastic optimization in federated settings, where clients access only stochastic function evaluations. Existing federated zeroth-order methods primarily combine local zeroth-order updates with model averaging. However, they struggle with client drift induced by local projected updates and sampling variance in stochastic zeroth-order estimates. In this paper, we propose a federated zeroth-order framework based on direction aggregation, equipped with two local sampling schemes. Specifically, FedZOO constructs a local direction from a minibatch shared across multiple spherical queries, thereby controlling the stochastic errors arising from data sampling and gradient approximation. In contrast, FedVRZO forms its local estimator from independent sample--direction pairs, so that its sampling error is controlled directly by the number of sample--direction pairs. In both algorithms, clients compute the mean of the directions evaluated along their projected local trajectories, and the server performs a single projected update using the weighted aggregate of the client directions. Furthermore, we establish convergence guarantees for FedZOO and FedVRZO, respectively. Experiments on black-box adversarial attacks demonstrate the effectiveness of the proposed methods.\end{abstract}

\begin{IEEEkeywords}
Federated optimization, zeroth-order optimization, nonsmooth nonconvex
optimization, variance reduction, randomized smoothing.
\end{IEEEkeywords}

\section{Introduction}
\label{sec:introduction}
\IEEEPARstart{F}{ederated} optimization \cite{mcmahan2017fedavg,reddi2021adaptive,wu2023fasterfed} seeks a shared decision vector by coordinating client-specific objectives while preserving data privacy through local computation \cite{gratton2021privacyzo}. In this paper, we consider the following finite-sum stochastic optimization problem
\begin{equation}
\begin{aligned}
    \min_{x\in\X}\quad
    f(x) &:= \sum_{i=1}^{M}p_i f_i(x), \\
    f_i(x) &:= \E_{\xi_i\sim\mathcal{D}_i}
    [F_i(x,\xi_i)],
\end{aligned}
\label{eq:intro_problem}
\end{equation}
where \(M\) is the number of clients, \(p_i>0\) with \(\sum_{i=1}^{M}p_i=1\), and \(\X\subseteq\mathbb{R}^{n}\) is a nonempty closed convex constraint set shared by all clients. The local objective \(f_i\) is defined as the expectation of the sample loss \(F_i(x,\xi_i)\) over the random variable \(\xi_i \sim \mathcal D_i\), and \(F_i\) may be nonsmooth and nonconvex. Each client \(i\) can only query stochastic function evaluations \(F_i(x,\xi_i)\), whereas first-order gradient information is unavailable. Thus, standard gradient-based federated algorithms become inapplicable, and stationarity can no longer be characterized by the classical gradient norm. In addition, the client data distributions \(\mathcal D_i\) may not be identical \cite{li2020fedprox,10508084}. Throughout the paper, \(\|\cdot\|\) denotes the Euclidean norm unless another norm is indicated.

\subsection{Related Work}
The absence of gradient information motivates centralized zeroth-order
optimization (ZOO) \cite{huang2024zoadmm}, where true gradients are replaced by finite differences of
function values along sampled directions. To handle nonsmoothness \cite{kungurtsev2021weaklyconvexzo}, randomized smoothing constructs a differentiable surrogate of the stochastic objective, allowing two-point function-value differences \cite{ghadimi2016mini} to approximate the gradient of the smoothed function. 
Ghadimi and Lan
\cite{ghadimi2013zo} developed a randomized zeroth-order algorithms for nonconvex stochastic programming, establishing stationarity complexity bounds with explicit dimension dependence.
Nesterov and Spokoiny \cite{nesterov_random_2017} analyzed Gaussian smoothing for gradient-free minimization and quantified the effects of the smoothing radius and directional sampling.
In nonsmooth nonconvex settings, gradient-free algorithms replace classical gradient-norm criteria with smoothed or neighborhood-based measures of stationarity \cite{lin2022gradientfree,chen2023fasterzo,kornowski2024zero}.
To enforce feasibility \cite{nguyen2022functionalzo,li2022functionalzo}, projected ZOO methods control stationarity over a closed convex constraint set \cite{balasubramanian2022zo,marrinan2026zo,chen2025projectionzo,maass2021symmetriczo}, whereas zeroth-order Frank--Wolfe variants offer a projection-free alternative \cite{akhtar2022fw}.
Centralized ZOO methods share a unified single-trajectory recursion, where local directions are estimated at the current iterate and projected once per iteration. This unified recursion allows stationarity to be naturally characterized through the projected gradient mapping of the smoothed objective. However, the same property disappears when clients perform multiple local updates between server synchronizations.

\begin{table*}[ht]
	\caption{Comparison of representative federated zeroth-order methods}
	\label{tab:closest_work}
	\centering
	\renewcommand{\arraystretch}{1.15}
	\setlength{\tabcolsep}{16pt}
	\begin{tabular}{ccccccc}
		\hline
		Approach & \shortstack{Nonsmooth\\objective}
		& \shortstack{Constrained\\problem}
		& \shortstack{Multiple\\local steps}
		& \shortstack{Direction\\aggregation}
		& \shortstack{Global\\projection}
		& \shortstack{Sample--direction\\pairs} \\
		\hline
		FedZO \cite{fedzo2022}
		& \(\times\) & \(\times\) & \(\checkmark\)
		& \(\times\) & \(\times\) & \(\times\) \\
		Qiu et al. \cite{qiu2023fedrzo}
		& \(\checkmark\) & \(\checkmark\) & \(\checkmark\)
		& \(\times\) & \(\times\) & \(\times\) \\
		Chen et al. \cite{chen2023finegrained}
		& \(\times\) & \(\times\) & \(\times\)
		& \(\checkmark\) & \(\times\) & \(\times\) \\
		FedZOO (Our)
		& \(\checkmark\) & \(\checkmark\) & \(\checkmark\)
		& \(\checkmark\) & \(\checkmark\) & \(\times\) \\
		FedVRZO (Our)
		& \(\checkmark\) & \(\checkmark\) & \(\checkmark\)
		& \(\checkmark\) & \(\checkmark\) & \(\checkmark\) \\
		\hline
	\end{tabular}
\end{table*}

The single-trajectory property fails when function evaluations are carried out by multiple clients rather than at a common iterate.
Early works \cite{yi2022automatica,yuan2023automatica} establish consensus-based architectures \cite{8629972} and primal--dual frameworks \cite{9535301} for distributed nonconvex zeroth-order optimization. Recent extensions \cite{10645219,he2024parallelzo,hu2023tsgzo,dang2024iotzo} have integrated communication compression \cite{11165038,xu2024qzogt} and variance-reduction techniques \cite{11086507,11601068} to reduce consensus and sampling errors. In contrast, federated zeroth-order methods replace continuous coordination with periodic server synchronizations. In every round, each client evaluates several local zeroth-order directions \cite{10487872}, so that the query points may drift from the server iterate and form client-specific projected trajectories. This divergence is affected by nonidentical distributions \(\mathcal{D}_i\) and client-specific stochastic sampling \cite{stich2019local,li2020fedprox,karimireddy2020scaffold,10508084}. Consequently, the uploaded direction may differ from the global smoothed gradient because of local trajectory displacement, sampling error, and heterogeneity across client objectives.

Existing federated zeroth-order methods address these issues under restricted or unconstrained settings. A comparison of representative federated zeroth-order methods is summarized in Table~\ref{tab:closest_work}.
For instance, Fang et al. \cite{fedzo2022} developed the FedZO algorithm to allow multiple local steps, but their reliance on terminal model differences confounds local directions with cumulative trajectory drift. Qiu et al. \cite{qiu2023fedrzo} study nondifferentiable nonconvex and hierarchical federated optimization through randomized smoothing and Clarke stationarity, but their framework relies on hierarchical objectives and does not target direction aggregation under a shared closed convex constraint. Building on this line of work, Chen et al. \cite{chen2023finegrained} analyze FedZO generalization and stability under smoothness and heavy-tailed noise assumptions. However, these methods fail to account for the scenario where local directions evaluated along client-specific projected trajectories are aggregated prior to a server-side projection. Consequently, it remains a critical theoretical challenge to establish convergence guarantees for this direction-aggregation paradigm.

\subsection{Motivation}

Federated zeroth-order methods reduce communication by performing multiple local
function-value queries, but methods with multiple local steps commonly
aggregate terminal client models
\cite{fedzo2022,qiu2023fedrzo}. Under constraints, each terminal model mixes
the evaluated directions with the displacement and projection effects
accumulated along a client-specific trajectory, which are further influenced
by client drift
\cite{stich2019local,li2020fedprox,karimireddy2020scaffold}. Since projection
and weighted averaging do not commute, the averaged model generally cannot be
interpreted as a projected step along the weighted global direction. This
precludes a direct application of centralized projected ZO analysis
\cite{balasubramanian2022zo,marrinan2026zo} and motivates aggregating local
directions before performing the server projection.

\subsection{Main Contributions}
This paper concerns constrained nonsmooth nonconvex stochastic optimization in
a synchronous federated setting, where each client accesses only stochastic function evaluations. 
In contrast to FedZO that aggregates local models, we consider a novel recursion in which
clients aggregate zeroth-order estimates evaluated along their projected local
trajectories and the server performs one weighted projected update. This architecture keeps
the server recursion on the shared feasible set while exposing the effects of
local drift and stochastic sampling. The main characteristics of the proposed methods and the corresponding analysis are summarized as follows.
\begin{enumerate}
	\item We propose a federated zeroth-order optimization (FedZOO) algorithm based on local-direction
	aggregation. Each client uses a common minibatch of
	\(B_t\) samples together with \(Q_t\) spherical directions to form a two-point estimate and then uploads the mean of the \(K\) directions evaluated along its projected local trajectory. Then, the server aggregates these directions with the objective weights and performs one projection onto the shared constraint set. 
	
	\item We propose a federated variance-reduced zeroth-order (FedVRZO) algorithm, which retains the same client-server update recursion but constructs each local estimator from \(N_t\) independent sample--direction pairs. The samples are drawn from \(\mathcal D_i\), and the directions are drawn uniformly from the unit sphere. This construction produces a conditional mean-zero sampling error whose second moment is of order \(\mathcal{O}(n/N_t)\).
	
	\item We establish the finite-time convergence guarantees for FedZOO and FedVRZO under nonsmooth nonconvex settings. For both algorithms, our direction-error analysis quantifies the coupled effects of projected local-trajectory drift, smoothed-gradient heterogeneity, and sampling-scheme variance.
	
	\item We characterize the communication and function-query complexities of
	FedZOO and FedVRZO. Under a fixed local budget of \(P\) two-point
    differences, the sampling-error factors of FedZOO and FedVRZO scale as
    \(P^{-1/2}\) and \(P^{-1}\), respectively. The resulting function-query complexities are
\(\mathcal O(MKn\epsilon^{-2}+Mn\varkappa^2\epsilon^{-6}/K)\) for FedZOO and
\(\mathcal O(MKn\epsilon^{-2}+Mn\varkappa\epsilon^{-4})\) for FedVRZO.
	We validate the performance of FedZOO and FedVRZO through comparative experiments with existing federated zeroth-order baselines on black-box adversarial attacks.
\end{enumerate}

\subsection{Organization}
Section~\ref{sec:preliminaries} introduces the smoothing construction, ZO
estimators, and stationarity measure. Section~\ref{sec:algorithms} gives the client and
server recursions, and Section~\ref{sec:convergence} establishes their
convergence and complexity bounds. Section~\ref{sec:experiments} describes the
numerical evaluation, and Section~\ref{sec:conclusion} concludes the paper.
Complete proofs are collected in the appendix.

\section{Preliminaries}
\label{sec:preliminaries}

\subsection{Spherical Smoothing and Two-Point Estimation}
In federated optimization, local loss functions are frequently nonsmooth and nonconvex, while first-order gradient information is unavailable. In this context, conventional finite-difference schemes fail to construct stable gradient approximations. Randomized smoothing addresses this issue by replacing the nonsmooth function with a differentiable surrogate, whose gradient can be estimated via local function evaluations \cite{ghadimi2013zo,nesterov_random_2017,marrinan2026zo}. To formalize this construction, we define the unit ball and unit sphere in \(\mathbb R^n\) as
\[
\mathbb B:=\{u\in\mathbb R^n:\|u\|\leq 1\},\qquad
\mathbb S:=\{v\in\mathbb R^n:\|v\|=1\}.
\]
Let \(\mathcal U\subseteq\mathbb R^n\) be a common neighborhood of \(\X\) on
which the stochastic function values are available. We select a smoothing radius \(\eta_0>0\) such that \(\X_{\eta_0}:=\X+\eta_0\mathbb B\subseteq\mathcal U\). Hence, for any \(x\in\X\) and \(0<\eta<\eta_0\), the perturbed query points \(x\pm\eta v\) belong to the function-evaluation domain \(\mathcal U\). Let \(u\sim\operatorname{Unif}(\mathbb B)\) be uniformly distributed on the unit ball. The ball-smoothed local objective for client \(i\) is defined as
\begin{equation}
	f_{i,\eta}(x):=\E_{u}\bigl[f_i(x+\eta u)\bigr].
	\label{eq:local_smoothing_def}
\end{equation}
Correspondingly, the ball-smoothed global objective function is given by
\begin{equation}
	f_\eta(x):=\sum_{i=1}^{M}p_i f_{i,\eta}(x)=\E_u\bigl[f(x+\eta u)\bigr].
	\label{eq:global_smoothing_def}
\end{equation}
Under the local Lipschitz continuity and integrability assumptions detailed in Section~\ref{sec:convergence}, the smoothed function \(f_{i,\eta}\) is continuously differentiable. Its gradient admits an exact representation:
\begin{equation}
	\nabla f_{i,\eta}(x)
	=\frac{n}{2\eta}\E_{\xi_i,v}
	\left[
	\bigl(F_i(x+\eta v,\xi_i)-F_i(x-\eta v,\xi_i)\bigr)v
	\right],
	\label{eq:smoothing_gradient_identity}
\end{equation}
where \(v\sim\operatorname{Unif}(\mathbb S)\) is uniformly distributed on the unit sphere and independent of the random sample \(\xi_i\). By linearity, the smoothed global gradient satisfies \(\nabla f_\eta(x)=\sum_{i=1}^{M}p_i\nabla f_{i,\eta}(x)\).

The representation in \eqref{eq:smoothing_gradient_identity} motivates a stochastic finite-difference estimator using a single data sample \(\xi_i\sim\mathcal D_i\) and a single random direction \(v\sim\operatorname{Unif}(\mathbb S)\). Specifically, the elementary symmetric two-point stochastic gradient estimator is constructed as
\begin{equation}
	\widehat g_{i,\eta}(x;\xi_i,v)
	:=
	\frac{n}{2\eta}
	\bigl[F_i(x+\eta v,\xi_i)-F_i(x-\eta v,\xi_i)\bigr]v,
	\label{eq:elementary_zo_estimator}
\end{equation}
where the same sample \(\xi_i\) is evaluated at both perturbed points \(x+\eta v\) and \(x-\eta v\). Taking the expectation with respect to the independent sample--direction pair \((\xi_i,v)\) recovers the gradient of the smoothed local objective:
\begin{equation}
	\E_{\xi_i,v}
	[\widehat g_{i,\eta}(x;\xi_i,v)]
	=\nabla f_{i,\eta}(x).
	\label{eq:elementary_zo_expectation}
\end{equation}

\subsection{Clarke Stationarity and Projected Gradient Mapping}

For a locally Lipschitz but possibly nondifferentiable objective, stationarity
cannot be expressed through a classical gradient norm. We first recall the
neighborhood-based notion used for nonsmooth nonconvex optimization
\cite{clarke1990optimization,goldstein1977optimization}.

\begin{definition}[Clarke and \(\eta\)-Clarke stationarity]
\label{def:clarke_stationarity}
\normalfont
For a locally Lipschitz function \(f:\mathcal U\to\mathbb R\), its Clarke
generalized directional derivative at \(x\in\mathcal U\) along
\(d\in\mathbb R^n\) is
\begin{equation}
    f^{\circ}(x;d)
    :=\limsup_{\substack{y\to x\\ s\downarrow 0}}
    \frac{f(y+s d)-f(y)}{s}.
\label{eq:clarke_directional_derivative}
\end{equation}
The Clarke subdifferential and its \(\eta\)-neighborhood enlargement are given by
\begin{subequations}
\begin{align}
    \partial f(x)
    &:=\left\{\ell\in\mathbb R^n:
    f^{\circ}(x;d)\geq \ell^{\top}d,\ \forall d\in\mathbb R^n\right\},
    \label{eq:clarke_subdifferential}\\
    \partial_\eta f(x)&:=\operatorname{conv}\!\left(\bigcup_{\substack{y\in\mathcal U\\ \|y-x\|\leq\eta}}\partial f(y)\right).
    \label{eq:eta_clarke_subdifferential}
\end{align}
\end{subequations}
A point \(x\in\X\) is \(\eta\)-Clarke stationary for
\eqref{eq:intro_problem} if
\begin{equation}
    0\in\partial_\eta f(x)+\mathcal N_{\X}(x),
\label{eq:eta_clarke_stationarity}
\end{equation}
where the normal cone of \(\X\) at \(x\) is
\[
    \mathcal N_{\X}(x)
    :=\{w\in\mathbb R^n:\langle w,z-x\rangle\leq0,
    \ \forall z\in\X\}.
\]
\end{definition}

Condition~\eqref{eq:eta_clarke_stationarity} is the constrained counterpart of
approximate Clarke stationarity. The enlargement \(\partial_\eta f(x)\)
collects generalized gradients at points within distance \(\eta\) of \(x\),
whereas \(\mathcal N_{\X}(x)\) accounts for the geometry of the feasible set.
If \(x\) lies in the interior of \(\X\), then
\(\mathcal N_{\X}(x)=\{0\}\), and the condition reduces to
\(0\in\partial_\eta f(x)\). At a boundary point, a nonzero generalized
gradient may instead be balanced by an outward normal. Moreover,
\(\partial f(x)\subseteq\partial_\eta f(x)\) and every Clarke stationary
point is \(\eta\)-Clarke stationary, but the converse need not hold for a
fixed \(\eta>0\). Thus, \(\eta\) specifies a spatial relaxation of the
generalized gradient.

Spherical smoothing replaces the
nonsmooth objective by the continuously differentiable function \(f_\eta\),
whose gradient can be estimated by the two-point construction above. This
motivates measuring constrained stationarity of the smoothed problem through
the projected gradient mapping
\cite{balasubramanian2022zo,marrinan2026zo}. For any \(0<\eta<\eta_0\),
stepsize \(a>0\), and \(x\in\X\), the projected gradient mapping is given by
\begin{equation}
    \mathcal G_{\eta,a}(x)
    :=\frac{1}{a}\left[
    x-\projX\bigl(x-a\nabla f_\eta(x)\bigr)
    \right].
\label{eq:projected_gradient_mapping}
\end{equation}
The point inside the projection is an unconstrained gradient step for
\(f_\eta\), and the projection returns it to \(\X\). Consequently,
\(\mathcal G_{\eta,a}(x)\) is the rescaled discrepancy between \(x\) and this
projected step. If \(\X=\mathbb R^n\), the projection is the identity and
\(\mathcal G_{\eta,a}(x)=\nabla f_\eta(x)\). For a general closed convex set,
the mapping reflects the active constraint geometry: it may vanish at a
boundary point even when \(\nabla f_\eta(x)\neq0\). Although its magnitude
depends on \(a\), its zero set is independent of \(a>0\).

\begin{proposition}[Smoothing and Clarke stationarity]
\label{prop:stationarity_transfer}
Let \(f\) be locally Lipschitz on \(\mathcal U\). For every \(x\in\X\),
\(0<\eta<\eta_0\), and \(a>0\), the gradient of the ball-smoothed
objective satisfies
\begin{equation}
    \nabla f_\eta(x)\in\partial_\eta f(x),
\label{eq:smoothing_clarke_inclusion}
\end{equation}
and the projected gradient mapping obeys
\begin{equation}
\mathcal G_{\eta,a}(x)=0
\quad\Longleftrightarrow\quad
0\in\nabla f_\eta(x)+\mathcal N_{\X}(x).
\label{eq:mapping_stationarity_equivalence}
\end{equation}
Consequently, \(\mathcal G_{\eta,a}(x)=0\) implies that \(x\) satisfies
\eqref{eq:eta_clarke_stationarity}.
\end{proposition}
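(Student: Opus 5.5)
The plan has three parts: first show that the smoothed gradient is an average of nearby Clarke subgradients, then establish the fixed-point characterization of the projected gradient mapping, and finally combine the two.

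\textbf{Inclusion \eqref{eq:smoothing_clarke_inclusion}.} Fix \(x\in\X\) and \(0<\eta<\eta_0\). The ball \(x+\eta\mathbb B\) lies in \(\X_{\eta_0}\subseteq\mathcal U\). Since \(f\) is locally Lipschitz, it is Lipschitz on a neighborhood of this compact ball. By Rademacher's theorem, \(f\) is then differentiable almost everywhere there, with \(\nabla f(y)\in\partial f(y)\) wherever it exists. Differentiating under the integral (justified by dominated convergence, since difference quotients are bounded by the Lipschitz constant) gives
\[
\nabla f_\eta(x)=\E_u\bigl[\nabla f(x+\eta u)\bigr].
\]
The integrand takes values in the set \(S:=\bigcup_{\|y-x\|\le\eta}\partial f(y)\).

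The main obstacle is passing from ``expectation of an \(S\)-valued random vector'' to membership in \(\operatorname{conv}(S)\), since an expectation a priori lies only in the closed convex hull. I would handle this as follows. By upper semicontinuity of \(\partial f\), together with its compact convex values and local boundedness, \(S\) is compact. In finite dimensions the convex hull of a compact set is compact (Carathéodory), so \(\operatorname{conv}(S)\) is closed. The expectation of an integrable random vector supported in a closed convex set lies in that set, which I would prove by a separating hyperplane argument. Hence \(\nabla f_\eta(x)\in\partial_\eta f(x)\).

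\textbf{Equivalence \eqref{eq:mapping_stationarity_equivalence}.} I would use the variational characterization of the projection onto a closed convex set:
\[
p=\projX(z)\iff p\in\X\ \text{and}\ z-p\in\mathcal N_{\X}(p).
\]
Set \(z=x-a\nabla f_\eta(x)\). Then \(\mathcal G_{\eta,a}(x)=0\) iff \(\projX(z)=x\). Since \(x\in\X\), this holds iff \(-a\nabla f_\eta(x)\in\mathcal N_{\X}(x)\). Because the normal cone is a cone and \(a>0\), this is equivalent to \(-\nabla f_\eta(x)\in\mathcal N_{\X}(x)\), i.e.\ \(0\in\nabla f_\eta(x)+\mathcal N_{\X}(x)\). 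This also confirms the remark that the zero set does not depend on \(a\).

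\textbf{Consequence.} If \(\mathcal G_{\eta,a}(x)=0\), there is \(w\in\mathcal N_{\X}(x)\) with \(0=\nabla f_\eta(x)+w\). By the inclusion just proved, \(\nabla f_\eta(x)\in\partial_\eta f(x)\), so \(0\in\partial_\eta f(x)+\mathcal N_{\X}(x)\), which is \eqref{eq:eta_clarke_stationarity}.
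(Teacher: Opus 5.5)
Your proposal is correct. For the equivalence \eqref{eq:mapping_stationarity_equivalence} and the final consequence, you do exactly what the paper does: you invoke the normal-cone characterization of \(\projX\) at a point of \(\X\) and use that \(\mathcal N_{\X}(x)\) is a cone. The difference is in the inclusion \eqref{eq:smoothing_clarke_inclusion}, for which the paper gives no argument and simply cites the spherical smoothing result of \cite{marrinan2026zo}. You prove it directly instead. Rademacher's theorem and dominated convergence give the ball-average representation \(\nabla f_\eta(x)=\E_u[\nabla f(x+\eta u)]\), with the integrand in \(S\) almost surely. Compactness of \(S\) (from the closed graph and local boundedness of \(\partial f\)), Carath\'eodory, and a separating-hyperplane argument then place this average in \(\operatorname{conv}(S)=\partial_\eta f(x)\). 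The citation keeps the paper short and fits its reliance on the two-point spherical formula \eqref{eq:smoothing_gradient_identity}, which does not by itself display \(\nabla f_\eta(x)\) as an average of Clarke subgradients. Your route makes that structure explicit and uses only the local Lipschitz hypothesis actually stated in the proposition, together with \(\eta<\eta_0\); this gives room to differentiate under the integral and ensures the whole ball \(x+\eta\mathbb B\) lies in \(\mathcal U\). It also explains why the plain convex hull in \eqref{eq:eta_clarke_subdifferential} suffices, even though an expectation a priori lands only in the closed convex hull.
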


The inclusion in \eqref{eq:smoothing_clarke_inclusion} follows from the
spherical smoothing result in \cite{marrinan2026zo} and
\eqref{eq:mapping_stationarity_equivalence} is the projection optimality
condition for a closed convex set.

\begin{remark}[Role of \(\eta\)]
\label{rem:eta_stationarity}
\normalfont
The parameter \(\eta\) in Definition~\ref{def:clarke_stationarity} specifies
the neighborhood used to enlarge the Clarke subdifferential. For a prescribed
horizon \(T\), let \(R\) be independent of the algorithmic randomness and
uniformly distributed on \(\{0,\ldots,T-1\}\). Our finite-time criterion is
\begin{align}
    \E[\|\mathcal G_{\eta,a}(x_R)\|^2],
\end{align}
which measures projected stationarity of the smoothed constrained problem at
the randomized output.
\end{remark}

\section{Federated Zeroth-Order Algorithms}
\label{sec:algorithms}

This section first defines the direction-aggregation recursion shared by
FedZOO and FedVRZO. It then presents the estimator used by each method.
FedZOO reuses one data minibatch across multiple spherical directions.
FedVRZO draws independent sample--direction pairs. Both methods use the same
projected client trajectory and server update.
The parameter sequences satisfy \(0<\eta_t<\eta_0\),
\(\gamma_t,\alpha_t>0\), and positive integer sampling budgets.

\subsection{Direction-Aggregation Strategy}

At communication round \(t\), the server broadcasts \(x_t\in\X\) to all
clients. Let \(K\geq1\) be the number of local direction evaluations and let
\(\gamma_t>0\) be the local stepsize. Client \(i\) initializes
\(x_{i,t,0}=x_t\), evaluates a stochastic zeroth-order direction \(g_{i,t,k}\) at each
local query point, and constructs
\begin{subequations}
\begin{align}
x_{i,t,k+1}
&=\projX\bigl(x_{i,t,k}-\gamma_tg_{i,t,k}\bigr),
\quad 0\leq k<K-1,
\label{eq:local_projected_update}\\
d_{i,t}
&=\frac{1}{K}\sum_{k=0}^{K-1}g_{i,t,k}.
\label{eq:client_average_direction}
\end{align}
\end{subequations}
The vector \(d_{i,t}\) is the arithmetic mean of the \(K\) ZO directions
evaluated along client \(i\)'s local trajectory and collects them in one
\(n\)-dimensional message. For \(K>1\), the component directions are evaluated
at different local points; after the sampling error is separated, their mean
may differ from \(\nabla f_{i,\eta_t}(x_t)\), which produces local drift.

The server aggregates these vectors using the objective weights
\(\{p_i\}_{i=1}^{M}\):
\begin{equation}
    d_t=\sum_{i=1}^{M}p_i d_{i,t}
\label{eq:server_aggregated_direction}
\end{equation}
It then updates the global iterate according to
\begin{equation}
    x_{t+1}=\projX\bigl(x_t-\alpha_t d_t\bigr).
\label{eq:federated_server_update}
\end{equation}
The local stepsize \(\gamma_t\) sets the spacing of the client query points.
The server stepsize \(\alpha_t\) scales the aggregated direction. The clients
collect directions along their projected trajectories, and the server applies
one projection to their weighted aggregate. 

\subsection{FedZOO}

FedZOO constructs each local direction from a data minibatch and several
spherical directions. At query point \(x_{i,t,k}\), client \(i\) draws
\begin{equation}
\mathcal B_{i,t,k}:=\{\xi_{i,t,k}^{b}\}_{b=1}^{B_t},\qquad
\mathcal V_{i,t,k}:=\{v_{i,t,k}^{j}\}_{j=1}^{Q_t},
\label{eq:fedzoo_local_batches}
\end{equation}
where the \(B_t\) samples follow \(\mathcal D_i\). The \(Q_t\) directions are
independent and uniform on \(\mathbb S\). The minibatch defines the
sample-average function
\begin{equation}
    \widehat F_{i,t,k}(x)
    =\frac{1}{B_t}\sum_{b=1}^{B_t}F_i(x,\xi_{i,t,k}^{b}).
\label{eq:fedzoo_batch_function}
\end{equation}
Using this same minibatch for every direction, the gradient estimator is given by
\begin{equation}
\begin{aligned}
g_{i,t,k}=&\frac{n}{2\eta_tQ_t}\sum_{j=1}^{Q_t}
\bigl[\widehat F_{i,t,k}(x_{i,t,k}+\eta_tv_{i,t,k}^{j})\\
&\hspace{23mm}
-\widehat F_{i,t,k}(x_{i,t,k}-\eta_tv_{i,t,k}^{j})\bigr]
v_{i,t,k}^{j}.
\end{aligned}
\label{eq:fedzoo_local_direction}
\end{equation}
The two evaluations associated with one direction use the same data samples,
and the minibatch is reused across all \(Q_t\) directions. Taking expectation
over the directions first gives the gradient of the ball-smoothed version of
\(\widehat F_{i,t,k}\). Meanwhile, taking expectation over the minibatch gives
\(\E[g_{i,t,k}\mid\mathcal F_{t,k}]
=\nabla f_{i,\eta_t}(x_{i,t,k})\). The two sampling stages produce a
data-sampling error controlled by \(B_t\) and a direction-sampling error
controlled by \(Q_t\). Thus, one local direction requires \(2B_tQ_t\)
sample-function evaluations.

\begin{algorithm}[t]
\caption{FedZOO}
\label{alg:fedzoo}
\begin{algorithmic}[1]
\REQUIRE \(x_0\in\X\), rounds \(T\), local evaluations \(K\), weights
\(\{p_i\}_{i=1}^{M}\), and sequences
\(\{\eta_t,\gamma_t,\alpha_t,B_t,Q_t\}\)
\FOR{\(t=0,\ldots,T-1\)}
    \STATE Server broadcasts \(x_t,\eta_t,\gamma_t,B_t,Q_t\).
    \FOR{each client \(i=1,\ldots,M\) in parallel}
        \STATE Set \(x_{i,t,0}=x_t\) and \(d_{i,t}=0\).
        \FOR{\(k=0,\ldots,K-1\)}
            \STATE Draw \(\mathcal B_{i,t,k}\) and
            \(\mathcal V_{i,t,k}\).
            \STATE Compute \(g_{i,t,k}\) from
            \eqref{eq:fedzoo_local_direction} and set
            \(d_{i,t}\leftarrow d_{i,t}+g_{i,t,k}/K\).
            \IF{\(k<K-1\)}
                \STATE Set \(x_{i,t,k+1}\leftarrow
                \projX(x_{i,t,k}-\gamma_tg_{i,t,k})\).
            \ENDIF
        \ENDFOR
        \STATE Upload \(d_{i,t}\) to the server.
    \ENDFOR
    \STATE Server sets \(d_t\leftarrow\sum_{i=1}^{M}p_id_{i,t}\) and
    \(x_{t+1}\leftarrow\projX(x_t-\alpha_td_t)\).
\ENDFOR
    \STATE Draw \(R\) uniformly from \(\{0,\ldots,T-1\}\), independently of
    all algorithmic randomness.
\ENSURE \(x_R\)
\end{algorithmic}
\end{algorithm}

\subsection{FedVRZO}

FedVRZO uses the same client trajectory and server update. Its local estimator
uses independent sample--direction pairs. At query point \(x_{i,t,k}\), client
\(i\) draws \(N_t\) independent sample--direction pairs:
\begin{equation}
\mathcal W_{i,t,k}
:=\{(\xi_{i,t,k}^{j},v_{i,t,k}^{j})\}_{j=1}^{N_t}.
\label{eq:fedvrzo_local_batch}
\end{equation}
Each sample--direction pair produces the elementary two-point estimate in
\eqref{eq:elementary_zo_estimator}. The client forms their sample mean:
\begin{equation}
    g_{i,t,k}:=\frac{1}{N_t}\sum_{j=1}^{N_t}
    \widehat g_{i,\eta_t}
    \bigl(x_{i,t,k};\xi_{i,t,k}^{j},v_{i,t,k}^{j}\bigr).
\label{eq:fedvrzo_local_direction}
\end{equation}
For each sample--direction pair, the positive and negative evaluations use the
same sample.
Conditional on the current history, the mean of \(g_{i,t,k}\) is
\(\nabla f_{i,\eta_t}(x_{i,t,k})\). The second moment of sampling error has a
\(1/N_t\)-scaled bound. Therefore, one local direction requires \(2N_t\)
sample-function evaluations.

\begin{algorithm}[t]
\caption{FedVRZO}
\label{alg:fedvrzo}
\begin{algorithmic}[1]
\REQUIRE \(x_0\in\X\), rounds \(T\), local evaluations \(K\), weights
\(\{p_i\}_{i=1}^{M}\), and sequences
\(\{\eta_t,\gamma_t,\alpha_t,N_t\}\)
\FOR{\(t=0,\ldots,T-1\)}
    \STATE Server broadcasts \(x_t,\eta_t,\gamma_t,N_t\).
    \FOR{each client \(i=1,\ldots,M\) in parallel}
        \STATE Set \(x_{i,t,0}=x_t\) and \(d_{i,t}=0\).
        \FOR{\(k=0,\ldots,K-1\)}
            \STATE Draw the sample--direction pairs \(\mathcal W_{i,t,k}\) from
            \eqref{eq:fedvrzo_local_batch}.
            \STATE Compute \(g_{i,t,k}\) from
            \eqref{eq:fedvrzo_local_direction} and set
            \(d_{i,t}\leftarrow d_{i,t}+g_{i,t,k}/K\).
            \IF{\(k<K-1\)}
                \STATE Set \(x_{i,t,k+1}\leftarrow
                \projX(x_{i,t,k}-\gamma_tg_{i,t,k})\).
            \ENDIF
        \ENDFOR
        \STATE Upload \(d_{i,t}\) to the server.
    \ENDFOR
    \STATE Server sets \(d_t\leftarrow\sum_{i=1}^{M}p_id_{i,t}\) and
    \(x_{t+1}\leftarrow\projX(x_t-\alpha_td_t)\).
\ENDFOR
    \STATE Draw \(R\) uniformly from \(\{0,\ldots,T-1\}\), independently of
    all algorithmic randomness.
\ENSURE \(x_R\)
\end{algorithmic}
\end{algorithm}

Both algorithms return the randomized server iterate \(x_R\). Independence
and uniform sampling of \(R\) imply that, for every nonnegative measurable
function \(\phi\),
\begin{equation}
    \E[\phi(x_R)]
    =\frac1T\sum_{t=0}^{T-1}\E[\phi(x_t)].
\label{eq:random_output_identity}
\end{equation}
Thus, bounds obtained by averaging the descent relation over the server
trajectory are guarantees for the stated algorithmic output.

%

\section{Convergence Analysis}
\label{sec:convergence}

This section establishes the convergence guarantees for FedZOO and FedVRZO.
We first introduce the common conditions and analytical tools used for both
methods. The sampling errors and convergence properties of FedZOO and
FedVRZO are then analyzed separately, followed by a comparison under a fixed
function-query budget.
\subsection{Common Conditions}

For \(t\geq0\) and \(k\in\{0,\ldots,K-1\}\), let
\(\mathcal F_{t,k}\) denote the information available immediately before the
random samples and directions at local step \(k\) are generated. In particular,
\(x_t\) and every \(x_{i,t,k}\) are \(\mathcal F_{t,k}\)-measurable. For
FedZOO, once the data batches at that step have been drawn, the enlarged
filtration is defined as
\begin{equation}
    \mathfrak H_{t,k}
    :=\mathcal F_{t,k}\vee
    \sigma\!\left(\{\mathcal B_{i,t,k}\}_{i=1}^{M}\right).
\label{eq:within_step_filtration}
\end{equation}
We use \(\mathcal F_{t,K}\) for the information available after the final
local direction evaluation. The FedZOO direction sets, the FedVRZO
sample--direction pairs, the resulting estimates, and each local projection
are included in
\(\mathcal F_{t,k+1}\). The history \(\mathcal F_{t+1,0}\) also contains the
server update at round \(t\).

The following two assumptions specify the regularity of the sample functions
and the conditional sampling structure that governs the analysis.

\begin{assumption}[Local Lipschitz continuity]
\label{ass:regularity}
\normalfont
For each client \(i\), the function \(F_i(\cdot,\xi_i)\) is Lipschitz
continuous on \(\X_{\eta_0}\) almost surely. Specifically, there exists a
nonnegative random variable \(\widetilde L_i(\xi_i)\) such that
\begin{equation}
    |F_i(x,\xi_i)-F_i(y,\xi_i)|
    \leq \widetilde L_i(\xi_i)\|x-y\|,
    \qquad x,y\in\X_{\eta_0},
\label{eq:sample_lipschitz}
\end{equation}
and
\begin{equation}
    L_i^2:=\E[\widetilde L_i(\xi_i)^2]<\infty.
\label{eq:client_lipschitz_moment}
\end{equation}
There exists \(x^\circ\in\X_{\eta_0}\) such that
\(\E[|F_i(x^\circ,\xi_i)|]<\infty\) for every \(i\). In addition, the global
objective is bounded below on \(\X_{\eta_0}\):
\begin{equation}
    f_{\inf}:=\inf_{x\in\X_{\eta_0}}f(x)>-\infty.
\label{eq:global_lower_bound}
\end{equation}
\end{assumption}

\begin{assumption}[Conditional sampling]
\label{ass:conditional_sampling}
\normalfont
At each local step \((t,k)\), the random collections generated by distinct
clients are conditionally independent given \(\mathcal F_{t,k}\). For
FedZOO, \(\mathcal B_{i,t,k}\) consists of \(B_t\) independent samples drawn
from \(\mathcal D_i\), and \(\mathcal V_{i,t,k}\) consists of \(Q_t\)
independent directions uniformly distributed on \(\mathbb S\). These two
collections are conditionally independent given \(\mathcal F_{t,k}\). For
FedVRZO, \(\mathcal W_{i,t,k}\) consists of \(N_t\) conditionally independent
sample--direction pairs \((\xi_{i,t,k}^{j},v_{i,t,k}^{j})\). Each sample is
drawn from \(\mathcal D_i\), each direction is uniform on \(\mathbb S\), and
the two components of each sample--direction pair are conditionally independent given
\(\mathcal F_{t,k}\).
\end{assumption}

Assumption~\ref{ass:regularity} allows the Lipschitz constant to depend on
both the client and the sampled data, without imposing differentiability on
\(F_i(\cdot,\xi_i)\). Random Lipschitz bounds and independent sample--direction pairs
are standard in nonsmooth stochastic zeroth-order analysis
\cite{marrinan2026zo}, while independent sampling of data and directions also
appears in federated zeroth-order analysis \cite{chen2023finegrained}.
Assumption~\ref{ass:conditional_sampling} ensures conditional centering of the
sampling errors and eliminates cross-client and cross-step covariance terms.

The aggregate and maximum Lipschitz constants, together with their
smoothing-dependent counterparts, are denoted by
\begin{equation}
\begin{gathered}
    \bar L:=\sum_{i=1}^{M}p_iL_i,\qquad
    L_{\max}:=\max_{1\leq i\leq M}L_i,\\
    L_{\eta}:=\frac{\sqrt n\,\bar L}{\eta},\qquad
    L_{\max,\eta}:=\frac{\sqrt n\,L_{\max}}{\eta}.
\end{gathered}
\label{eq:smoothing_constants}
\end{equation}
We quantify the client heterogeneity relevant to the convergence analysis
through the weighted variation among the smoothed local gradients:
\begin{equation}
    \zeta_\eta^2
    :=
    \sup_{x\in\X}
    \sum_{i=1}^{M}p_i
    \|\nabla f_{i,\eta}(x)-\nabla f_\eta(x)\|^2.
\label{eq:heterogeneity_measure}
\end{equation}
It vanishes when the clients have identical smoothed gradients, but may vary
with the smoothing radius.

\subsection{Auxiliary Results}

Three auxiliary results form the basis of both convergence analyses. They
establish regularity of the smoothed objectives, control the displacement
generated by projected local steps, and relate an inexact aggregate direction
to descent of the smoothed global objective.

\begin{lemma}[Smoothing regularity]
\label{lem:smoothing_regularities}
\normalfont
Suppose that the sample functions satisfy
Assumption~\ref{ass:regularity}. Then, for every \(0<\eta<\eta_0\),
\(i\in\{1,\ldots,M\}\), and \(x,y\in\X\),
\begin{subequations}
\begin{align}
\|\nabla f_{i,\eta}(x)\|&\leq L_i,
\label{eq:local_smoothed_gradient_bound}\\
\|\nabla f_{i,\eta}(x)-\nabla f_{i,\eta}(y)\|
&\leq\frac{\sqrt n\,L_i}{\eta}\|x-y\|,
\label{eq:local_smoothed_smoothness}\\
\|\nabla f_\eta(x)-\nabla f_\eta(y)\|
&\leq L_\eta\|x-y\|.
\label{eq:global_smoothed_smoothness}
\end{align}
\end{subequations}
The smoothed global objective satisfies \(f_\eta(x)\geq f_{\inf}\), and
\begin{equation}
0\leq\zeta_\eta^2\leq\sum_{i=1}^{M}p_iL_i^2<\infty.
\label{eq:heterogeneity_finite}
\end{equation}
\end{lemma}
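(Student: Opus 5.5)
The plan is to first establish that each \(f_i\) is \(L_i\)-Lipschitz on \(\X_{\eta_0}\) and finite there. Finiteness comes from \(\E|F_i(x^\circ,\xi_i)|<\infty\) together with \eqref{eq:sample_lipschitz}, which give \(\E|F_i(x,\xi_i)|\leq\E|F_i(x^\circ,\xi_i)|+\E[\widetilde L_i]\|x-x^\circ\|<\infty\); here \(\E[\widetilde L_i]\leq L_i\) by Jensen. Taking expectations in \eqref{eq:sample_lipschitz} then gives \(|f_i(x)-f_i(y)|\leq L_i\|x-y\|\) on \(\X_{\eta_0}\). For \(x\in\X\) and \(\eta<\eta_0\), every point \(x+\eta u\) with \(u\in\mathbb B\) lies in \(\X_{\eta_0}\), so \(f_{i,\eta}\) is well defined near \(\X\). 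I will use the representation \eqref{eq:smoothing_gradient_identity}, which the paper states holds under Assumption~\ref{ass:regularity}.

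For \eqref{eq:local_smoothed_gradient_bound}, I would argue through directional derivatives rather than the estimator. For any unit \(w\), \(|f_{i,\eta}(x+sw)-f_{i,\eta}(x)|\leq \E_u|f_i(x+sw+\eta u)-f_i(x+\eta u)|\leq L_i s\) for small \(s\). The perturbed points stay in \(\X_{\eta_0}\) once \(s\) is small, which I would handle by working with a slightly smaller radius or simply on the open neighborhood. Hence \(|\langle\nabla f_{i,\eta}(x),w\rangle|\leq L_i\), and taking \(w\) along the gradient gives \(\|\nabla f_{i,\eta}(x)\|\leq L_i\). Alternatively, one can use the divergence-theorem form \(\nabla f_{i,\eta}(x)=\frac{n}{\eta}\E_v[f_i(x+\eta v)v]\) together with the a.e. gradient \(\nabla f_{i,\eta}(x)=\E_u[\nabla f_i(x+\eta u)]\) from Rademacher's theorem, which gives the same bound.

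For \eqref{eq:local_smoothed_smoothness}, I would use \(\nabla f_{i,\eta}(x)=\frac{n}{\eta}\E_v[f_i(x+\eta v)v]\), which is equivalent to \eqref{eq:smoothing_gradient_identity} by the symmetry \(v\mapsto -v\). The difference is then \(\frac{n}{\eta}\E_v[(f_i(x+\eta v)-f_i(y+\eta v))v]\). The key step, and the main obstacle, is to avoid the crude bound \(\frac{n}{\eta}L_i\|x-y\|\) and obtain \(\sqrt n\). To do this, I would test against a unit vector \(w\) and apply Cauchy--Schwarz:
\[
\Bigl|\Bigl\langle \tfrac{n}{\eta}\E_v[h(v)v],w\Bigr\rangle\Bigr|
\leq\tfrac{n}{\eta}\sqrt{\E_v[h(v)^2]}\sqrt{\E_v[\langle v,w\rangle^2]}
\leq \tfrac{n}{\eta}L_i\|x-y\|\cdot\tfrac{1}{\sqrt n},
\]
where \(h(v)=f_i(x+\eta v)-f_i(y+\eta v)\) satisfies \(|h|\leq L_i\|x-y\|\), and \(\E\langle v,w\rangle^2=1/n\) for \(v\) uniform on \(\mathbb S\). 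This yields \(\frac{\sqrt n L_i}{\eta}\|x-y\|\). For points \(x,y\in\X\) the shifted points stay in \(\X_{\eta_0}\), so the Lipschitz bound applies.

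Next, \eqref{eq:global_smoothed_smoothness} follows by linearity \(\nabla f_\eta=\sum_ip_i\nabla f_{i,\eta}\) and the triangle inequality, which give the constant \(\sum_ip_i\sqrt nL_i/\eta=L_\eta\). The lower bound is immediate, since \(f_\eta(x)=\E_u[f(x+\eta u)]\) and \(x+\eta u\in\X_{\eta_0}\), where \(f\geq f_{\inf}\). Finally, for \eqref{eq:heterogeneity_finite}, the weighted variance identity gives \(\sum_ip_i\|\nabla f_{i,\eta}-\nabla f_\eta\|^2=\sum_ip_i\|\nabla f_{i,\eta}\|^2-\|\nabla f_\eta\|^2\leq\sum_ip_iL_i^2\). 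This bound is uniform in \(x\), so the supremum is finite. Nonnegativity is trivial.
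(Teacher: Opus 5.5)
Your proposal is correct and follows essentially the same route as the paper. The ingredients match: integrability and \(L_i\)-Lipschitz continuity of \(f_i\) on \(\X_{\eta_0}\); a directional-derivative argument on a small neighborhood for \(\|\nabla f_{i,\eta}(x)\|\leq L_i\); testing the gradient difference against a unit vector and using \(\E_v[\langle v,z\rangle^2]=1/n\) to get the \(\sqrt n\) factor; then the triangle inequality, the pointwise lower bound, and the weighted variance identity. The only difference is cosmetic: you use the one-point form \(\frac{n}{\eta}\E_v[f_i(x+\eta v)v]\) with a direct Cauchy--Schwarz, whereas the paper keeps the symmetric form, bounds \(|\Theta_i(v)|\leq 2L_i\|x-y\|\), and uses \(\E_v|\langle v,z\rangle|\leq\sqrt{1/n}\), which gives the same constant.
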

\begin{proof}
	See Appendix~\ref{app:common_tools}. 
\end{proof}

The remaining auxiliary bounds involve the constants
\begin{equation}
\begin{gathered}
C_{\mathbb S}:=16\sqrt{2\pi},\qquad
C:=2\exp(2\tau_{\mathrm l}^2),\quad \tau_{\mathrm l}\in(0,1),\\
\rho_t:=\gamma_tK L_{\max,\eta},\qquad
\chi_K:=
\begin{cases}
0,&K=1,\\
1,&K\geq2,
\end{cases}
\end{gathered}
\label{eq:common_analysis_constants}
\end{equation}
where \(C_{\mathbb S}\) bounds the second moment of the spherical estimator,
\(C\) is determined by the threshold \(\tau_{\mathrm l}\), \(\rho_t\) is the
product of the cumulative local stepsize \(\gamma_tK\) and the largest local
smoothness constant, and \(\chi_K\) distinguishes single-step and multi-step
local updates. Equivalently, a prescribed \(\rho_t\) corresponds to
\(\gamma_t=\rho_t/(K L_{\max,\eta})\). The client-weighted Lipschitz moments are
denoted by
\begin{equation}
\mathfrak L_1:=\sum_{i=1}^{M}p_iL_i^2,\qquad
\mathfrak L_2:=\sum_{i=1}^{M}p_i^2L_i^2.
\label{eq:common_lipschitz_scales}
\end{equation}
If \(\mathfrak L_1=0\), every local objective is constant on
\(\X_{\eta_0}\), and the stationarity conclusions are immediate. We therefore
consider \(\mathfrak L_1>0\) and set
\begin{equation}
\varkappa:=\frac{\mathfrak L_2}{\mathfrak L_1}.
\label{eq:common_weight_ratio}
\end{equation}
Since \(p_i^2L_i^2\leq p_{\max}p_iL_i^2\), where
\(p_{\max}:=\max_i p_i\), we have
\(0<\varkappa\leq p_{\max}\leq1\).


To state the local-trajectory bound, fix a round \(t\) and a smoothing radius
\(\eta\). Let \(z_t\in\X\) be \(\mathcal F_{t,0}\)-measurable, and let each
adapted trajectory \(\{z_{i,t,k}\}_{k=0}^{K-1}\subseteq\X\) start from
\(z_{i,t,0}=z_t\), and let \(\omega_{i,t}^2\) be nonnegative and
\(\mathcal F_{t,0}\)-measurable. The projected recursion, its estimation
error, and the quantities entering the bound are
\begin{subequations}
\begin{align}
z_{i,t,k+1}
&=\projX(z_{i,t,k}-\gamma_t\widehat g_{i,t,k}),
\quad 0\leq k<K-1,
\label{eq:common_local_recursion}\\
e_{i,t,k}
&:=\widehat g_{i,t,k}-\nabla f_{i,\eta}(z_{i,t,k}),
\label{eq:common_estimation_error}\\
h_t&:=\nabla f_\eta(z_t),\qquad
\omega_t^2:=\sum_{i=1}^{M}p_i\omega_{i,t}^2,
\label{eq:common_direction_variance_definition}\\
\mathfrak d_t
&:=\frac{1}{K}\sum_{i=1}^{M}p_i\sum_{k=0}^{K-1}
[\nabla f_{i,\eta}(z_{i,t,k})-\nabla f_{i,\eta}(z_t)].
\label{eq:common_drift_definition}
\end{align}
\end{subequations}

\begin{lemma}[Projected local-trajectory bound]
\label{lem:common_local_trajectory}
\normalfont
Suppose that the preceding quantities are well defined and satisfy
\begin{align}
\E[e_{i,t,k}\mid\mathcal F_{t,k}]&=0,\nonumber\\
\E[\|e_{i,t,k}\|^2\mid\mathcal F_{t,k}]&\leq\omega_{i,t}^2.
\label{eq:common_estimator_conditions}
\end{align}
If \(\chi_K\rho_t\leq\tau_{\mathrm l}\), then
\begin{subequations}
\begin{align}
&\frac{1}{K}\sum_{i=1}^{M}p_i\sum_{k=0}^{K-1}
\E[\|z_{i,t,k}-z_t\|^2\mid\mathcal F_{t,0}]
\nonumber\\
&\quad\leq \chi_KC\gamma_t^2K^2
(\|h_t\|^2+\zeta_\eta^2+\omega_t^2),
\label{eq:common_trajectory_bound}\\
&\E[\|\mathfrak d_t\|^2\mid\mathcal F_{t,0}]
\leq \chi_KC\rho_t^2
(\|h_t\|^2+\zeta_\eta^2+\omega_t^2).
\label{eq:common_drift_bound}
\end{align}
\end{subequations}
\end{lemma}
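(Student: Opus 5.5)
When \(K=1\) both sides vanish, because \(z_{i,t,0}=z_t\) and \(\mathfrak d_t=0\). We therefore assume \(K\geq2\) and \(\rho_t\leq\tau_{\mathrm l}\). The plan is to derive a one-step recursion for the local deviation \(\Delta_{i,k}:=\E[\|z_{i,t,k}-z_t\|^2\mid\mathcal F_{t,0}]\), unroll it over at most \(K-1\) steps, and then obtain the drift bound from the local smoothness estimate \eqref{eq:local_smoothed_smoothness}.

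\textbf{One-step recursion.} Since \(z_t\in\X\), nonexpansiveness of \(\projX\) gives
\[
\|z_{i,t,k+1}-z_t\|\leq\|z_{i,t,k}-z_t-\gamma_t\nabla f_{i,\eta}(z_{i,t,k})-\gamma_te_{i,t,k}\|.
\]
We square this and take the expectation conditional on \(\mathcal F_{t,k}\). The cross term with \(e_{i,t,k}\) vanishes by the centering condition in \eqref{eq:common_estimator_conditions}, and the noise term contributes at most \(\gamma_t^2\omega_{i,t}^2\). For the deterministic part we use Young's inequality with parameter \(1/K\):
\[
\|a-\gamma_tb\|^2\leq(1+\tfrac1K)\|a\|^2+(1+K)\gamma_t^2\|b\|^2 .
\]
Next we split
\[
\nabla f_{i,\eta}(z_{i,t,k})=[\nabla f_{i,\eta}(z_{i,t,k})-\nabla f_{i,\eta}(z_t)]+[\nabla f_{i,\eta}(z_t)-h_t]+h_t .
\]
The first bracket is bounded using \eqref{eq:local_smoothed_smoothness} with \(L_{\max,\eta}\), which yields the quantity \(3L_{\max,\eta}^2\|z_{i,t,k}-z_t\|^2\). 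This leads to
\[
\Delta_{i,k+1}\leq\Bigl(1+\tfrac1K+6K\gamma_t^2L_{\max,\eta}^2\Bigr)\Delta_{i,k}+6K\gamma_t^2\bigl(\|\nabla f_{i,\eta}(z_t)-h_t\|^2+\|h_t\|^2\bigr)+\gamma_t^2\omega_{i,t}^2 .
\]
Because \(K\gamma_t^2L_{\max,\eta}^2=\rho_t^2/K\leq\tau_{\mathrm l}^2/K\), the contraction factor is at most \(1+(1+6\tau_{\mathrm l}^2)/K\).

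\textbf{Unrolling and averaging.} Starting from \(\Delta_{i,0}=0\) and running at most \(K-1\) steps, the product of the factors is bounded by \(\exp(1+6\tau_{\mathrm l}^2)\). This constant has the wrong form, since the target uses \(C=2\exp(2\tau_{\mathrm l}^2)\). To match it I will tune the split: use Young's inequality with \(1+1/(2K)\) and group the terms differently, so that the exponent becomes \(1/2+O(\tau_{\mathrm l}^2)\). I expect this bookkeeping to be the main obstacle. Matching the exact constant \(2\exp(2\tau_{\mathrm l}^2)\) may require a cleaner route. One such route is to first write
\[
\|z_{i,t,k}-z_t\|\leq\gamma_t\Bigl\|\sum_{s<k}\widehat g_{i,t,s}\Bigr\|
\]
by iterating the nonexpansive projection, which is valid because each step displaces the point by at most \(\gamma_t\|\widehat g\|\). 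Then bound the martingale sum and the bias sum separately, and close the argument by a Grönwall-type step on \(\max_k\Delta_{i,k}\).

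With either route, the bound has the form \(\Delta_{i,k}\leq C'\gamma_t^2K^2(\|\nabla f_{i,\eta}(z_t)-h_t\|^2+\|h_t\|^2+\omega_{i,t}^2/K)\). Averaging with the weights \(p_i\), using the definition of \(\zeta_\eta^2\) in \eqref{eq:heterogeneity_measure} (valid since \(z_t\in\X\)), and using \(\omega_{i,t}^2/K\leq\omega_{i,t}^2\), gives \eqref{eq:common_trajectory_bound}.

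\textbf{Drift bound.} For \eqref{eq:common_drift_bound}, Jensen's inequality over the convex combination with weights \(p_i/K\), followed by \eqref{eq:local_smoothed_smoothness}, gives
\[
\|\mathfrak d_t\|^2\leq\frac1K\sum_ip_i\sum_kL_{\max,\eta}^2\|z_{i,t,k}-z_t\|^2 .
\]
Taking the conditional expectation and inserting \eqref{eq:common_trajectory_bound} yields the factor \(C\gamma_t^2K^2L_{\max,\eta}^2=C\rho_t^2\), which is the stated bound.
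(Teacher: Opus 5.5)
Your $K=1$ case, the one-step recursion for $\Delta_{i,k}$, and the drift step are correct. However, the argument does not prove the inequality as stated. Unrolling your recursion gives $\Delta_{i,k}\leq 6e^{1+6\tau_{\mathrm l}^2}\gamma_t^2K^2(\|\nabla f_{i,\eta}(z_t)-h_t\|^2+\|h_t\|^2+\omega_{i,t}^2/(6K))$. That is \eqref{eq:common_trajectory_bound} with $6e^{1+6\tau_{\mathrm l}^2}$ in place of $C=2e^{2\tau_{\mathrm l}^2}$, a loss of the factor $3e^{1+4\tau_{\mathrm l}^2}$. The lemma fixes $C$, and $C$ propagates into $\mathcal E_t$ and $\widehat{\mathcal E}_t$, so concluding with an unspecified $C'$ is a gap rather than bookkeeping.

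Neither repair you propose closes it. With a fixed Young parameter $\beta=x/K$ and your unrolling $\sum_{j<k}q^{j}\leq kq^{k-1}$, the coefficient of $\gamma_t^2K^2$ in front of the gradient term is, for large $K$, about $e^{x}/x\geq e$. This already exceeds the value $2e^{2\tau_{\mathrm l}^2}\approx 2$ required for small $\tau_{\mathrm l}$. Your choice $\beta=1/(2K)$ makes it worse (about $2\sqrt e$), because halving the exponent doubles the companion factor $1+1/\beta$.

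The alternative route rests on a false inequality. For projected steps, $\|z_{i,t,k}-z_t\|\leq\gamma_t\|\sum_{s<k}\widehat g_{i,t,s}\|$ fails. Take $\X=[0,\infty)$, $z_t=0$, $\widehat g_{i,t,0}=1$, $\widehat g_{i,t,1}=-1$: then $z_{i,t,1}=0$ and $z_{i,t,2}=\gamma_t$, although the sum vanishes. Nonexpansiveness only yields the sum of norms $\gamma_t\sum_{s<k}\|\widehat g_{i,t,s}\|$, so there is no martingale sum to split off.

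The missing idea is that no martingale cancellation is needed, since the target carries $\omega_t^2$ rather than $\omega_t^2/K$. Apply Cauchy--Schwarz to the sum of increment norms, $\|z_{i,t,k}-z_t\|^2\leq\gamma_t^2k\sum_{j<k}\|\widehat g_{i,t,j}\|^2$. This produces the factor $k^2$ exactly, with no $e^{1}$ loss. Centering then gives
$\E[\|\widehat g_{i,t,j}\|^2\mid\mathcal F_{t,j}]\leq\|\nabla f_{i,\eta}(z_{i,t,j})\|^2+\omega_{i,t}^2\leq 2\|\nabla f_{i,\eta}(z_t)\|^2+\omega_{i,t}^2+2L_{i,\eta}^2\|z_{i,t,j}-z_t\|^2$, with $L_{i,\eta}=\sqrt n L_i/\eta$. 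Hence $\Delta_{i,k}\leq A_{i,t}+c_{i,t}\sum_{j<k}\Delta_{i,j}$, where $A_{i,t}=\gamma_t^2K^2(2\|\nabla f_{i,\eta}(z_t)\|^2+\omega_{i,t}^2)$ and $c_{i,t}=2\gamma_t^2KL_{i,\eta}^2$.

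A discrete Gr\"onwall step gives $\Delta_{i,k}\leq A_{i,t}(1+c_{i,t})^{k-1}\leq e^{2\rho_t^2}A_{i,t}\leq e^{2\tau_{\mathrm l}^2}A_{i,t}$. Now average over clients using the exact identity $\sum_i p_i\|\nabla f_{i,\eta}(z_t)\|^2=\|h_t\|^2+\sum_i p_i\|\nabla f_{i,\eta}(z_t)-h_t\|^2\leq\|h_t\|^2+\zeta_\eta^2$, instead of your lossy three-term split. This yields exactly $C=2e^{2\tau_{\mathrm l}^2}$, and your drift step then applies unchanged.
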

\begin{proof}
	See Appendix~\ref{app:common_tools}.
\end{proof}
The server update is governed by the following deterministic descent
inequality.

\begin{lemma}[Inexact projected descent]
\label{lem:inexact_projected_descent}
\normalfont
Let \(x\in\X\), \(\widehat d\in\mathbb R^n\), and
\(x^+=\projX(x-a\widehat d)\), where \(0<a\leq1/L_\eta\). Then
\begin{equation}
f_\eta(x^+)\leq f_\eta(x)
-\frac{a}{2}\|\mathcal G_{\eta,a}(x)\|^2
+\frac{a}{2}\|\widehat d-\nabla f_\eta(x)\|^2.
\label{eq:inexact_projected_descent}
\end{equation}
\end{lemma}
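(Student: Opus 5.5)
We prove the bound by combining the descent lemma for the $L_\eta$-smooth function $f_\eta$ with the variational inequality characterizing the projection. By \eqref{eq:global_smoothed_smoothness} in Lemma~\ref{lem:smoothing_regularities}, $\nabla f_\eta$ is $L_\eta$-Lipschitz on $\X$, and both $x$ and $x^+$ lie in $\X$. Hence
\[
f_\eta(x^+)\leq f_\eta(x)+\langle\nabla f_\eta(x),x^+-x\rangle+\frac{L_\eta}{2}\|x^+-x\|^2 .
\]
Since $\X$ is convex, the segment between $x$ and $x^+$ stays in $\X$, so the Lipschitz bound applies along it. Write $\delta:=\widehat d-\nabla f_\eta(x)$ and split $\langle\nabla f_\eta(x),x^+-x\rangle=\langle\widehat d,x^+-x\rangle-\langle\delta,x^+-x\rangle$.

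For the first term we use the projection optimality condition for $x^+=\projX(x-a\widehat d)$, tested at the feasible point $z=x$:
\[
\langle x-a\widehat d-x^+,\,x-x^+\rangle\leq0 .
\]
This gives $\langle\widehat d,x^+-x\rangle\leq-\frac1a\|x^+-x\|^2$. For the error term, Young's inequality gives $-\langle\delta,x^+-x\rangle\leq\frac a2\|\delta\|^2+\frac1{2a}\|x^+-x\|^2$. Combining the three pieces with $aL_\eta\leq1$ yields
\[
f_\eta(x^+)\leq f_\eta(x)-\Bigl(\frac1{2a}-\frac{L_\eta}{2}\Bigr)\|x^+-x\|^2+\frac a2\|\delta\|^2 ,
\]
so the coefficient in front of $\|x^+-x\|^2$ is nonnegative. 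Note, however, that in the regime $a=1/L_\eta$ this coefficient vanishes. So $\|x^+-x\|^2$ alone cannot produce the term $-\frac a2\|\mathcal G_{\eta,a}(x)\|^2$, and the split has to be sharpened.

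The main obstacle is therefore to obtain the exact constant $-\frac a2\|\mathcal G_{\eta,a}(x)\|^2$ while keeping only $\frac a2\|\delta\|^2$. We plan the standard sharper argument. Let $\bar x:=\projX(x-a\nabla f_\eta(x))$, so that $\mathcal G_{\eta,a}(x)=(x-\bar x)/a$. We use two further facts.
\begin{itemize}
\item The three-point projection inequality for $x^+$, tested at $z=\bar x$ instead of $z=x$.
\item The optimality condition for $\bar x$, tested at $z=x^+$.
\end{itemize}
Adding these and expanding the polarization identities $2\langle x^+-x,\,x^+-\bar x\rangle=\|x^+-x\|^2+\|x^+-\bar x\|^2-\|\bar x-x\|^2$ (and its analogue for $\bar x$) produces the combination $-\frac1{2a}\|\bar x-x\|^2-\frac1{2a}\|x^+-\bar x\|^2+\langle\delta,\bar x-x^+\rangle$, possibly together with an extra $-\frac{1}{2a}\|x^+-x\|^2$ that absorbs the smoothness term. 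Young's inequality $\langle\delta,\bar x-x^+\rangle\leq\frac a2\|\delta\|^2+\frac1{2a}\|\bar x-x^+\|^2$ then cancels the $\|x^+-\bar x\|^2$ term exactly. What remains is $-\frac a2\|\mathcal G_{\eta,a}(x)\|^2+\frac a2\|\delta\|^2$, plus $\bigl(\frac{L_\eta}2-\frac1{2a}\bigr)\|x^+-x\|^2\leq0$, which is the claimed inequality. I expect the delicate part to be choosing the test points and tracking the constants so that the $\frac1{2a}$ coefficients cancel exactly. If that bookkeeping fails, a fallback is to use the firm nonexpansiveness bound $\|x^+-\bar x\|\leq a\|\delta\|$ together with $\|\bar x-x\|^2\leq2\|x^+-x\|^2+2\|x^+-\bar x\|^2$, at the cost of worse constants.
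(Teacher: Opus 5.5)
Your structure is the paper's proof, but one test point is wrong, and as literally written the key step fails.

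\textbf{Where your plan matches the paper.} The paper works with $\varphi_\ell(z)=\langle\ell,z-x\rangle+\frac1{2a}\|z-x\|^2$. Its strong-convexity inequality $\varphi_{\widehat d}(\bar x)\ge\varphi_{\widehat d}(x^+)+\frac1{2a}\|\bar x-x^+\|^2$ is exactly your three-point inequality for $x^+$ tested at $\bar x$. Its completing-the-square step is your Young step. Its use of $aL_\eta\le1$ to get $f_\eta(x^+)\le f_\eta(x)+\varphi_h(x^+)$ is your absorption of $\frac{L_\eta}2\|x^+-x\|^2$.

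\textbf{The gap.} The optimality condition for $\bar x$ must be tested at $z=x$, not at $z=x^+$. Testing at $z=x$ gives $\langle\nabla f_\eta(x),\bar x-x\rangle\le-\frac1a\|\bar x-x\|^2$, which is how the paper obtains $\varphi_h(\bar x)\le-\frac a2\|\mathcal G_{\eta,a}(x)\|^2$.

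Suppose instead you add the variational inequality for $x^+$ at $\bar x$ and the one for $\bar x$ at $x^+$. Every term involving $x$ cancels, and you are left with $\|\bar x-x^+\|^2\le a\langle\delta,\bar x-x^+\rangle$. That is just nonexpansiveness, so no $-\frac1{2a}\|\bar x-x\|^2$ can appear. The case $\delta=0$ shows this clearly. Then $x^+=\bar x$, both of your inequalities are vacuous, and the lemma reduces to exact projected-gradient descent, which needs precisely the test at $z=x$. Moreover, the condition for $\bar x$ at $z=x^+$ only gives a \emph{lower} bound on $\langle\nabla f_\eta(x),x^+-\bar x\rangle$. That is the wrong direction for bounding $f_\eta(x^+)$ from above.

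\textbf{The fix.} With $h=\nabla f_\eta(x)$, write
\[
\langle h,x^+-x\rangle=\langle\widehat d,x^+-\bar x\rangle+\langle h,\bar x-x\rangle+\langle\delta,\bar x-x^+\rangle .
\]
Optimality of $x^+$ at $\bar x$ bounds the first term by $\frac1a\langle x^+-x,\bar x-x^+\rangle=\frac1{2a}\bigl(\|\bar x-x\|^2-\|x^+-x\|^2-\|\bar x-x^+\|^2\bigr)$. Optimality of $\bar x$ at $x$ bounds the second term by $-\frac1a\|\bar x-x\|^2$. Together these give exactly the combination you predicted:
\[
-\frac1{2a}\|\bar x-x\|^2-\frac1{2a}\|x^+-x\|^2-\frac1{2a}\|\bar x-x^+\|^2+\langle\delta,\bar x-x^+\rangle .
\]
Your Young step and the $aL_\eta\le1$ absorption then finish the proof with the stated constants.

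\textbf{On the fallback.} It would not prove the lemma as stated. It starts from your first estimate, whose coefficient $\frac1{2a}-\frac{L_\eta}2$ vanishes at $a=1/L_\eta$. That is the stepsize used in the paper's rate corollaries, so there the fallback yields no $\|\mathcal G_{\eta,a}(x)\|^2$ term at all.
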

\begin{proof}
See Appendix~\ref{app:common_tools}.
\end{proof}
\begin{remark}[Descent interpretation]
\label{rem:inexact_descent}
\normalfont
The inequality separates the decrease associated with projected stationarity
from the error in the update direction. When
\(\widehat d=\nabla f_\eta(x)\), the error term vanishes and each nonzero
projected gradient mapping produces a decrease in \(f_\eta\). 
\end{remark}
For a fixed smoothing radius \(\eta\), the initial descent budget used
in both analyses is denoted by
\begin{equation}
    \Delta_\eta:=f_\eta(x_0)-f_{\inf}.
\label{eq:common_initial_gap}
\end{equation}
For \(t=0,\ldots,T-1\), all convergence results below employ the common
parameter regime
\begin{equation}
\begin{aligned}
&\eta_t=\eta,\quad 0<\eta<\eta_0,\qquad
\alpha_t=\alpha,\quad 0<\alpha\leq L_\eta^{-1},\\
&\chi_K\rho_t\leq\tau_{\mathrm l}.
\end{aligned}
\label{eq:common_parameter_regime}
\end{equation}
The last condition is automatic for \(K=1\).
\subsection{Convergence Analysis of FedZOO}

 The batch-conditioned direction is defined as
\begin{equation}
\bar g_{i,t,k}:=\E\!\left[g_{i,t,k}\mid
\mathfrak H_{t,k}\right],
\label{eq:batch_conditioned_direction}
\end{equation}
where the conditional expectation is taken over the direction set drawn at the
current local step. The sampling errors of data and direction in client \(i\) are denoted as
\begin{equation}
\begin{aligned}
\varepsilon_{i,t,k}
&:=\bar g_{i,t,k}-\nabla f_{i,\eta}(x_{i,t,k}),\\
\nu_{i,t,k}&:=g_{i,t,k}-\bar g_{i,t,k}.
\end{aligned}
\label{eq:local_fedzoo_sampling_errors}
\end{equation}
The corresponding client-weighted averages over the \(K\) local query points are given by
\begin{subequations}
\begin{align}
\varepsilon_t
&:=\frac{1}{K}\sum_{i=1}^{M}p_i\sum_{k=0}^{K-1}\varepsilon_{i,t,k},
\label{eq:data_error_definition}\\
\nu_t
&:=\frac{1}{K}\sum_{i=1}^{M}p_i\sum_{k=0}^{K-1}\nu_{i,t,k},
\label{eq:direction_error_definition}\\
\widetilde\varepsilon_t&:=\varepsilon_t+\nu_t.
\label{eq:fedzoo_total_sampling_error}
\end{align}
\end{subequations}
For the algorithmic trajectories, the drift term in
Lemma~\ref{lem:common_local_trajectory} and the smoothed global direction
specialize to
\begin{subequations}
\begin{align}
\mathfrak d_t
&:=\frac{1}{K}\sum_{i=1}^{M}p_i\sum_{k=0}^{K-1}
\bigl[\nabla f_{i,\eta}(x_{i,t,k})
-\nabla f_{i,\eta}(x_t)\bigr],
\label{eq:common_drift_error_definition}\\
h_t&:=\nabla f_{\eta}(x_t).
\label{eq:common_smoothed_global_direction}
\end{align}
\end{subequations}
Consequently, the aggregated FedZOO direction admits the decomposition
\begin{equation}
	d_t=h_t+\mathfrak d_t+\widetilde\varepsilon_t.
	\label{eq:common_direction_decomposition}
\end{equation}
The drift term vanishes for \(K=1\). 
The single-step FedZOO sampling error satisfies the conditional
second-moment bound
\begin{equation}
	\E[\|\varepsilon_{i,t,k}+\nu_{i,t,k}\|^2
	\mid\mathcal F_{t,k}]
	\leq L_i^2
	\left(\frac{1}{B_t}
	+\frac{C_{\mathbb S}n}{Q_t}\right).
	\label{eq:fedzoo_single_step_error_bound}
\end{equation}
The local-trajectory analysis weights these bounds by \(p_i\), whereas
the squared aggregate error involves \(p_i^2\). Accordingly, the local
and aggregate sampling-error scales are defined as
\begin{subequations}
	\begin{align}
		\sigma_{\mathrm l,t}^{2}
		&:=\mathfrak L_1\left(\frac{1}{B_t}
		+\frac{C_{\mathbb S}n}{Q_t}\right),
        \label{eq:fedzoo_local_variance_scale}\\
		\sigma_{\mathrm g,t}^{2}
		&:=\mathfrak L_2\left(\frac{1}{B_t}
		+\frac{C_{\mathbb S}n}{Q_t}\right).
        \label{eq:fedzoo_aggregate_variance_scale}
	\end{align}
\end{subequations}
The scale \(\sigma_{\mathrm l,t}^2\) controls the client-weighted error in the
local-trajectory analysis, whereas conditional orthogonality gives
\(\E[\|\widetilde\varepsilon_t\|^2\mid\mathcal F_{t,0}]
\leq\sigma_{\mathrm g,t}^2/K\).

\begin{lemma}[FedZOO sampling error]
\label{lem:fedzoo_sampling_error}
\normalfont
Let Assumptions~\ref{ass:regularity} and
\ref{ass:conditional_sampling} hold. Then
\begin{subequations}
\label{eq:fedzoo_sampling_error_bounds}
\begin{align}
\E[\varepsilon_{i,t,k}\mid\mathcal F_{t,k}]&=0,
\label{eq:fedzoo_data_error_zero_mean}\\
\E[\nu_{i,t,k}\mid\mathfrak H_{t,k}]&=0,
\label{eq:fedzoo_direction_error_zero_mean}\\
\E[\|\varepsilon_{i,t,k}\|^2
\mid\mathcal F_{t,k}]&\leq\frac{L_i^2}{B_t},
\label{eq:fedzoo_data_error_second_moment}\\
\E[\|\nu_{i,t,k}\|^2
\mid\mathcal F_{t,k}]&\leq\frac{C_{\mathbb S}nL_i^2}{Q_t},
\label{eq:fedzoo_direction_error_second_moment}\\
\E[\|\widetilde\varepsilon_t\|^2
\mid\mathcal F_{t,0}]
&\leq\frac{\sigma_{\mathrm g,t}^{2}}{K}.
\label{eq:fedzoo_aggregate_error_bound}
\end{align}
\end{subequations}
\end{lemma}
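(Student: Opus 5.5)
The plan is to work conditionally in two stages, matching the filtrations \(\mathcal F_{t,k}\subseteq\mathfrak H_{t,k}\).

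\emph{Identifying \(\bar g_{i,t,k}\).} Given \(\mathfrak H_{t,k}\), the minibatch is fixed. Then \(\widehat F_{i,t,k}\) is a deterministic function, Lipschitz on \(\X_{\eta_0}\) with constant \(\widehat L_{i,t,k}:=\frac1{B_t}\sum_b\widetilde L_i(\xi^b_{i,t,k})\). The \(Q_t\) directions are i.i.d.\ uniform on \(\mathbb S\) and independent of the batch. Applying \eqref{eq:smoothing_gradient_identity} to this single deterministic function (no \(\xi\)-expectation) gives
\[
\bar g_{i,t,k}=\nabla\widehat F_{i,t,k,\eta}(x_{i,t,k}),\qquad \widehat F_{i,t,k,\eta}(x):=\E_u\bigl[\widehat F_{i,t,k}(x+\eta u)\bigr].
\]
This immediately yields \eqref{eq:fedzoo_direction_error_zero_mean}.

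\emph{Data error, \eqref{eq:fedzoo_data_error_zero_mean} and \eqref{eq:fedzoo_data_error_second_moment}.} By linearity, \(\bar g_{i,t,k}=\frac1{B_t}\sum_b\nabla F_{i,\eta}(x_{i,t,k},\xi^b)\), where \(F_{i,\eta}(\cdot,\xi)\) is the ball-smoothing of \(F_i(\cdot,\xi)\).
\begin{itemize}
\item Each summand has norm at most \(\widetilde L_i(\xi^b)\), because smoothing preserves the Lipschitz constant.
\item Each summand has conditional mean \(\nabla f_{i,\eta}(x_{i,t,k})\). This follows by exchanging \(\E_\xi\) with differentiation, justified by dominated convergence with dominating function \(\widetilde L_i\in L^2\). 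Here we use that \(x_{i,t,k}\) is \(\mathcal F_{t,k}\)-measurable and that the batch is independent of \(\mathcal F_{t,k}\) by Assumption~\ref{ass:conditional_sampling}.
\end{itemize}
Hence \(\varepsilon_{i,t,k}\) is an average of \(B_t\) conditionally i.i.d.\ centered vectors. Its second moment is at most \(\frac1{B_t}\E\|\nabla F_{i,\eta}(x,\xi)\|^2\le L_i^2/B_t\), since the variance is bounded by the raw second moment.

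\emph{Direction error, \eqref{eq:fedzoo_direction_error_second_moment}.} Conditional on \(\mathfrak H_{t,k}\), \(\nu_{i,t,k}\) is the average of \(Q_t\) i.i.d.\ centered copies of the elementary spherical estimator for the deterministic \(\widehat L_{i,t,k}\)-Lipschitz function \(\widehat F_{i,t,k}\). So \(\E[\|\nu_{i,t,k}\|^2\mid\mathfrak H_{t,k}]\) is at most \(\frac1{Q_t}\) times the second moment of one such estimator. We then invoke the known bound for that moment,
\[
\E_v\Bigl\|\tfrac{n}{2\eta}\bigl(\phi(x+\eta v)-\phi(x-\eta v)\bigr)v\Bigr\|^2\le C_{\mathbb S}\,n\,L_\phi^2 .
\]
This is the Shamir-type estimate based on concentration of Lipschitz functions on the sphere, in the form used in \cite{marrinan2026zo}, with \(C_{\mathbb S}=16\sqrt{2\pi}\). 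We then take \(\E[\cdot\mid\mathcal F_{t,k}]\) and use Jensen: \(\E[\widehat L_{i,t,k}^2\mid\mathcal F_{t,k}]\le\E\widetilde L_i^2=L_i^2\).

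This is the step I expect to be the main obstacle. The dimension-free-in-\(\eta\) factor \(n\) (rather than \(n^2\)) requires the symmetric-difference concentration argument. One must check that it applies to \(\widehat F_{i,t,k}\) on \(\X_{\eta_0}\), which holds since \(x\pm\eta v\in\X_{\eta_0}\). If needed, I would reproduce the argument via the centered function \(\phi(x+\eta v)-\E_v\phi(x+\eta v)\) and the Gaussian-type concentration inequality on \(\mathbb S\).

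\emph{Aggregate bound, \eqref{eq:fedzoo_aggregate_error_bound}.} Write \(\widetilde\varepsilon_t=\frac1K\sum_k\sum_ip_i\,(\varepsilon_{i,t,k}+\nu_{i,t,k})\) and expand the square. All cross terms vanish:
\begin{itemize}
\item Pairs with different local indices \(k<k'\): condition on \(\mathcal F_{t,k'}\), which contains the earlier term, and use the two zero-mean properties via the tower rule.
\item Distinct clients at the same \(k\): use conditional independence given \(\mathcal F_{t,k}\) together with zero means.
\item Within one \((i,k)\): \(\langle\varepsilon,\nu\rangle\) vanishes because \(\varepsilon_{i,t,k}\) is \(\mathfrak H_{t,k}\)-measurable and \(\E[\nu\mid\mathfrak H_{t,k}]=0\). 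This also yields \eqref{eq:fedzoo_single_step_error_bound}.
\end{itemize}
What remains is
\[
\frac1{K^2}\sum_k\sum_ip_i^2L_i^2\Bigl(\frac1{B_t}+\frac{C_{\mathbb S}n}{Q_t}\Bigr)=\frac{\sigma_{\mathrm g,t}^2}{K}.
\]
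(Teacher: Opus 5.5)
Your proposal is correct and follows the paper's proof essentially step for step. The paper writes \(\bar g_{i,t,k}\) as the batch average of the smoothed sample gradients \(\mathfrak G_{i,\eta}(x_{i,t,k},\xi^{b})\), each bounded by \(\widetilde L_i(\xi^{b})\), and applies the spherical moment bound of \cite{marrinan2026zo} to the \(\widehat L_{i,t,k}\)-Lipschitz batch function followed by Jensen; it then removes the within-step, cross-client and cross-step cross terms in \(\widetilde\varepsilon_t\) exactly as you do. The only cosmetic difference is that you justify \(\E[\nabla F_{i,\eta}(x,\xi)]=\nabla f_{i,\eta}(x)\) by dominated convergence for the derivative, whereas the paper applies Fubini to the two-point spherical representation.
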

\begin{proof}
	See Appendix~\ref{app:fedzoo_proofs}.	
\end{proof}	
Lemma~\ref{lem:fedzoo_sampling_error} controls the stochastic part of the
aggregated direction, but \(d_t\) is evaluated along the client-specific
trajectories rather than at \(x_t\). The remaining discrepancy is the drift
term \(\mathfrak d_t\). These two effects are combined in
the per-round residual
\begin{equation}
\mathcal E_t:=
\chi_KC\rho_t^2
(\bar L^2+\zeta_\eta^2+\sigma_{\mathrm l,t}^{2})
+\frac{\sigma_{\mathrm g,t}^{2}}{K}.
\label{eq:common_round_residual}
\end{equation}
The first term in \(\mathcal E_t\) accounts for the gradient discrepancy
induced by local trajectory displacement, while the second accounts for the
aggregated sampling error. 

\begin{lemma}[Direction-error bound of FedZOO]
\label{lem:fedzoo_direction_accuracy}
Suppose the conditions of Lemma~\ref{lem:fedzoo_sampling_error} hold and
\(\rho_t\leq\tau_{\mathrm l}\) when \(K\geq2\). Then
\begin{subequations}
\begin{align}
\E[\|\mathfrak d_t\|^2\mid\mathcal F_{t,0}]
&\leq\chi_KC\rho_t^2
(\bar L^2+\zeta_\eta^2+\sigma_{\mathrm l,t}^{2}),
\label{eq:fedzoo_drift_bound}\\
\E[\|d_t-h_t\|^2
\mid\mathcal F_{t,0}]&\leq2\mathcal E_t.
\label{eq:fedzoo_total_direction_bound}
\end{align}
\end{subequations}
\end{lemma}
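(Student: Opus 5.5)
We apply Lemma~\ref{lem:common_local_trajectory} to the algorithmic trajectories and then combine its output with Lemma~\ref{lem:fedzoo_sampling_error} through the decomposition \eqref{eq:common_direction_decomposition}.

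\textbf{Step 1: instantiate Lemma~\ref{lem:common_local_trajectory}.} Take \(z_t=x_t\), \(z_{i,t,k}=x_{i,t,k}\), \(\widehat g_{i,t,k}=g_{i,t,k}\), and \(\eta=\eta_t\). Then the recursion \eqref{eq:common_local_recursion} is exactly \eqref{eq:local_projected_update}. The estimation error is
\[
e_{i,t,k}=\varepsilon_{i,t,k}+\nu_{i,t,k}.
\]
Its conditional mean given \(\mathcal F_{t,k}\) vanishes by the tower property: \eqref{eq:fedzoo_direction_error_zero_mean} holds on \(\mathfrak H_{t,k}\supseteq\mathcal F_{t,k}\), and \eqref{eq:fedzoo_data_error_zero_mean} then gives the rest. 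For the second moment, \eqref{eq:fedzoo_single_step_error_bound} supplies the bound
\[
\omega_{i,t}^2:=L_i^2\Bigl(\frac{1}{B_t}+\frac{C_{\mathbb S}n}{Q_t}\Bigr).
\]
This bound follows from the cross term \(\E[\langle\varepsilon_{i,t,k},\nu_{i,t,k}\rangle\mid\mathcal F_{t,k}]=0\), since \(\varepsilon_{i,t,k}\) is \(\mathfrak H_{t,k}\)-measurable, together with the two individual bounds. The quantity \(\omega_{i,t}^2\) is deterministic, hence \(\mathcal F_{t,0}\)-measurable, and
\[
\omega_t^2=\sum_ip_i\omega_{i,t}^2=\sigma_{\mathrm l,t}^2 .
\]
With \(\chi_K\rho_t\le\tau_{\mathrm l}\), \eqref{eq:common_drift_bound} gives
\[
\E[\|\mathfrak d_t\|^2\mid\mathcal F_{t,0}]\le\chi_KC\rho_t^2\bigl(\|h_t\|^2+\zeta_\eta^2+\sigma_{\mathrm l,t}^2\bigr).
\]

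\textbf{Step 2: replace \(\|h_t\|^2\) by \(\bar L^2\).} Lemma~\ref{lem:smoothing_regularities} gives \(\|\nabla f_{i,\eta}(x_t)\|\le L_i\). By the triangle inequality,
\[
\|h_t\|\le\sum_ip_iL_i=\bar L .
\]
This yields \eqref{eq:fedzoo_drift_bound}. When \(K=1\), both sides are zero, since \(\mathfrak d_t=0\) and \(\chi_K=0\).

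\textbf{Step 3: total direction error.} From \eqref{eq:common_direction_decomposition} we have \(d_t-h_t=\mathfrak d_t+\widetilde\varepsilon_t\). The elementary inequality \(\|a+b\|^2\le2\|a\|^2+2\|b\|^2\) then bounds \(\E[\|d_t-h_t\|^2\mid\mathcal F_{t,0}]\) by twice the sum of the drift bound and \eqref{eq:fedzoo_aggregate_error_bound}, which equals \(2\mathcal E_t\).

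\textbf{Main obstacle.} The only delicate point is the measurability hypothesis in Step 1. Lemma~\ref{lem:common_local_trajectory} requires the bound \(\omega_{i,t}^2\) to be \(\mathcal F_{t,0}\)-measurable and to hold uniformly in \(k\) conditional on \(\mathcal F_{t,k}\). Here \(\omega_{i,t}^2\) is deterministic, so the requirement holds, but verifying the mixed-filtration orthogonality of \(\varepsilon\) and \(\nu\) needs care. We avoid bounding correlations between \(\mathfrak d_t\) and \(\widetilde\varepsilon_t\) altogether by using the factor-2 split.
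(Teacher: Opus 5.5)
Your proposal is correct and follows the paper's proof essentially step for step. You use the same instantiation of Lemma~\ref{lem:common_local_trajectory} with $\omega_{i,t}^2=L_i^2(1/B_t+C_{\mathbb S}n/Q_t)$, obtaining zero mean via the tower property and the second moment via the vanishing $\varepsilon$--$\nu$ cross term. The remaining steps also match: the bound $\|h_t\|\le\bar L$, the trivial $K=1$ case, and the factor-2 split combined with \eqref{eq:fedzoo_aggregate_error_bound}.
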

\begin{proof}
See Appendix~\ref{app:fedzoo_proofs}.	
\end{proof}	
The bound in \eqref{eq:fedzoo_drift_bound} controls the
gradient discrepancy induced by the displacement of the local query
trajectories from the server iterate. The second inequality combines this drift
bound with \eqref{eq:fedzoo_aggregate_error_bound} and provides the direction-error term
required by the projected-descent inequality in
Lemma~\ref{lem:inexact_projected_descent}.
The server update is an inexact projected step for the smoothed global
objective, with direction error \(d_t-h_t\). Substituting
\eqref{eq:fedzoo_total_direction_bound} into
Lemma~\ref{lem:inexact_projected_descent} controls this error by
\(\mathcal E_t\). Thus, summing the resulting one-round descent relation gives the
following stationarity bound.

\begin{theorem}[Stationarity bound for FedZOO]
\label{thm:fedzoo_convergence}
\normalfont
Suppose that Assumptions~\ref{ass:regularity} and
\ref{ass:conditional_sampling} hold and that the parameters satisfy
\eqref{eq:common_parameter_regime}. Then
\begin{align}
\E\|\mathcal G_{\eta,\alpha}(x_R)\|^2
&=\frac1T\sum_{t=0}^{T-1}
\E\|\mathcal G_{\eta,\alpha}(x_t)\|^2\nonumber\\
&\leq\frac{2\Delta_\eta}{\alpha T}
+\frac2T\sum_{t=0}^{T-1}\mathcal E_t,
\label{eq:fedzoo_convergence_bound}
\end{align}
\end{theorem}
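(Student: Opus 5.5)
The argument is a standard telescoping of the one-round inexact descent inequality. The only substantive inputs are Lemma~\ref{lem:inexact_projected_descent} and the direction-error bound \eqref{eq:fedzoo_total_direction_bound} of Lemma~\ref{lem:fedzoo_direction_accuracy}.

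\textbf{One-round descent.} Fix a round \(t\). The parameter regime \eqref{eq:common_parameter_regime} gives \(\eta_t=\eta\) and \(0<\alpha\leq L_\eta^{-1}\). The server update \(x_{t+1}=\projX(x_t-\alpha d_t)\) with \(x_t\in\X\) is therefore exactly the setting of Lemma~\ref{lem:inexact_projected_descent}, with \(x=x_t\), \(\widehat d=d_t\) and \(a=\alpha\). This yields the pathwise inequality
\[
f_\eta(x_{t+1})\leq f_\eta(x_t)-\frac{\alpha}{2}\|\mathcal G_{\eta,\alpha}(x_t)\|^2+\frac{\alpha}{2}\|d_t-h_t\|^2 .
\]
Next take conditional expectation given \(\mathcal F_{t,0}\). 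Both \(x_t\) and \(\mathcal G_{\eta,\alpha}(x_t)\) are \(\mathcal F_{t,0}\)-measurable. The last term is bounded by \(\alpha\mathcal E_t\) using \eqref{eq:fedzoo_total_direction_bound}. That bound applies because \(\chi_K\rho_t\leq\tau_{\mathrm l}\) is part of the regime, which is precisely the hypothesis of Lemma~\ref{lem:fedzoo_direction_accuracy} when \(K\geq2\), while for \(K=1\) the drift term vanishes. Taking full expectations, and using that \(\mathcal E_t\) is deterministic since \(B_t,Q_t,\gamma_t\) are prescribed sequences, gives
\[
\E f_\eta(x_{t+1})\leq\E f_\eta(x_t)-\frac{\alpha}{2}\E\|\mathcal G_{\eta,\alpha}(x_t)\|^2+\alpha\mathcal E_t .
\]

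\textbf{Telescoping and random output.} Summing over \(t=0,\ldots,T-1\) collapses the \(f_\eta\) terms to \(f_\eta(x_0)-\E f_\eta(x_T)\). Lemma~\ref{lem:smoothing_regularities} gives \(f_\eta\geq f_{\inf}\), so this difference is at most \(\Delta_\eta\). Rearranging and dividing by \(\alpha T/2\) gives
\[
\frac1T\sum_{t=0}^{T-1}\E\|\mathcal G_{\eta,\alpha}(x_t)\|^2\leq\frac{2\Delta_\eta}{\alpha T}+\frac2T\sum_{t=0}^{T-1}\mathcal E_t .
\]
The equality on the left of \eqref{eq:fedzoo_convergence_bound} is then \eqref{eq:random_output_identity} applied to the nonnegative function \(\phi=\|\mathcal G_{\eta,\alpha}(\cdot)\|^2\).

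\textbf{Main obstacle: integrability.} The expectations of \(f_\eta(x_t)\) must be finite so that the telescoping is legitimate. I would handle this through the Lipschitz property of \(f_\eta\) on \(\X\), which follows from \eqref{eq:local_smoothed_gradient_bound}, together with finiteness of \(f\) at \(x^\circ\). These give \(|f_\eta(x_t)|\leq c+\bar L\|x_t-x^\circ\|\). The iterates have finite second moments because each \(g_{i,t,k}\) does, by \eqref{eq:fedzoo_single_step_error_bound}, and projection is nonexpansive. Beyond this technical point the proof is direct, since all the drift and variance work is already contained in Lemma~\ref{lem:fedzoo_direction_accuracy}.
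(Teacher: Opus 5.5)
Your proposal is correct and follows the paper's proof essentially step for step: Lemma~\ref{lem:inexact_projected_descent} at \(x=x_t\), \(\widehat d=d_t\), \(a=\alpha\), then conditioning on \(\mathcal F_{t,0}\) and applying \eqref{eq:fedzoo_total_direction_bound}, then telescoping with \(f_\eta\geq f_{\inf}\), and finally the identity \eqref{eq:random_output_identity}. Your integrability step is an addition the paper omits, and it is harmless; it also follows more cheaply by induction, since \(f_\eta(x_t)-f_{\inf}\geq0\), \(\Delta_\eta<\infty\), and \(\E[f_\eta(x_{t+1})-f_{\inf}]\leq\E[f_\eta(x_t)-f_{\inf}]+\alpha\mathcal E_t\).
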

\begin{proof}
	See Appendix~\ref{app:fedzoo_proofs}.	
\end{proof}	
\begin{remark}
\label{rem:fedzoo_finite_time}
\normalfont
The first term in \eqref{eq:fedzoo_convergence_bound} is the initial descent
budget. By \eqref{eq:common_round_residual}, the second term contains the
local-drift and aggregate sampling errors. Their sampling factor
\(1/B_t+C_{\mathbb S}n/Q_t\) separates the contributions of data samples and
random directions, and the drift component vanishes when \(K=1\).
\end{remark}

\begin{corollary}
\label{cor:fedzoo_asymptotic}
\normalfont
If the conditions of Theorem~\ref{thm:fedzoo_convergence} hold and
\(T^{-1}\sum_{t=0}^{T-1}\mathcal E_t\to0\). Then
\(\E\|\mathcal G_{\eta,\alpha}(x_R)\|^2\to0\).
\end{corollary}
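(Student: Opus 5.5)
The corollary follows directly from the bound in Theorem~\ref{thm:fedzoo_convergence}, once we fix how the limit is taken. We read the statement in the natural way. The smoothing radius \(\eta\), the server stepsize \(\alpha\) and \(K\) are held fixed, and \(T\to\infty\). The remaining parameter sequences \(\{\gamma_t,B_t,Q_t\}\) obey \eqref{eq:common_parameter_regime} for every horizon, and their Cesàro average of \(\mathcal E_t\) vanishes. Under this reading, \(\Delta_\eta=f_\eta(x_0)-f_{\inf}\) is a finite constant independent of \(T\). It is finite because \(f_\eta\geq f_{\inf}>-\infty\) by Lemma~\ref{lem:smoothing_regularities}, and because \(f_\eta(x_0)<\infty\) by the integrability point \(x^\circ\) together with the Lipschitz moments in Assumption~\ref{ass:regularity}.

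The argument has three steps. First, we invoke Theorem~\ref{thm:fedzoo_convergence} for each \(T\) to get
\[
0\leq\E\|\mathcal G_{\eta,\alpha}(x_R)\|^2\leq\frac{2\Delta_\eta}{\alpha T}+\frac{2}{T}\sum_{t=0}^{T-1}\mathcal E_t .
\]
The equality with the averaged iterates is \eqref{eq:random_output_identity}, which applies because \(\phi=\|\mathcal G_{\eta,\alpha}(\cdot)\|^2\) is nonnegative and measurable; the map is continuous, since the projection is nonexpansive and \(\nabla f_\eta\) is Lipschitz. Second, we note that the first term tends to \(0\), because \(\alpha>0\) and \(\Delta_\eta\) are fixed. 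The second term tends to \(0\) by hypothesis. Third, a squeeze between \(0\) and a vanishing upper bound gives the claim.

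The only delicate point is the interpretation. If \(\alpha\) were allowed to shrink with \(T\), we would also need \(\alpha T\to\infty\). I will state explicitly that \(\alpha\) and \(\eta\) are fixed, as in \eqref{eq:common_parameter_regime}, so this issue does not arise. I will also remark that the hypothesis on \(\mathcal E_t\) is achievable. With \(K\geq2\), one can take \(\rho_t\to0\) (that is, \(\gamma_t\to0\)) and \(B_t,Q_t\to\infty\). Then \(\sigma_{\mathrm l,t}^2\) and \(\sigma_{\mathrm g,t}^2\) stay bounded and tend to \(0\) as well, so \(\mathcal E_t\to0\), and a sequence converging to zero has Cesàro averages converging to zero. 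This remark is not needed for the corollary itself.
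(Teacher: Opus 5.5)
Your proposal is correct and follows the paper's own argument: the paper also applies the bound \eqref{eq:fedzoo_convergence_bound} of Theorem~\ref{thm:fedzoo_convergence}, notes that \(2\Delta_\eta/(\alpha T)\to0\) because \(\alpha>0\) and \(\Delta_\eta<\infty\) are fixed, and invokes the hypothesis for the residual term. Your additional justifications, namely the finiteness of \(\Delta_\eta\), the measurability needed for \eqref{eq:random_output_identity}, and the remark on fixed \(\alpha\), are sound refinements of that one-line argument.
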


\begin{remark}
\label{rem:fedzoo_asymptotic_schedule}
The residual condition in Corollary~\ref{cor:fedzoo_asymptotic} follows if
\(T^{-1}\sum_{t=0}^{T-1}\rho_t^2\to0\) and
\(T^{-1}\sum_{t=0}^{T-1}(B_t^{-1}+Q_t^{-1})\to0\).
The condition on \(\rho_t\) is unnecessary when \(K=1\). For \(K\geq2\),
\begin{equation}
\rho_t\leq\frac{\rho_0}{\sqrt{t+1}},\quad
B_t\geq B_0(t+1),\quad Q_t\geq Q_0(t+1)
\label{eq:fedzoo_explicit_schedule}
\end{equation}
with \(0<\rho_0\leq\tau_{\mathrm l}\) and \(B_0,Q_0>0\) implies
\begin{equation}
\E\|\mathcal G_{\eta,\alpha}(x_R)\|^2
=\mathcal O\!\left(\frac{\log T}{T}\right).
\label{eq:fedzoo_log_rate}
\end{equation}
\end{remark}

For a prescribed total of \(T\) communication rounds, the round-independent
sampling budgets and local-step scale are selected as
\begin{subequations}
\begin{align}
\mathfrak s_T
&:=\max\left\{1,\frac{\varkappa T}{K}\right\},
\label{eq:common_horizon_scale}\\
B_T&:=\lceil \beta_B\mathfrak s_T\rceil,\qquad
Q_T:=\lceil \beta_Qn\mathfrak s_T\rceil,
\label{eq:fedzoo_horizon_schedule}\\
\rho_t&:=\frac{\rho_0}{\sqrt T},\qquad
0<\rho_0\leq\tau_{\mathrm l}.
\label{eq:fedzoo_horizon_rho_schedule}
\end{align}
\end{subequations}
where \(\beta_B,\beta_Q>0\) are constants. The budgets are kept fixed across
rounds, so that \(B_t=B_T\) and \(Q_t=Q_T\) for
\(t=0,\ldots,T-1\). The
sampling factor \(\mathfrak S_{B,Q}\) and the residual constant
\(\mathfrak R_{B,Q}\)
appearing in the convergence bound are given by
\begin{align}
\mathfrak S_{B,Q}&:=\frac1{\beta_B}+\frac{C_{\mathbb S}}{\beta_Q},\nonumber\\
\mathfrak R_{B,Q}&:=\chi_KC\rho_0^2
(\bar L^2+\zeta_\eta^2+\mathfrak L_1\mathfrak S_{B,Q})
+\mathfrak L_1\mathfrak S_{B,Q}.
\label{eq:fedzoo_residual_constant}
\end{align}

\begin{corollary}[Convergence rate of FedZOO]
\label{cor:fedzoo_horizon_rate}
\normalfont
Under the conditions of Theorem~\ref{thm:fedzoo_convergence}, set
\(\alpha=L_\eta^{-1}\) and adopt the choices in
\eqref{eq:fedzoo_horizon_schedule} and
\eqref{eq:fedzoo_horizon_rho_schedule}. Then
\begin{equation}
\E\|\mathcal G_{\eta,1/L_\eta}(x_R)\|^2
\leq\frac{2L_\eta\Delta_\eta+2\mathfrak R_{B,Q}}{T}.
\label{eq:fedzoo_horizon_rate}
\end{equation}
\end{corollary}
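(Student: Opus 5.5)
The plan is to specialize Theorem~\ref{thm:fedzoo_convergence} to the horizon-dependent choices and show that \(\frac1T\sum_{t}\mathcal E_t\le \mathfrak R_{B,Q}/T\). First I check that the hypotheses of the theorem still hold. The choice \(\alpha=L_\eta^{-1}\) satisfies \(0<\alpha\le L_\eta^{-1}\). The choice \(\rho_t=\rho_0/\sqrt T\le\rho_0\le\tau_{\mathrm l}\) gives \(\chi_K\rho_t\le\tau_{\mathrm l}\), so the regime \eqref{eq:common_parameter_regime} holds. Substituting \(\alpha=1/L_\eta\) into the first term of \eqref{eq:fedzoo_convergence_bound} gives \(2L_\eta\Delta_\eta/T\). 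It then remains to show \(2\mathcal E_t\le 2\mathfrak R_{B,Q}/T\) for every \(t\), since the budgets do not depend on \(t\).

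Next I bound the sampling factor. Since \(B_T\ge\beta_B\mathfrak s_T\) and \(Q_T\ge\beta_Q n\mathfrak s_T\),
\[
\frac1{B_T}+\frac{C_{\mathbb S}n}{Q_T}\le\frac{1}{\mathfrak s_T}\Bigl(\frac1{\beta_B}+\frac{C_{\mathbb S}}{\beta_Q}\Bigr)=\frac{\mathfrak S_{B,Q}}{\mathfrak s_T}.
\]
Hence \(\sigma_{\mathrm l,t}^2\le\mathfrak L_1\mathfrak S_{B,Q}/\mathfrak s_T\le\mathfrak L_1\mathfrak S_{B,Q}\), using \(\mathfrak s_T\ge1\). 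For the aggregate scale, \(\sigma_{\mathrm g,t}^2/K\le\mathfrak L_2\mathfrak S_{B,Q}/(K\mathfrak s_T)\). Using \(\mathfrak s_T\ge\varkappa T/K\) and \(\mathfrak L_2=\varkappa\mathfrak L_1\), this becomes
\[
\frac{\sigma_{\mathrm g,t}^2}{K}\le\frac{\varkappa\mathfrak L_1\mathfrak S_{B,Q}}{K}\cdot\frac{K}{\varkappa T}=\frac{\mathfrak L_1\mathfrak S_{B,Q}}{T}.
\]
This step depends on \(\mathfrak L_1>0\), which the paper assumes, and the matching of \(\varkappa\) in \(\mathfrak s_T\) is the main point to verify.

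For the drift part, \(\rho_t^2=\rho_0^2/T\), so
\[
\chi_KC\rho_t^2\bigl(\bar L^2+\zeta_\eta^2+\sigma_{\mathrm l,t}^2\bigr)\le\frac{\chi_KC\rho_0^2\bigl(\bar L^2+\zeta_\eta^2+\mathfrak L_1\mathfrak S_{B,Q}\bigr)}{T}.
\]
Adding the drift and sampling bounds gives \(\mathcal E_t\le\mathfrak R_{B,Q}/T\) uniformly in \(t\). Averaging over \(t\) and inserting into \eqref{eq:fedzoo_convergence_bound} yields \eqref{eq:fedzoo_horizon_rate}. Every step is a direct substitution, so I expect no real obstacle beyond keeping the two scales straight: the local scale \(\sigma_{\mathrm l,t}^2\) needs only the crude bound \(\mathfrak s_T\ge1\), while the aggregate scale \(\sigma_{\mathrm g,t}^2/K\) is where the \(1/T\) factor comes from through \(\mathfrak s_T\ge\varkappa T/K\).
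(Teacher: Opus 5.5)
Your proposal is correct and follows the paper's proof essentially step for step. It uses the same sampling-factor bound $\mathfrak S_{B,Q}/\mathfrak s_T$, the crude bound $\mathfrak s_T\ge1$ for $\sigma_{\mathrm l,t}^2$, and the bound $\mathfrak s_T\ge\varkappa T/K$ together with $\mathfrak L_2=\varkappa\mathfrak L_1$ for $\sigma_{\mathrm g,t}^2/K$, then substitutes $\rho_t^2=\rho_0^2/T$ into $\mathcal E_t$ to obtain $\mathcal E_t\le\mathfrak R_{B,Q}/T$; your explicit check that $\alpha=L_\eta^{-1}$ and $\rho_t\le\tau_{\mathrm l}$ keep the regime \eqref{eq:common_parameter_regime} valid is a small addition that the paper leaves implicit.
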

\begin{proof}
	See Appendix~\ref{app:fedzoo_proofs}.	
\end{proof}	

\begin{corollary}[Complexity of FedZOO]
\label{cor:fedzoo_complexity}
\normalfont
Under the parameter choices and conditions of
Corollary~\ref{cor:fedzoo_horizon_rate},
\begin{equation}
\E\|\mathcal G_{\eta,1/L_\eta}(x_R)\|
\leq
\sqrt{\frac{2L_\eta\Delta_\eta+2\mathfrak R_{B,Q}}{T}}.
\label{eq:fedzoo_average_mapping_norm}
\end{equation}
Thus, the expected mapping norm at the randomized output is no larger than \(\epsilon\) when
\(T\geq(2L_\eta\Delta_\eta+2\mathfrak R_{B,Q})\epsilon^{-2}\). The corresponding
communication requirements are \(\mathcal O(M\epsilon^{-2})\) uplink vectors
and \(\mathcal O(\epsilon^{-2})\) server broadcasts. The number of stochastic
function evaluations satisfies
\begin{subequations}
\begin{align}
\mathcal N_{\mathrm{FedZOO}}(T)
&=2MKT B_TQ_T
\nonumber\\
&=\mathcal O\!\Bigl(MKnT
+\frac{Mn\varkappa^2}{K}T^3\Bigr),
\label{eq:fedzoo_query_complexity_in_T}\\
\mathcal N_{\mathrm{FedZOO}}(\epsilon)
&=\mathcal O\!\left(MKn\epsilon^{-2}
+\frac{Mn\varkappa^2}{K}\epsilon^{-6}\right).
\label{eq:fedzoo_query_complexity_in_epsilon}
\end{align}
\end{subequations}
\end{corollary}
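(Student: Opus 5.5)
The plan has three parts: derive the mean-norm bound from Corollary~\ref{cor:fedzoo_horizon_rate}, read off the communication counts, and then count function queries with the fixed budgets of \eqref{eq:fedzoo_horizon_schedule} before substituting the required \(T\).

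\emph{Mean-norm bound and round count.} By Jensen's inequality, \(\E\|\mathcal G_{\eta,1/L_\eta}(x_R)\|\le(\E\|\mathcal G_{\eta,1/L_\eta}(x_R)\|^2)^{1/2}\). Inserting \eqref{eq:fedzoo_horizon_rate} gives \eqref{eq:fedzoo_average_mapping_norm} directly. Requiring the right-hand side to be at most \(\epsilon\) is equivalent to \(T\ge(2L_\eta\Delta_\eta+2\mathfrak R_{B,Q})\epsilon^{-2}\).

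One point needs checking here. \(\mathfrak R_{B,Q}\) must not depend on \(T\), so that this threshold is genuinely of order \(\epsilon^{-2}\). This holds because \(\mathfrak R_{B,Q}\) is built only from \(\rho_0,\beta_B,\beta_Q\) and the problem constants. Its independence of \(K\) and \(\varkappa\) was already secured in the proof of Corollary~\ref{cor:fedzoo_horizon_rate}, through \(\mathfrak s_T\ge\varkappa T/K\).

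\emph{Communication count.} Each round has one broadcast and \(M\) uplink vectors \(d_{i,t}\). With \(T=\mathcal O(\epsilon^{-2})\) rounds, this gives \(\mathcal O(M\epsilon^{-2})\) uplinks and \(\mathcal O(\epsilon^{-2})\) broadcasts.

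\emph{Query count in \(T\).} Each local direction costs \(2B_TQ_T\) evaluations, and there are \(MKT\) local directions in total, so \(\mathcal N=2MKTB_TQ_T\). Using \(\lceil y\rceil\le y+1\) and \(\mathfrak s_T\ge1\), we have
\[
B_T\le(\beta_B+1)\mathfrak s_T,\qquad Q_T\le(\beta_Q+1)n\,\mathfrak s_T.
\]
Hence \(B_TQ_T=\mathcal O(n\mathfrak s_T^2)\). Next, \(\mathfrak s_T^2\le 1+\varkappa^2T^2/K^2\), because the maximum of two nonnegative numbers is at most their sum. Multiplying by \(2MKT\) gives
\[
\mathcal O\bigl(MKnT+Mn\varkappa^2T^3/K\bigr),
\]
which is \eqref{eq:fedzoo_query_complexity_in_T}.

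\emph{Query count in \(\epsilon\).} Take \(T=\lceil(2L_\eta\Delta_\eta+2\mathfrak R_{B,Q})\epsilon^{-2}\rceil=\mathcal O(\epsilon^{-2})\) and substitute. Then \(T^3=\mathcal O(\epsilon^{-6})\), which yields \eqref{eq:fedzoo_query_complexity_in_epsilon}.

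There is no real obstacle in this argument. The only step needing care is the ceiling bookkeeping, i.e.\ absorbing the \(+1\) terms using \(\mathfrak s_T\ge1\) and \(n\ge1\), together with the check that the hidden constants involve only \(\beta_B,\beta_Q,L_\eta,\Delta_\eta,\mathfrak R_{B,Q}\).
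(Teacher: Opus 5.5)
Your proposal is correct and follows the paper's proof essentially step for step: Jensen applied to \eqref{eq:fedzoo_horizon_rate}, the ceiling bounds $B_T\le(\beta_B+1)\mathfrak s_T$ and $Q_T\le(\beta_Q+1)n\mathfrak s_T$ from $\mathfrak s_T\ge1$, the estimate $\mathfrak s_T^2\le1+(\varkappa T/K)^2$, and substitution of $T=\mathcal O(\epsilon^{-2})$. One small correction: $\mathfrak R_{B,Q}$ does depend on $K$ through $\chi_K$, but since $\chi_K\le1$ this does not affect the $\mathcal O(\epsilon^{-2})$ threshold.
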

\vspace{2pt}
\begin{proof}
	See Appendix~\ref{app:fedzoo_proofs}.
\end{proof}

\subsection{Convergence Analysis of FedVRZO}

FedVRZO retains the direction-aggregation recursion of FedZOO but constructs
each local estimate from independent sample--direction pairs. Its local
sampling error and the corresponding client-weighted round average are
\begin{subequations}
\begin{align}
e_{i,t,k}
&:=g_{i,t,k}-\nabla f_{i,\eta}(x_{i,t,k}),
\label{eq:local_fedvrzo_sampling_error}\\
\widetilde e_t
&:=\frac{1}{K}\sum_{i=1}^{M}p_i\sum_{k=0}^{K-1}
e_{i,t,k}.
\label{eq:fedvrzo_total_sampling_error}
\end{align}
\end{subequations}
The common definitions of \(\mathfrak d_t\) and \(h_t\) in
\eqref{eq:common_drift_error_definition}--
\eqref{eq:common_smoothed_global_direction} apply to the FedVRZO trajectory;
thus, \eqref{eq:common_direction_decomposition} holds with
\(\widetilde e_t\) in place of \(\widetilde\varepsilon_t\). The corresponding
local and aggregate sampling-error scales are defined as
\begin{subequations}
\begin{align}
\widehat\sigma_{\mathrm l,t}^{2}
&:=\frac{C_{\mathbb S}n\mathfrak L_1}{N_t},
\label{eq:fedvrzo_local_variance_scale}\\
\widehat\sigma_{\mathrm g,t}^{2}
&:=\frac{C_{\mathbb S}n\mathfrak L_2}{N_t}.
\label{eq:fedvrzo_aggregate_variance_scale}
\end{align}
\end{subequations}
The next lemma formalizes these conditional mean and second-moment properties.
\begin{lemma}[FedVRZO sampling error]
\label{lem:fedvrzo_sampling_error}
\normalfont
Let Assumptions \ref{ass:regularity} and
\ref{ass:conditional_sampling} hold. Then
\begin{subequations}
\begin{align}
\E[e_{i,t,k}\mid\mathcal F_{t,k}]&=0,
\label{eq:fedvrzo_error_zero_mean}\\
\E[\|e_{i,t,k}\|^2\mid\mathcal F_{t,k}]
&\leq\frac{C_{\mathbb S}nL_i^2}{N_t},
\label{eq:fedvrzo_error_second_moment}\\
\E[\|\widetilde e_t\|^2
\mid\mathcal F_{t,0}]
&\leq\frac{\widehat\sigma_{\mathrm g,t}^{2}}{K}.
\label{eq:fedvrzo_aggregate_error_bound}
\end{align}
\end{subequations}
\end{lemma}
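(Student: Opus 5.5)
We prove the three claims of Lemma~\ref{lem:fedvrzo_sampling_error} in order. Each rests on the elementary estimator \eqref{eq:elementary_zo_estimator}, the identity \eqref{eq:elementary_zo_expectation}, and the conditional independence in Assumption~\ref{ass:conditional_sampling}.

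\emph{Zero mean.} Write
\[
e_{i,t,k}=\frac1{N_t}\sum_{j=1}^{N_t}\zeta^j,\qquad
\zeta^j:=\widehat g_{i,\eta}(x_{i,t,k};\xi^j_{i,t,k},v^j_{i,t,k})-\nabla f_{i,\eta}(x_{i,t,k}).
\]
The point \(x_{i,t,k}\) is \(\mathcal F_{t,k}\)-measurable. Given \(\mathcal F_{t,k}\), each pair \((\xi^j,v^j)\) has law \(\mathcal D_i\otimes\operatorname{Unif}(\mathbb S)\) and is independent of the history. We therefore freeze \(x=x_{i,t,k}\) and apply \eqref{eq:elementary_zo_expectation}, which gives \(\E[\zeta^j\mid\mathcal F_{t,k}]=0\). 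This yields \eqref{eq:fedvrzo_error_zero_mean}. Integrability needed for the freezing argument follows from the bound in the next step. The query points \(x\pm\eta v\) lie in \(\X_{\eta_0}\), so Assumption~\ref{ass:regularity} applies.

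\emph{Single-step second moment.} The \(\zeta^j\) are conditionally independent and centered, so cross terms vanish and
\[
\E[\|e_{i,t,k}\|^2\mid\mathcal F_{t,k}]=N_t^{-2}\sum_j\E[\|\zeta^j\|^2\mid\mathcal F_{t,k}]\leq N_t^{-1}\E\|\widehat g_{i,\eta}(x;\xi,v)\|^2 .
\]
The last inequality uses that variance is at most the second moment. The main obstacle is the bound \(\E\|\widehat g_{i,\eta}(x;\xi,v)\|^2\leq C_{\mathbb S}nL_i^2\). We first condition on \(\xi\). The map \(v\mapsto F_i(x+\eta v,\xi)\) is \(\eta\widetilde L_i(\xi)\)-Lipschitz on \(\mathbb S\). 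Since \(\|v\|=1\), we have \(\|\widehat g\|^2=\frac{n^2}{4\eta^2}|F_i(x+\eta v,\xi)-F_i(x-\eta v,\xi)|^2\). We subtract the spherical mean of \(h(v):=F_i(x+\eta v,\xi)\); this mean is the same for \(v\) and \(-v\). Concentration of measure for Lipschitz functions on the sphere (Lévy/Gaussian-type) then gives \(\E_v|h(v)-\E h|^2\leq c\,\eta^2\widetilde L_i(\xi)^2/n\). Combining, \(\E_v\|\widehat g\|^2\leq C_{\mathbb S}n\widetilde L_i(\xi)^2\). This is exactly the spherical second-moment bound used in \cite{marrinan2026zo}, whose constant \(16\sqrt{2\pi}\) is \(C_{\mathbb S}\), and we would cite it. Taking expectation over \(\xi\) and using \eqref{eq:client_lipschitz_moment} gives \eqref{eq:fedvrzo_error_second_moment}.

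\emph{Aggregate bound.} We expand \(\|\widetilde e_t\|^2=K^{-2}\|\sum_{i,k}p_ie_{i,t,k}\|^2\) and show all cross terms have zero expectation given \(\mathcal F_{t,0}\).
\begin{itemize}
\item For \(k<k'\), \(e_{i,t,k}\) is \(\mathcal F_{t,k'}\)-measurable. The tower property with \eqref{eq:fedvrzo_error_zero_mean} at step \(k'\) kills the term, for any clients \(i,i'\).
\item For \(k=k'\) and \(i\neq i'\), conditional independence across clients given \(\mathcal F_{t,k}\), together with zero conditional means, kills it.
\end{itemize}
What remains is
\[
\E[\|\widetilde e_t\|^2\mid\mathcal F_{t,0}]=K^{-2}\sum_{i,k}p_i^2\,\E\bigl[\E[\|e_{i,t,k}\|^2\mid\mathcal F_{t,k}]\mid\mathcal F_{t,0}\bigr].
\]
Applying the single-step bound, this is at most \(K^{-1}C_{\mathbb S}n\mathfrak L_2/N_t=\widehat\sigma_{\mathrm g,t}^2/K\), which is \eqref{eq:fedvrzo_aggregate_error_bound}. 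This argument mirrors the FedZOO case in Lemma~\ref{lem:fedzoo_sampling_error}.
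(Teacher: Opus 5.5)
Your proposal is correct and follows essentially the same route as the paper: conditional unbiasedness of each pair from \eqref{eq:elementary_zo_expectation}, the spherical second-moment bound applied with the sample fixed and then averaged over \(\xi\), and vanishing of all cross terms across pairs, clients, and local steps for the aggregate bound. Your concentration-of-measure sketch only explains the bound from \cite{marrinan2026zo} with constant \(C_{\mathbb S}=16\sqrt{2\pi}\), which the paper simply cites.
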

\begin{proof}
	See Appendix~\ref{app:fedvrzo_proofs}.	
\end{proof}	
The FedVRZO round residual is
\begin{equation}
\widehat{\mathcal E}_t
:=\chi_KC\rho_t^2
(\bar L^2+\zeta_\eta^2+\widehat\sigma_{\mathrm l,t}^2)
+\frac{\widehat\sigma_{\mathrm g,t}^2}{K}.
\label{eq:fedvrzo_round_residual}
\end{equation}

\begin{lemma}[Direction-error bound of FedVRZO]
\label{lem:fedvrzo_direction_accuracy}
\normalfont
Under Lemma \ref{lem:fedvrzo_sampling_error}, suppose
\(\rho_t\leq\tau_{\mathrm l}\) when \(K\geq2\). Then
\begin{subequations}
\begin{align}
\E[\|\mathfrak d_t\|^2\mid\mathcal F_{t,0}]
&\leq\chi_KC\rho_t^2
(\bar L^2+\zeta_\eta^2+\widehat\sigma_{\mathrm l,t}^{2}),
\label{eq:fedvrzo_drift_bound}\\
\E[\|d_t-h_t\|^2
\mid\mathcal F_{t,0}]&\leq2\widehat{\mathcal E}_t.
\label{eq:fedvrzo_total_direction_bound}
\end{align}
\end{subequations}
\end{lemma}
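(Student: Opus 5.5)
The plan is to specialize Lemma~\ref{lem:common_local_trajectory} to the FedVRZO trajectory and then combine the resulting drift bound with the aggregate sampling bound of Lemma~\ref{lem:fedvrzo_sampling_error}, in the same way as for FedZOO.

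\textbf{Step 1: verify the hypotheses of Lemma~\ref{lem:common_local_trajectory}.} Take \(z_t=x_t\), \(z_{i,t,k}=x_{i,t,k}\), \(\widehat g_{i,t,k}=g_{i,t,k}\) and \(\eta=\eta_t\). With these choices the recursion \eqref{eq:common_local_recursion} is exactly \eqref{eq:local_projected_update}, and \(e_{i,t,k}\) coincides with \eqref{eq:local_fedvrzo_sampling_error}. By \eqref{eq:fedvrzo_error_zero_mean}, the error is conditionally centered. By \eqref{eq:fedvrzo_error_second_moment}, the conditional second moment is bounded by \(\omega_{i,t}^2:=C_{\mathbb S}nL_i^2/N_t\). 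This quantity is deterministic, hence \(\mathcal F_{t,0}\)-measurable. Summing with weights \(p_i\) gives \(\omega_t^2=C_{\mathbb S}n\mathfrak L_1/N_t=\widehat\sigma_{\mathrm l,t}^2\). The step-size condition \(\chi_K\rho_t\leq\tau_{\mathrm l}\) is assumed.

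\textbf{Step 2: obtain \eqref{eq:fedvrzo_drift_bound}.} Applying \eqref{eq:common_drift_bound} gives
\[
\E[\|\mathfrak d_t\|^2\mid\mathcal F_{t,0}]\leq\chi_KC\rho_t^2\bigl(\|h_t\|^2+\zeta_\eta^2+\widehat\sigma_{\mathrm l,t}^2\bigr).
\]
It remains to replace \(\|h_t\|^2\) by \(\bar L^2\). Since \(h_t=\sum_ip_i\nabla f_{i,\eta}(x_t)\), the triangle inequality and \eqref{eq:local_smoothed_gradient_bound} give \(\|h_t\|\leq\sum_ip_iL_i=\bar L\). When \(K=1\), both sides vanish because \(\chi_K=0\) and \(\mathfrak d_t=0\).

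\textbf{Step 3: obtain \eqref{eq:fedvrzo_total_direction_bound}.} By the decomposition \eqref{eq:common_direction_decomposition} with \(\widetilde e_t\) in place of \(\widetilde\varepsilon_t\), we have \(d_t-h_t=\mathfrak d_t+\widetilde e_t\). Applying \(\|a+b\|^2\leq2\|a\|^2+2\|b\|^2\) and taking conditional expectations given \(\mathcal F_{t,0}\) yields
\[
\E[\|d_t-h_t\|^2\mid\mathcal F_{t,0}]\leq2\,\E[\|\mathfrak d_t\|^2\mid\mathcal F_{t,0}]+2\,\E[\|\widetilde e_t\|^2\mid\mathcal F_{t,0}].
\]
The first term is bounded by Step 2. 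The second is bounded by \eqref{eq:fedvrzo_aggregate_error_bound}, which gives at most \(\widehat\sigma_{\mathrm g,t}^2/K\). The sum of the two bounds is \(2\widehat{\mathcal E}_t\) by \eqref{eq:fedvrzo_round_residual}.

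\textbf{Main obstacle.} The argument is mostly bookkeeping. The only delicate point is checking that Lemma~\ref{lem:common_local_trajectory} really applies, in particular that the second-moment bound \(\omega_{i,t}^2\) is uniform over \(k\) and \(\mathcal F_{t,0}\)-measurable. Lemma~\ref{lem:fedvrzo_sampling_error} supplies this because the bound \(C_{\mathbb S}nL_i^2/N_t\) is deterministic and the same at every local query point. We deliberately avoid any cross term between \(\mathfrak d_t\) and \(\widetilde e_t\): it would not vanish, since \(\mathfrak d_t\) depends on earlier errors through the trajectory. This is why we use the factor-\(2\) inequality rather than an orthogonality argument.
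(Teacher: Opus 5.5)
Your proposal is correct and follows essentially the same route as the paper: verify the hypotheses of Lemma~\ref{lem:common_local_trajectory} with $\omega_{i,t}^2=C_{\mathbb S}nL_i^2/N_t$ (so that $\omega_t^2=\widehat\sigma_{\mathrm l,t}^2$), bound $\|h_t\|\le\bar L$, handle $K=1$ via $\chi_K=0$ and $\mathfrak d_t=0$, and combine the drift bound with \eqref{eq:fedvrzo_aggregate_error_bound} through $\|a+b\|^2\le2\|a\|^2+2\|b\|^2$. Your remark that the cross term between $\mathfrak d_t$ and $\widetilde e_t$ need not vanish, because the query points depend on earlier local errors, is a correct justification for the factor-$2$ split, which the paper uses without comment.
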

\begin{proof}
	See Appendix~\ref{app:fedvrzo_proofs}.	
\end{proof}	

\begin{theorem}[Stationarity bound for FedVRZO]
\label{thm:fedvrzo_convergence}
\normalfont
Suppose Assumptions~\ref{ass:regularity} and
\ref{ass:conditional_sampling} hold and that the parameters satisfy
\eqref{eq:common_parameter_regime}. Then
\begin{align}
\E\|\mathcal G_{\eta,\alpha}(x_R)\|^2
&=\frac1T\sum_{t=0}^{T-1}
\E\|\mathcal G_{\eta,\alpha}(x_t)\|^2\nonumber\\
&\leq\frac{2\Delta_\eta}{\alpha T}
+\frac2T\sum_{t=0}^{T-1}\widehat{\mathcal E}_t,
\label{eq:fedvrzo_convergence_bound}
\end{align}
\end{theorem}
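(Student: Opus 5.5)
The argument mirrors the one for Theorem~\ref{thm:fedzoo_convergence}. We apply the inexact projected descent of Lemma~\ref{lem:inexact_projected_descent} at every server step and control the direction error with Lemma~\ref{lem:fedvrzo_direction_accuracy}. Fix a round \(t\). Under \eqref{eq:common_parameter_regime} we have \(\eta_t=\eta\) and \(0<\alpha\leq L_\eta^{-1}\), and the server update is \(x_{t+1}=\projX(x_t-\alpha d_t)\) with \(x_t\in\X\). Lemma~\ref{lem:inexact_projected_descent} with \(x=x_t\), \(\widehat d=d_t\), and \(a=\alpha\) then gives, pathwise,
\[
f_\eta(x_{t+1})\leq f_\eta(x_t)-\frac{\alpha}{2}\|\mathcal G_{\eta,\alpha}(x_t)\|^2+\frac{\alpha}{2}\|d_t-h_t\|^2 .
\]

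Next we take conditional expectation given \(\mathcal F_{t,0}\). Since \(x_t\) is \(\mathcal F_{t,0}\)-measurable, so are \(f_\eta(x_t)\) and \(\mathcal G_{\eta,\alpha}(x_t)\). For the error term we invoke Lemma~\ref{lem:fedvrzo_direction_accuracy}. Its hypotheses hold because Lemma~\ref{lem:fedvrzo_sampling_error} follows from Assumptions~\ref{ass:regularity} and \ref{ass:conditional_sampling}, and \(\chi_K\rho_t\leq\tau_{\mathrm l}\) is part of \eqref{eq:common_parameter_regime}. It gives \(\E[\|d_t-h_t\|^2\mid\mathcal F_{t,0}]\leq 2\widehat{\mathcal E}_t\), where \(\widehat{\mathcal E}_t\) is deterministic. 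Taking total expectations yields
\[
\frac{\alpha}{2}\E\|\mathcal G_{\eta,\alpha}(x_t)\|^2\leq \E f_\eta(x_t)-\E f_\eta(x_{t+1})+\alpha\widehat{\mathcal E}_t .
\]
Integrability of \(f_\eta(x_t)\) should be checked here. It follows because \(f_\eta\) is Lipschitz on \(\X\) with constant at most \(\bar L\) by \eqref{eq:local_smoothed_gradient_bound}, and \(\|x_{t}-x_0\|\) has finite second moment thanks to the bounded second moments of the estimators.

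We then sum over \(t=0,\ldots,T-1\) and telescope. Lemma~\ref{lem:smoothing_regularities} gives \(\E f_\eta(x_T)\geq f_{\inf}\), so the left-hand sum is at most \(\Delta_\eta+\alpha\sum_t\widehat{\mathcal E}_t\). Multiplying by \(2/(\alpha T)\) produces the right-hand side of \eqref{eq:fedvrzo_convergence_bound}. The first equality in \eqref{eq:fedvrzo_convergence_bound} is the randomized-output identity \eqref{eq:random_output_identity} applied with the nonnegative measurable function \(\phi=\|\mathcal G_{\eta,\alpha}(\cdot)\|^2\), using that \(R\) is independent of the trajectory.

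All of the substantive work sits in Lemma~\ref{lem:fedvrzo_direction_accuracy}, which is assumed here. Within the theorem itself, the only point requiring care is the conditioning step. The bound on \(\E[\|d_t-h_t\|^2\mid\mathcal F_{t,0}]\) must be applied before taking the outer expectation, and the round residual must be nonrandom. This holds because \(N_t\), \(\rho_t\), \(\eta\), and \(\zeta_\eta^2\) are all fixed parameters. Once these are in place, the rest is a routine telescoping argument.
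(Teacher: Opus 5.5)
Your proposal is correct and follows the paper's proof essentially step for step: you apply Lemma~\ref{lem:inexact_projected_descent} pathwise at $x_t$, condition on $\mathcal F_{t,0}$ and invoke Lemma~\ref{lem:fedvrzo_direction_accuracy}, then take total expectation, telescope using $f_\eta(x_T)\geq f_{\inf}$, and finish with the randomized-output identity. Your check that $f_\eta(x_t)$ is integrable, and your observation that $\widehat{\mathcal E}_t$ is deterministic, are small points of rigor that the paper leaves implicit.
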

\begin{proof}
	See Appendix~\ref{app:fedvrzo_proofs}.
\end{proof}

\begin{remark}
\label{rem:fedvrzo_finite_time}
\normalfont
By \eqref{eq:fedvrzo_round_residual}, independent sample--direction pairs
produce the single sampling factor \(n/N_t\). This factor enters both the
local-drift and aggregate sampling errors, while \(\rho_t^2\) controls the
displacement of the local query trajectories.
\end{remark}

\begin{corollary}
\label{cor:fedvrzo_asymptotic}
\normalfont
If the conditions of Theorem~\ref{thm:fedvrzo_convergence} hold and
\(T^{-1}\sum_{t=0}^{T-1}\widehat{\mathcal E}_t\to0\). Then
\(\E\|\mathcal G_{\eta,\alpha}(x_R)\|^2\to0\).
\end{corollary}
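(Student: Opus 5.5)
The corollary follows directly from Theorem~\ref{thm:fedvrzo_convergence}; there is nothing new to estimate. I will take the bound \eqref{eq:fedvrzo_convergence_bound}, which holds for every horizon \(T\) under the parameter regime \eqref{eq:common_parameter_regime}, and pass to the limit \(T\to\infty\) term by term.

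\textbf{First term.} The quantities \(\Delta_\eta\) and \(\alpha\) are fixed under \eqref{eq:common_parameter_regime}, since \(\eta_t=\eta\) and \(\alpha_t=\alpha\) for all \(t\). The initial gap is finite: \(f_\eta(x_0)\) is finite because \(f\) is locally Lipschitz with an integrable point evaluation (Assumption~\ref{ass:regularity}), and \(f_\eta\geq f_{\inf}>-\infty\) by Lemma~\ref{lem:smoothing_regularities}. Hence \(0\leq\Delta_\eta<\infty\), and \(2\Delta_\eta/(\alpha T)\to0\).

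\textbf{Second term.} This term is \(2T^{-1}\sum_{t=0}^{T-1}\widehat{\mathcal E}_t\). It tends to zero by the hypothesis of the corollary.

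\textbf{Conclusion.} The left-hand side \(\E\|\mathcal G_{\eta,\alpha}(x_R)\|^2\) is nonnegative. It is bounded above by a quantity that tends to zero, so it converges to zero by the squeeze argument.

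The only point needing care is how the parameters are indexed. The parameter sequences may depend on the horizon, and the theorem must be applied separately for each \(T\) with its own admissible schedule. The condition \(\chi_K\rho_t\leq\tau_{\mathrm l}\) must therefore hold for every \(T\), and \(\eta\) and \(\alpha\) must not vary with \(T\), so that \(\Delta_\eta/\alpha\) is a fixed constant. I would state this explicitly. Beyond this bookkeeping I expect no obstacle.
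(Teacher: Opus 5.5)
Your proposal is correct and matches the paper's own argument: since \(\alpha>0\) and \(\Delta_\eta<\infty\) are fixed, the first term \(2\Delta_\eta/(\alpha T)\) in \eqref{eq:fedvrzo_convergence_bound} vanishes, and the second term vanishes by the hypothesis, so the nonnegative left-hand side tends to zero. Your remark on horizon-dependent schedules is a harmless clarification that the paper leaves implicit.
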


\begin{remark}
\label{rem:fedvrzo_asymptotic_schedule}
\normalfont
The residual condition in Corollary~\ref{cor:fedvrzo_asymptotic} follows if
\(T^{-1}\sum_{t=0}^{T-1}\rho_t^2\to0\) and
\(T^{-1}\sum_{t=0}^{T-1}N_t^{-1}\to0\).
The condition on \(\rho_t\) is unnecessary when \(K=1\). For \(K\geq2\),
\begin{equation}
\rho_t\leq\frac{\rho_0}{\sqrt{t+1}},\qquad N_t\geq N_0(t+1)
\label{eq:fedvrzo_explicit_schedule}
\end{equation}
with \(0<\rho_0\leq\tau_{\mathrm l}\) and \(N_0>0\) implies
\begin{equation}
\E\|\mathcal G_{\eta,\alpha}(x_R)\|^2
=\mathcal O\!\left(\frac{\log T}{T}\right).
\label{eq:fedvrzo_log_rate}
\end{equation}
\end{remark}

For a prescribed total of \(T\) communication rounds, FedVRZO employs the
common scaling factor \(\mathfrak s_T\) in
\eqref{eq:common_horizon_scale}. Its round-independent sampling budget and
local-step scale are
\begin{subequations}
\begin{align}
N_T&:=\lceil\beta_N n\mathfrak s_T\rceil,
\label{eq:fedvrzo_horizon_schedule}\\
\rho_t&:=\frac{\rho_0}{\sqrt T},\qquad
0<\rho_0\leq\tau_{\mathrm l}.
\label{eq:fedvrzo_horizon_rho_schedule}
\end{align}
\end{subequations}
where \(\beta_N>0\) is a constant. The budget is kept fixed across rounds,
so that \(N_t=N_T\) for \(t=0,\ldots,T-1\). The method-specific sampling factor
\(\mathfrak S_N\) and residual constant \(\mathfrak R_N\) are
given by
\begin{align}
\mathfrak S_N&:=\frac{C_{\mathbb S}}{\beta_N},\nonumber\\
\mathfrak R_N&:=\chi_KC\rho_0^2
(\bar L^2+\zeta_\eta^2+\mathfrak L_1\mathfrak S_N)
+\mathfrak L_1\mathfrak S_N.
\label{eq:fedvrzo_residual_constant}
\end{align}

\begin{corollary}[Convergence rate of FedVRZO]
\label{cor:fedvrzo_horizon_rate}
\normalfont
Under the conditions of Theorem~\ref{thm:fedvrzo_convergence}, set
\(\alpha=L_\eta^{-1}\) and adopt the choices in
\eqref{eq:fedvrzo_horizon_schedule} and
\eqref{eq:fedvrzo_horizon_rho_schedule}. Then
\begin{equation}
\E\|\mathcal G_{\eta,1/L_\eta}(x_R)\|^2
\leq\frac{2L_\eta\Delta_\eta+2\mathfrak R_N}{T}.
\label{eq:fedvrzo_horizon_rate}
\end{equation}
\end{corollary}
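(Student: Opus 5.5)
The plan is to specialize Theorem~\ref{thm:fedvrzo_convergence} to the round-independent parameters in \eqref{eq:fedvrzo_horizon_schedule}--\eqref{eq:fedvrzo_horizon_rho_schedule}. The argument then reduces to showing that each residual \(\widehat{\mathcal E}_t\) is at most \(\mathfrak R_N/T\).

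\textbf{Step 1: verify the parameter regime.} With \(\alpha=L_\eta^{-1}\) the stepsize condition \(0<\alpha\leq L_\eta^{-1}\) holds trivially. Since \(\rho_t=\rho_0/\sqrt T\leq\rho_0\leq\tau_{\mathrm l}\), the condition \(\chi_K\rho_t\leq\tau_{\mathrm l}\) also holds. Hence \eqref{eq:common_parameter_regime} is satisfied and Theorem~\ref{thm:fedvrzo_convergence} applies. Its first term becomes \(2\Delta_\eta/(\alpha T)=2L_\eta\Delta_\eta/T\).

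\textbf{Step 2: bound the sampling scales.} From \(N_T\geq\beta_N n\mathfrak s_T\) and \eqref{eq:fedvrzo_local_variance_scale},
\[
\widehat\sigma_{\mathrm l,t}^2=\frac{C_{\mathbb S}n\mathfrak L_1}{N_T}\leq\frac{\mathfrak L_1\mathfrak S_N}{\mathfrak s_T}\leq\mathfrak L_1\mathfrak S_N,
\]
where the last step uses \(\mathfrak s_T\geq1\). For the aggregate scale, write \(\mathfrak L_2=\varkappa\mathfrak L_1\) and use \(\mathfrak s_T\geq\varkappa T/K\):
\[
\frac{\widehat\sigma_{\mathrm g,t}^2}{K}\leq\frac{C_{\mathbb S}\varkappa\mathfrak L_1}{\beta_N\mathfrak s_T K}\leq\frac{\mathfrak L_1\mathfrak S_N}{T}.
\]
This is the only step needing care: \(\mathfrak s_T\) has to absorb both the factor \(\varkappa\) and the factor \(1/K\), which is exactly why the horizon scale \eqref{eq:common_horizon_scale} is defined as it is. If \(\varkappa T/K<1\), then \(\mathfrak s_T=1\) and the inequality \(\varkappa/K\leq1/T\) still holds, so both cases work.

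\textbf{Step 3: assemble the residual.} Substituting into \eqref{eq:fedvrzo_round_residual} with \(\rho_t^2=\rho_0^2/T\) gives
\[
\widehat{\mathcal E}_t\leq\frac{\chi_KC\rho_0^2(\bar L^2+\zeta_\eta^2+\mathfrak L_1\mathfrak S_N)+\mathfrak L_1\mathfrak S_N}{T}=\frac{\mathfrak R_N}{T}.
\]
Averaging over \(t\) yields \(\frac2T\sum_{t=0}^{T-1}\widehat{\mathcal E}_t\leq2\mathfrak R_N/T\). Combined with Step 1 and \eqref{eq:random_output_identity}, this gives \eqref{eq:fedvrzo_horizon_rate}.

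The argument mirrors Corollary~\ref{cor:fedzoo_horizon_rate}, with the single factor \(C_{\mathbb S}n/N_T\) taking the place of \(1/B_T+C_{\mathbb S}n/Q_T\).
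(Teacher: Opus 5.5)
Your proposal is correct and follows the paper's proof essentially line by line. You use the same bounds \(\widehat\sigma_{\mathrm l,t}^2\leq\mathfrak L_1\mathfrak S_N\) and \(\widehat\sigma_{\mathrm g,t}^2/K\leq\mathfrak L_1\mathfrak S_N/T\), obtained from \(\mathfrak s_T\geq\max\{1,\varkappa T/K\}\), assemble them into \(\widehat{\mathcal E}_t\leq\mathfrak R_N/T\), and substitute into Theorem~\ref{thm:fedvrzo_convergence} with \(\alpha=1/L_\eta\). Your explicit check of the parameter regime is a harmless addition. Your case split on \(\varkappa T/K<1\) is also harmless but redundant, since \(\mathfrak s_T\geq\varkappa T/K\) already covers it.
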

\begin{proof}
	See Appendix~\ref{app:fedvrzo_proofs}.
\end{proof}

\begin{corollary}[Complexity of FedVRZO]
\label{cor:fedvrzo_complexity}
\normalfont
Under the conditions and parameter choices of
Corollary~\ref{cor:fedvrzo_horizon_rate},
\begin{equation}
\E\|\mathcal G_{\eta,1/L_\eta}(x_R)\|
\leq
\sqrt{\frac{2L_\eta\Delta_\eta+2\mathfrak R_N}{T}}.
\label{eq:fedvrzo_average_mapping_norm}
\end{equation}
Thus, the expected mapping norm at the randomized output is no larger than \(\epsilon\) when
\(T\geq(2L_\eta\Delta_\eta+2\mathfrak R_N)\epsilon^{-2}\). The corresponding
communication requirements are \(\mathcal O(M\epsilon^{-2})\) uplink vectors
and \(\mathcal O(\epsilon^{-2})\) server broadcasts. The number of stochastic
function evaluations satisfies
\begin{subequations}
\begin{align}
\mathcal N_{\mathrm{FedVRZO}}(T)
&=2MKT N_T
=\mathcal O\!\left(MKnT+Mn\varkappa T^2\right),
\label{eq:fedvrzo_query_complexity_in_T}\\
\mathcal N_{\mathrm{FedVRZO}}(\epsilon)
&=\mathcal O\!\left(MKn\epsilon^{-2}
+Mn\varkappa\epsilon^{-4}\right).
\label{eq:fedvrzo_query_complexity_in_epsilon}
\end{align}
\end{subequations}
\end{corollary}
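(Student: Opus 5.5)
The plan is to derive all three claims of Corollary~\ref{cor:fedvrzo_complexity} from Corollary~\ref{cor:fedvrzo_horizon_rate} by elementary manipulations and a count of function evaluations.

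\emph{Step 1: bound on the expected mapping norm.} By Jensen's inequality (equivalently Cauchy--Schwarz), $\E\|\mathcal G_{\eta,1/L_\eta}(x_R)\|\le(\E\|\mathcal G_{\eta,1/L_\eta}(x_R)\|^2)^{1/2}$. Combined with \eqref{eq:fedvrzo_horizon_rate}, this yields \eqref{eq:fedvrzo_average_mapping_norm}. The constant $\mathfrak R_N$ depends only on $\rho_0,\bar L,\zeta_\eta,\mathfrak L_1,\beta_N$ and is independent of $T$, because the schedule makes $\rho_t^2=\rho_0^2/T$ and $\widehat\sigma^2_{\mathrm l,t},\widehat\sigma^2_{\mathrm g,t}/K$ scale like $1/T$. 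Hence requiring the right-hand side to be at most $\epsilon$ gives $T\ge(2L_\eta\Delta_\eta+2\mathfrak R_N)\epsilon^{-2}$.

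\emph{Step 2: communication count.} Each round consists of $M$ uplink $n$-vectors $d_{i,t}$ and one broadcast. With $T=\lceil(2L_\eta\Delta_\eta+2\mathfrak R_N)\epsilon^{-2}\rceil=\mathcal O(\epsilon^{-2})$, this gives $\mathcal O(M\epsilon^{-2})$ uplinks and $\mathcal O(\epsilon^{-2})$ broadcasts.

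\emph{Step 3: query count.} Each local direction in \eqref{eq:fedvrzo_local_direction} uses $2N_T$ evaluations. Summing over $K$ local steps, $M$ clients and $T$ rounds gives $2MKTN_T$. Using $\lceil a\rceil\le a+1$ and $\max\{1,b\}\le1+b$, we get
\[
N_T\le\beta_N n\Bigl(1+\frac{\varkappa T}{K}\Bigr)+1=\mathcal O\Bigl(n+\frac{n\varkappa T}{K}\Bigr).
\]
Multiplying by $2MKT$ gives \eqref{eq:fedvrzo_query_complexity_in_T}. Substituting $T=\mathcal O(\epsilon^{-2})$ then gives \eqref{eq:fedvrzo_query_complexity_in_epsilon}.

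\emph{Main obstacle.} The only delicate point is confirming that the constants hidden in the $\mathcal O(\cdot)$ do not secretly depend on $T$ through $\mathfrak s_T$. This requires checking in Corollary~\ref{cor:fedvrzo_horizon_rate} the following scaling facts:
\begin{itemize}
\item when $\mathfrak s_T=\varkappa T/K$, the aggregate sampling term satisfies $\widehat\sigma^2_{\mathrm g}/K\le C_{\mathbb S}\mathfrak L_2/(\beta_N\varkappa T)=\mathfrak L_1\mathfrak S_N/T$, using $\mathfrak L_2=\varkappa\mathfrak L_1$;
\item when $\mathfrak s_T=1$, we have $\varkappa T/K\le1$, so the same bound holds;
\item the local term $\widehat\sigma^2_{\mathrm l}\le\mathfrak L_1\mathfrak S_N$ is multiplied by $\rho_0^2/T$.
\end{itemize}
Since that corollary is assumed, the remaining work is only the ceiling bookkeeping. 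Along the way, I would also note that the rounding $+1$ in $N_T$ is absorbed into the $MKnT$ term.
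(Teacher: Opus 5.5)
Your proposal is correct and follows the paper's proof essentially line by line. The shared steps are: Jensen's inequality applied to the squared bound of Corollary~\ref{cor:fedvrzo_horizon_rate}; the bound $N_T=\mathcal O(n\mathfrak s_T)$ with $\mathfrak s_T\le 1+\varkappa T/K$; multiplication by $2MKT$; and substitution of $T=\mathcal O(\epsilon^{-2})$. Your use of $\lceil a\rceil\le a+1$ instead of the paper's $N_T\le(\beta_N+1)n\mathfrak s_T$ is only cosmetic.

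One small imprecision: in Step~1, $\widehat\sigma^2_{\mathrm l,t}$ itself need not decay like $1/T$, since it is only bounded by $\mathfrak L_1\mathfrak S_N$ when $\mathfrak s_T=1$. It is the product with $\rho_0^2/T$ that decays like $1/T$, as your final bullet correctly states.
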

\begin{proof}
	See Appendix~\ref{app:fedvrzo_proofs}.
\end{proof}

The \(\mathcal O\)-notation in Corollaries~\ref{cor:fedzoo_complexity} and
\ref{cor:fedvrzo_complexity} treats the fixed smoothing radius and the
problem-dependent rate constants as constants. The displayed expressions
expose the sampling dependence on \(M\), \(K\), \(n\), \(\varkappa\), and
\(\epsilon\); the additional dimension dependence in \(L_\eta\) remains in
the rate constants. For uniform client weights, \(\varkappa=1/M\), and the
second terms in \eqref{eq:fedzoo_query_complexity_in_epsilon} and
\eqref{eq:fedvrzo_query_complexity_in_epsilon} reduce to
\(\mathcal O(n\epsilon^{-6}/(MK))\) and
\(\mathcal O(n\epsilon^{-4})\), respectively.

\begin{figure*}[!t]
	\centering
	\subfloat[Airplane]{
		\includegraphics[width=0.182\textwidth]
		{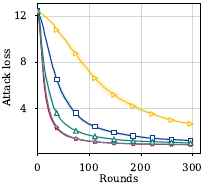}
	}
	\subfloat[Automobile]{
		\includegraphics[width=0.182\textwidth]
		{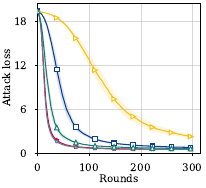}
	}
	\subfloat[Bird]{
		\includegraphics[width=0.182\textwidth]
		{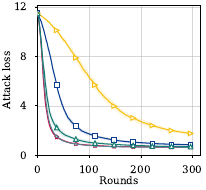}
	}
	\subfloat[Cat]{
		\includegraphics[width=0.182\textwidth]
		{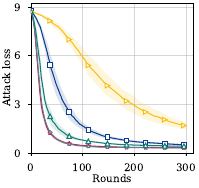}
	}
	\subfloat[Deer]{
		\includegraphics[width=0.182\textwidth]
		{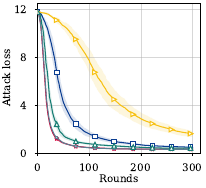}
	}\\[1mm]
	\subfloat[Dog]{
		\includegraphics[width=0.182\textwidth]
		{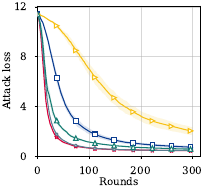}
	}
	\subfloat[Frog]{
		\includegraphics[width=0.182\textwidth]
		{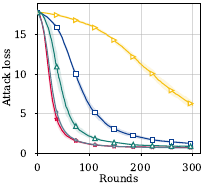}
	}
	\subfloat[Horse]{
		\includegraphics[width=0.182\textwidth]
		{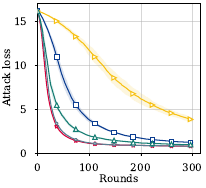}
	}
	\subfloat[Ship]{
		\includegraphics[width=0.182\textwidth]
		{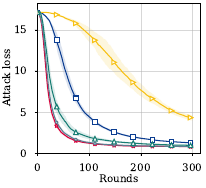}
	}
	\subfloat[Truck]{
		\includegraphics[width=0.182\textwidth]
		{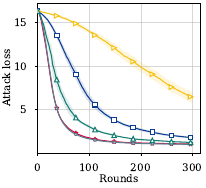}
	}

	\includegraphics[width=0.52\textwidth]{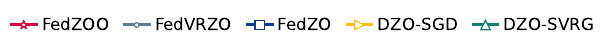}
	\vspace{-10pt}
	\caption{Test attack loss \(\ell_{\mathrm{atk}}\) versus communication rounds
		for the ten CIFAR-10 source classes. }\vspace{-10pt}
	\label{fig:cifar10_attack_loss_rounds}
\end{figure*}

\subsection{Sampling Efficiency}

Under a budget of \(P\) symmetric two-point differences per client and local query point,
FedZOO evaluates \(Q\) spherical directions using a
shared minibatch of \(B\) samples with \(BQ=P\), whereas FedVRZO draws \(N=P\)
independent sample--direction pairs. This budget corresponds to \(2P\) function
evaluations because each difference evaluates the sample function twice.
For \(P\geq C_{\mathbb S}n\),
\(\Psi_{\mathrm{FedZOO}}(P)\) denotes the smallest sampling-error factor in
\eqref{eq:fedzoo_single_step_error_bound} over the allocations satisfying
\(BQ=P\), while \(\Psi_{\mathrm{FedVRZO}}(P)\) denotes the corresponding
factor of FedVRZO with \(N=P\). These two factors are given by
\begin{equation}
\begin{aligned}
\Psi_{\mathrm{FedZOO}}(P)
&:=\inf_{\substack{B,Q>0\\BQ=P}}
\left(\frac1B+\frac{C_{\mathbb S}n}{Q}\right)
=2\sqrt{\frac{C_{\mathbb S}n}{P}},\\
\Psi_{\mathrm{FedVRZO}}(P)&:=\frac{C_{\mathbb S}n}{P}.
\end{aligned}
\label{eq:fixed_budget_sampling_factors}
\end{equation}

\begin{remark}[Interpretation of the sampling factors]
\normalfont
The continuous relaxation of the FedZOO allocation is
\(B^\star=\sqrt{P/(C_{\mathbb S}n)}\) and
\(Q^\star=\sqrt{C_{\mathbb S}nP}\). The condition
\(P\geq C_{\mathbb S}n\) ensures \(B^\star\geq1\). Integer budgets preserve
the same scaling after rounding, up to a constant-factor change in the query
budget. The ratio between the two factors is
\begin{equation}
\frac{\Psi_{\mathrm{FedVRZO}}(P)}{\Psi_{\mathrm{FedZOO}}(P)}
=\frac12\sqrt{\frac{C_{\mathbb S}n}{P}}.
\label{eq:fixed_budget_sampling_ratio}
\end{equation}
Thus, the optimized FedZOO factor decreases as \(P^{-1/2}\), whereas the
FedVRZO factor decreases as \(P^{-1}\). This comparison concerns only the
sampling component at a fixed local query point; the local-trajectory term
remains controlled by
\(\rho_t\), \(K\), and the smoothed-gradient heterogeneity \(\zeta_\eta^2\).
\end{remark}

At the round level, Lemmas~\ref{lem:fedzoo_sampling_error} and
\ref{lem:fedvrzo_sampling_error} give the respective aggregate sampling-error
bounds
\begin{equation}
\frac{\mathfrak L_2}{K}\Psi_{\mathrm{FedZOO}}(P)
\quad\text{and}\quad
\frac{\mathfrak L_2}{K}\Psi_{\mathrm{FedVRZO}}(P).
\label{eq:fixed_budget_aggregate_errors}
\end{equation}
Since \(\varkappa=\mathfrak L_2/\mathfrak L_1\), requiring each bound in
\eqref{eq:fixed_budget_aggregate_errors} to be no larger than
\(\mathfrak L_1/T\) and solving for \(P\) gives, in the regime
\(\varkappa T/K\geq1\),
\begin{equation}
P_{\mathrm{FedZOO}}\geq4C_{\mathbb S}n
\left(\frac{\varkappa T}{K}\right)^2,\quad
P_{\mathrm{FedVRZO}}\geq C_{\mathbb S}n\frac{\varkappa T}{K}.
\label{eq:fixed_budget_required_queries}
\end{equation}
Multiplying these budgets by \(2MKT\) recovers the \(T^3\) and \(T^2\)
sampling terms in \eqref{eq:fedzoo_query_complexity_in_T} and
\eqref{eq:fedvrzo_query_complexity_in_T}, respectively.

\section{Numerical Experiments}
\label{sec:experiments}
The experiments evaluate FedZOO and FedVRZO on black-box
attacks \cite{chen2017zoo,ilyas2018blackbox} against representative model-averaging ZO baselines. We report the attack
loss across communication rounds.

\subsection{Attack Problem and Experimental Protocol}

We evaluate the algorithms on CIFAR-10 using a pretrained deep neural network
with a clean test accuracy of \(83.51\%\). For each source class
\(c\in\{0,\ldots,9\}\), the corresponding experiment uses the training and
test images that are correctly classified as \(c\) before the attack and
optimizes one perturbation shared by all these images. For an image
\(z\in[-1/2,1/2]^n\) and perturbation \(\delta\), the attacked input is
\begin{equation}
 \mathsf T(z,\delta)
 :=\frac{1}{2}\tanh\!\left(\operatorname{arctanh}(2z)+\delta\right).
\label{eq:attack_transform}
\end{equation}
The transform is applied componentwise. At a boundary pixel,
\(\operatorname{arctanh}(\pm1)\) is interpreted by its limiting value, so the
corresponding transformed coordinate remains fixed at \(\pm1/2\).
Let \(s_j(\cdot)\) denote the logit of class \(j\). The attack component and
the sample objective are given by
\begin{equation}
\begin{aligned}
 \ell_{\mathrm{atk}}(\delta;z,c)
 &:={\left[s_c(\mathsf T(z,\delta))
 -\max_{j\neq c}s_j(\mathsf T(z,\delta))\right]}_+,\\
 \Phi_c(\delta;z)
 &:=\ell_{\mathrm{atk}}(\delta;z,c)
 +\lambda_{\mathrm{atk}}\|\mathsf T(z,\delta)-z\|^2.
\end{aligned}
\label{eq:attack_objective}
\end{equation}
Here, \([r]_+:=\max\{r,0\}\). For each client, \(z\) plays the role of the
random sample \(\xi_i\), and \(\Phi_c\) specifies the corresponding sample
function in \eqref{eq:intro_problem}. We set \(\lambda_{\mathrm{atk}}=1\), initialize \(\delta_0=0\), and constrain the
tanh-space perturbation by
\begin{equation}
 \X=\{\delta:\|\delta\|_{\infty}\leq1\}.
\label{eq:attack_constraint}
\end{equation}
The reported attack loss is the test average of
\(\ell_{\mathrm{atk}}\).

For each source class, the training images are randomly divided into five
nonoverlapping client datasets. All clients participate in every round, with
aggregation weights proportional to their local sample counts. All methods
use the same client split and zero initialization for a given seed, and the
curves report the mean over five seeds. Each run has \(300\) communication
rounds and uses \(\eta=10^{-2}\). FedZOO and FedVRZO set \(K=3\),
\(\gamma=10^{-3}\), and \(\alpha=10^{-2}\), with \((B,Q)=(32,5)\) and
\(N=160\), respectively.
The baselines are FedZO \cite{fedzo2022}, DZO-SGD, and DZO-SVRG. They use the
same attack objective, constraint, data partitions, and full-participation
protocol as the proposed methods. 

\subsection{Comparison with Model-Averaging Baselines}

Figure~\ref{fig:cifar10_attack_loss_rounds} compares the five methods under a
common communication horizon. At round \(50\), the average attack
loss is \(1.760\) for FedZOO and \(1.846\) for FedVRZO. The corresponding
values are \(3.381\), \(7.461\), and \(12.958\) for DZO-SVRG, FedZO, and
DZO-SGD. At round \(300\), FedVRZO and FedZOO reach average losses of
\(0.639\) and \(0.654\), respectively, compared with \(0.771\) for the
DZO-SVRG baseline. The sustained decrease in attack loss is consistent with
the finite-time stationarity guarantees established for the proposed
direction-aggregation recursions.

\section{Conclusion}
\label{sec:conclusion}

In this paper, we developed FedZOO and FedVRZO for constrained nonsmooth nonconvex stochastic optimization in federated settings.
FedZOO aggregates zeroth-order directions evaluated along projected local query trajectories and 
performs a single weighted projection at the server. FedVRZO preserves the same direction-aggregation 
recursion while constructing each local estimate from independent sample--direction pairs, 
thereby improving sampling efficiency. 
For both algorithms, we established nonasymptotic bounds on the expected squared projected gradient mapping 
and characterized their communication and function-query complexities. 
Numerical experiments on black-box adversarial attacks illustrated 
the empirical performance of the proposed methods.

\appendices
\section{Proofs of Auxiliary Results}
\label{app:common_tools}

\noindent\emph{Proof of Lemma~\ref{lem:smoothing_regularities}:}
We first establish the result for an arbitrary client \(i\). The integrability
condition at \(x^\circ\), together
with \eqref{eq:sample_lipschitz}, implies that \(F_i(x,\xi_i)\) is integrable
at every \(x\in\X_{\eta_0}\). Indeed,
\[
 |F_i(x,\xi_i)|
 \leq |F_i(x^\circ,\xi_i)|
 +\widetilde L_i(\xi_i)\|x-x^\circ\|,
\]
and \(\E[\widetilde L_i(\xi_i)]\leq L_i\) by Cauchy--Schwarz. Thus, the preceding
bound shows that \(\E[|F_i(x,\xi_i)|]<\infty\) for every
\(x\in\X_{\eta_0}\), so the local objective \(f_i\) is finite throughout
\(\X_{\eta_0}\). For
\(x,y\in\X_{\eta_0}\), Jensen's inequality and
\eqref{eq:sample_lipschitz} give
\begin{align}
 |f_i(x)-f_i(y)|
 &\leq \E[|F_i(x,\xi_i)-F_i(y,\xi_i)|]\nonumber\\
 &\leq \E[\widetilde L_i(\xi_i)]\|x-y\|
 \leq L_i\|x-y\|.
\label{eq:appendix_local_lipschitz}
\end{align}

Then, we transfer the Lipschitz estimate in
\eqref{eq:appendix_local_lipschitz} to the ball-smoothed function. For 
\(x,y\in\X\), the definition of \(f_{i,\eta}\) gives
\begin{align}
 |f_{i,\eta}(x)-f_{i,\eta}(y)|
 &\leq \E_u[|f_i(x+\eta u)-f_i(y+\eta u)|]\nonumber\\
 &\leq L_i\|x-y\|.
\label{eq:appendix_smoothed_lipschitz}
\end{align}
For an arbitrary \(x\in\X\), select any \(0<r<\eta_0-\eta\). Then
\(x+(\eta+r)\mathbb B\subseteq\X_{\eta_0}\), and \(f_i\) is
\(L_i\)-Lipschitz on this set by
\eqref{eq:appendix_local_lipschitz}. The standard uniform-ball smoothing
result \cite{marrinan2026zo} therefore implies that \(f_{i,\eta}\) is
continuously differentiable on \(x+r\mathbb B\) and that
\begin{equation}
 \nabla f_{i,\eta}(x)
 =\frac{n}{2\eta}\E_v
 \bigl[(f_i(x+\eta v)-f_i(x-\eta v))v\bigr].
\label{eq:appendix_smoothed_gradient_identity}
\end{equation}
Moreover, \eqref{eq:appendix_smoothed_lipschitz} holds on this neighborhood.
Consequently, the directional derivative along every unit vector \(z\)
satisfies
\begin{align*}
 |\langle\nabla f_{i,\eta}(x),z\rangle|
 &=\left|\lim_{s\to0}
 \frac{f_{i,\eta}(x+sz)-f_{i,\eta}(x)}{s}\right|\leq L_i.
\end{align*}
Taking the supremum over \(\|z\|=1\) gives
\eqref{eq:local_smoothed_gradient_bound}.

Using \eqref{eq:appendix_smoothed_gradient_identity}, for \(x,y\in\X\),
we define
\[
 \Theta_i(v):=f_i(x+\eta v)-f_i(x-\eta v)
 -f_i(y+\eta v)+f_i(y-\eta v).
\]
By \eqref{eq:appendix_local_lipschitz},
\begin{equation}
 |\Theta_i(v)|
 \leq2L_i\|x-y\|.
\label{eq:appendix_delta_bound}
\end{equation}
This bound implies that every unit vector \(z\) satisfies
\begin{align}
 &|\langle\nabla f_{i,\eta}(x)
 -\nabla f_{i,\eta}(y),z\rangle|\nonumber\\
 &\quad\leq\frac{nL_i}{\eta}\|x-y\|
 \E_v[|\langle v,z\rangle|]\nonumber\\
 &\quad\leq\frac{nL_i}{\eta}\|x-y\|
 \sqrt{\E_v[\langle v,z\rangle^2]}
 =\frac{\sqrt nL_i}{\eta}\|x-y\|.
\label{eq:appendix_local_smoothness_derivation}
\end{align}
The rotational symmetry gives
\(\E_v[\langle v,z\rangle^2]=1/n\). We take the supremum over unit \(z\)
to obtain \eqref{eq:local_smoothed_smoothness}. The weighted global gradient 
satisfies
\begin{align*}
 \|\nabla f_\eta(x)-\nabla f_\eta(y)\|
 &\leq\sum_{i=1}^{M}p_i
 \|\nabla f_{i,\eta}(x)-\nabla f_{i,\eta}(y)\|\\
 &\leq\frac{\sqrt n}{\eta}
 \left(\sum_{i=1}^{M}p_iL_i\right)\|x-y\|,
\end{align*}

For \(x\in\X\), \(u\in\mathbb B\), and \(0<\eta<\eta_0\),
\(x+\eta u\in\X_{\eta_0}\). Thus,
\begin{equation}
 f_\eta(x)=\E_u[f(x+\eta u)]\geq f_{\inf}.
\label{eq:appendix_smoothed_lower_bound}
\end{equation}
To bound the variation among the smoothed local gradients, we denote
\(\nabla f_{i,\eta}(x)\) by \(h_i\) and
\(\sum_i p_i h_i=\nabla f_\eta(x)\) by \(\bar h\). The weighted
square expansion gives
\begin{align}
 \sum_{i=1}^{M}p_i\|h_i-\bar h\|^2
 &=\sum_{i=1}^{M}p_i\|h_i\|^2-\|\bar h\|^2\nonumber\\
 &\leq\sum_{i=1}^{M}p_iL_i^2.
\label{eq:appendix_weighted_variance_identity}
\end{align}
Since the left-hand side is nonnegative, we take the supremum over
\(x\in\X\) to obtain \eqref{eq:heterogeneity_finite}.
\hfill\QED\par

\noindent\emph{Proof of Lemma~\ref{lem:common_local_trajectory}:}
If \(K=1\), then \(z_{i,t,0}=z_t\), so the two left-hand sides in
\eqref{eq:common_trajectory_bound} and \eqref{eq:common_drift_bound} are zero.
We now consider \(K\geq2\) at an arbitrary round \(t\) and denote the local
displacement and two gradient-related quantities by
\[
r_{i,k}:=z_{i,t,k}-z_t,\quad
h_{i,t}:=\nabla f_{i,\eta}(z_t),\quad
L_{i,\eta}:=\frac{\sqrt nL_i}{\eta}.
\]
For \(k=0,\ldots,K-1\), the conditional second moment of the displacement is
denoted by
\begin{equation}
\mathcal R_{i,k}:=\E[\|r_{i,k}\|^2\mid\mathcal F_{t,0}].
\label{eq:appendix_common_displacement_moment}
\end{equation}
Because \(z_{i,t,k}\in\X\), it is fixed by \(\projX\). Hence, for \(0\leq k<K-1\), projection
nonexpansivity gives, 
\begin{align}
 \|z_{i,t,k+1}-z_{i,t,k}\|
 &=\|\projX(z_{i,t,k}-\gamma_t\widehat g_{i,t,k})
 -\projX(z_{i,t,k})\|\nonumber\\
 &\leq\gamma_t\|\widehat g_{i,t,k}\|.
\label{eq:appendix_common_increment}
\end{align}
Since \(r_{i,0}=0\), we sum the increments and apply Cauchy--Schwarz to
obtain, for \(1\leq k\leq K-1\),
\begin{equation}
 \|r_{i,k}\|^2
 \leq\left(\sum_{j=0}^{k-1}
 \|z_{i,t,j+1}-z_{i,t,j}\|\right)^2
 \leq\gamma_t^2k\sum_{j=0}^{k-1}
 \|\widehat g_{i,t,j}\|^2.
\label{eq:appendix_common_local_path}
\end{equation}

The conditional zero-mean property in
\eqref{eq:common_estimator_conditions} eliminates the cross term in the
second moment of \(\widehat g_{i,t,j}\):
\begin{align}
\E[\|\widehat g_{i,t,j}\|^2\mid\mathcal F_{t,j}]
&=\|\nabla f_{i,\eta}(z_{i,t,j})\|^2
+\E[\|e_{i,t,j}\|^2\mid\mathcal F_{t,j}]\nonumber\\
&\leq\|\nabla f_{i,\eta}(z_{i,t,j})\|^2+\omega_{i,t}^2\nonumber\\
&\leq2\|h_{i,t}\|^2+\omega_{i,t}^2
+2L_{i,\eta}^2\|r_{i,j}\|^2.
\label{eq:appendix_common_direction_moment}
\end{align}
The last inequality uses \eqref{eq:local_smoothed_smoothness} and
\(\|a+b\|^2\leq2\|a\|^2+2\|b\|^2\).

Taking conditional expectation in \eqref{eq:appendix_common_local_path} and
using \eqref{eq:appendix_common_direction_moment} through the tower property
give the first inequality below. The second inequality follows from
\(k\leq K\):
\begin{align}
 \mathcal R_{i,k}
 &\leq\gamma_t^2k\sum_{j=0}^{k-1}
 \left(2\|h_{i,t}\|^2+\omega_{i,t}^2
 +2L_{i,\eta}^2\mathcal R_{i,j}\right)\nonumber\\
 &\leq A_{i,t}+c_{i,t}
 \sum_{j=0}^{k-1}\mathcal R_{i,j},
\label{eq:appendix_common_gronwall_recursion}
\end{align}
where
\[
A_{i,t}:=\gamma_t^2K^2(2\|h_{i,t}\|^2+\omega_{i,t}^2),
\qquad
c_{i,t}:=2\gamma_t^2K L_{i,\eta}^2.
\]
For the discrete Gronwall argument, we introduce
\(S_{i,k}:=A_{i,t}+c_{i,t}\sum_{j=0}^{k-1}\mathcal R_{i,j}\).
Then \(\mathcal R_{i,k}\leq S_{i,k}\), \(S_{i,1}=A_{i,t}\), and
\[
 S_{i,k+1}=S_{i,k}+c_{i,t}\mathcal R_{i,k}
 \leq(1+c_{i,t})S_{i,k}.
\]
The inequality follows from \(\mathcal R_{i,k}\leq S_{i,k}\) and
\(c_{i,t}\geq0\).
\begin{align}
 \mathcal R_{i,k}
 &\leq A_{i,t}(1+c_{i,t})^{k-1}\nonumber\\
 &\leq A_{i,t}\exp(c_{i,t}k)\nonumber\\
 &\leq e^{2\tau_{\mathrm l}^2}\gamma_t^2K^2
 (2\|h_{i,t}\|^2+\omega_{i,t}^2),
\label{eq:appendix_common_gronwall_bound}
\end{align}
The first inequality follows by induction from
\(S_{i,k+1}\leq(1+c_{i,t})S_{i,k}\), together with
\(\mathcal R_{i,k}\leq S_{i,k}\) and \(S_{i,1}=A_{i,t}\). Since
\(c_{i,t}\geq0\), the second inequality follows from
\(1+r\leq e^r\) for \(r\geq0\) and \(k-1\leq k\). The final inequality
follows from
\[
 c_{i,t}k\leq2\gamma_t^2K^2L_{i,\eta}^2
 \leq2\rho_t^2\leq2\tau_{\mathrm l}^2.
\]
The weighted variance identity at \(z_t\) gives
\begin{align}
 \sum_{i=1}^{M}p_i\|h_{i,t}\|^2
 &=\|h_t\|^2+
 \sum_{i=1}^{M}p_i\|h_{i,t}-h_t\|^2\nonumber\\
 &\leq\|h_t\|^2+\zeta_\eta^2.
\label{eq:appendix_common_client_gradient_moment}
\end{align}
We average \eqref{eq:appendix_common_gronwall_bound} over the clients and
local query points to obtain
\begin{align*}
 \frac1K\sum_{i=1}^{M}p_i\sum_{k=0}^{K-1}\mathcal R_{i,k}
 &\leq e^{2\tau_{\mathrm l}^2}\gamma_t^2K^2
 \left(2\sum_i p_i\|h_{i,t}\|^2+\omega_t^2\right)\\
 &\quad\leq2e^{2\tau_{\mathrm l}^2}\gamma_t^2K^2
 (\|h_t\|^2+\zeta_\eta^2+\omega_t^2),
\end{align*}
which is \eqref{eq:common_trajectory_bound} due to
\(C=2e^{2\tau_{\mathrm l}^2}\).

Then, we apply Jensen's inequality with weights \(p_i/K\) and invoke
\eqref{eq:local_smoothed_smoothness} to bound the drift term as
\begin{align}
 \|\mathfrak d_t\|^2
 &\leq\frac1K\sum_{i=1}^{M}p_i\sum_{k=0}^{K-1}
 \|\nabla f_{i,\eta}(z_{i,t,k})
 -\nabla f_{i,\eta}(z_t)\|^2\nonumber\\
 &\leq L_{\max,\eta}^2\frac1K
 \sum_{i=1}^{M}p_i\sum_{k=0}^{K-1}
 \|z_{i,t,k}-z_t\|^2.
\label{eq:appendix_common_drift_jensen}
\end{align}
We take conditional expectation in
\eqref{eq:appendix_common_drift_jensen} and apply
\eqref{eq:common_trajectory_bound}. The identity
\(\rho_t=\gamma_tK L_{\max,\eta}\) gives
\eqref{eq:common_drift_bound}.
\hfill\QED\par

\noindent\emph{Proof of Lemma~\ref{lem:inexact_projected_descent}:}
For notational convenience, we denote \(\nabla f_\eta(x)\) by \(h\), set
\(y=\projX(x-ah)\), and introduce
\[
\varphi_\ell(z):=\langle\ell,z-x\rangle+\frac{1}{2a}\|z-x\|^2.
\]
The definitions of \(x^+\) and \(y\) are equivalent to
\[
 x^+\in\arg\min_{z\in\X}\varphi_{\widehat d}(z),
 \qquad
 y\in\arg\min_{z\in\X}\varphi_h(z).
\]
We denote the direction error by \(e:=\widehat d-h\). Since
\(\varphi_{\widehat d}\) is
\(1/a\)-strongly convex and \(x^+\) is its constrained minimizer,
\begin{equation}
 \varphi_{\widehat d}(y)
 \geq\varphi_{\widehat d}(x^+)
 +\frac{1}{2a}\|y-x^+\|^2.
\label{eq:appendix_inexact_strong_convexity}
\end{equation}
The identity
\(\varphi_{\widehat d}(z)=\varphi_h(z)+\langle e,z-x\rangle\)
transforms \eqref{eq:appendix_inexact_strong_convexity} into
\begin{align}
 \varphi_h(x^+)-\varphi_h(y)
 &\leq-\frac{1}{2a}\|x^+-y\|^2
 -\langle e,x^+-y\rangle\nonumber\\
 &\leq\frac a2\|e\|^2,
\label{eq:appendix_inexact_model_comparison}
\end{align}
where the last inequality follows by completing the square.

The optimality condition for \(y=\projX(x-ah)\) is
\begin{equation}
 \left\langle h+\frac1a(y-x),z-y\right\rangle\geq0,
 \qquad z\in\X.
\label{eq:appendix_exact_projection_optimality}
\end{equation}
We set \(z=x\) in \eqref{eq:appendix_exact_projection_optimality} and obtain
\(\langle h,y-x\rangle\leq-a^{-1}\|y-x\|^2\), and hence
\begin{equation}
 \varphi_h(y)
 \leq-\frac{1}{2a}\|y-x\|^2
 =-\frac a2\|\mathcal G_{\eta,a}(x)\|^2.
\label{eq:appendix_exact_model_decrease}
\end{equation}
The \(L_\eta\)-smoothness of \(f_\eta\) and \(a\leq1/L_\eta\) imply
\begin{align}
 f_\eta(x^+)
 &\leq f_\eta(x)+\langle h,x^+-x\rangle
 +\frac{L_\eta}{2}\|x^+-x\|^2\nonumber\\
 &\leq f_\eta(x)+\varphi_h(x^+).
\label{eq:appendix_inexact_smoothness_step}
\end{align}
Combining \eqref{eq:appendix_inexact_model_comparison},
\eqref{eq:appendix_exact_model_decrease}, and
\eqref{eq:appendix_inexact_smoothness_step}, we can obtain the result.
\hfill\QED\par

\section{Proofs for FedZOO}
\label{app:fedzoo_proofs}

\noindent\emph{Proof of Lemma~\ref{lem:fedzoo_sampling_error}:}
For a single sample, we denote the gradient of the ball-smoothed sample
function by
\begin{equation}
\mathfrak G_{i,\eta}(x,\xi_i)
:=\frac{n}{2\eta}\E_v[
(F_i(x+\eta v,\xi_i)-F_i(x-\eta v,\xi_i))v].
\label{eq:appendix_fedzoo_sample_gradient}
\end{equation}
Fubini's theorem and \eqref{eq:smoothing_gradient_identity} give
\begin{equation}
 \E_{\xi_i}[\mathfrak G_{i,\eta}(x,\xi_i)]
 =\nabla f_{i,\eta}(x).
\label{eq:appendix_fedzoo_sample_unbiasedness}
\end{equation}
Because smoothing preserves the sample Lipschitz modulus, the argument used
for \eqref{eq:local_smoothed_gradient_bound}, now applied with \(\xi_i\)
fixed, gives
\begin{equation}
\|\mathfrak G_{i,\eta}(x,\xi_i)\|
\leq\widetilde L_i(\xi_i).
\label{eq:appendix_fedzoo_single_sample}
\end{equation}

We write the current data batch as
\(\mathcal B_{i,t,k}=\{\xi_{i,t,k}^{b}\}_{b=1}^{B_t}\). The conditional
expectation of \eqref{eq:fedzoo_local_direction} over the direction set gives
\begin{equation}
\bar g_{i,t,k}
=\frac1{B_t}\sum_{b=1}^{B_t}
\mathfrak G_{i,\eta}(x_{i,t,k},\xi_{i,t,k}^{b}).
\label{eq:appendix_fedzoo_batch_mean}
\end{equation}
For brevity, let \(\mathcal F=\mathcal F_{t,k}\), \(x=x_{i,t,k}\),
\(G_b:=\mathfrak G_{i,\eta}(x,\xi_{i,t,k}^{b})\),
\(\mu:=\nabla f_{i,\eta}(x)\), and \(W_b:=G_b-\mu\).
The iterate \(x\) is \(\mathcal F\)-measurable, and the conditional
distribution of each \(\xi_{i,t,k}^{b}\) is \(\mathcal D_i\). Therefore,
\eqref{eq:appendix_fedzoo_sample_unbiasedness} gives
\begin{align}
\E[G_b\mid\mathcal F]
&=\int \mathfrak G_{i,\eta}(x,\xi)\,\mathrm d\mathcal D_i(\xi)
=\mu,\quad \E[W_b\mid\mathcal F]=0,\nonumber\\
\E[\varepsilon_{i,t,k}\mid\mathcal F]
&=\frac1{B_t}\sum_{b=1}^{B_t}\E[W_b\mid\mathcal F]=0.
\label{eq:appendix_fedzoo_data_centered}
\end{align}

For \(b\neq c\), conditional independence gives
\(\E[\langle W_b,W_c\rangle\mid\mathcal F]
=\langle\E[W_b\mid\mathcal F],\E[W_c\mid\mathcal F]\rangle=0\).
Moreover, \eqref{eq:appendix_fedzoo_single_sample} and the conditional law of
\(\xi_{i,t,k}^{b}\) imply
\(\E[\|G_b\|^2\mid\mathcal F]
\leq\E[\widetilde L_i(\xi_{i,t,k}^{b})^2\mid\mathcal F]=L_i^2\).
Expanding the squared norm and using these zero cross moments gives
\begin{align}
\E[\|\varepsilon_{i,t,k}\|^2\mid\mathcal F]
&=\frac1{B_t^2}\sum_{b=1}^{B_t}
\E[\|W_b\|^2\mid\mathcal F]\nonumber\\
&=\frac1{B_t^2}\sum_{b=1}^{B_t}
\left(\E[\|G_b\|^2\mid\mathcal F]-\|\mu\|^2\right)\nonumber\\
&\leq\frac1{B_t^2}\sum_{b=1}^{B_t}
\E[\|G_b\|^2\mid\mathcal F]
\leq\frac{L_i^2}{B_t}.
\label{eq:appendix_fedzoo_data_variance}
\end{align}

We denote the sample-average function formed from \(\mathcal B_{i,t,k}\) by
\(\widehat F_{i,t,k}\) and define its direction estimate as
\begin{equation}
Y_{i,t,k}(v)=\frac{n}{2\eta}
[\widehat F_{i,t,k}(x_{i,t,k}+\eta v)
-\widehat F_{i,t,k}(x_{i,t,k}-\eta v)]v,
\label{eq:appendix_fedzoo_direction_variable}
\end{equation}
which implies that \(g_{i,t,k}\) is the average of \(Q_t\) independent copies of
\(Y_{i,t,k}\), conditioned on \(\mathfrak H_{t,k}\), and
\(\bar g_{i,t,k}\) is their conditional mean. The sample-average
function has Lipschitz modulus
\(\widehat L_{i,t,k}=B_t^{-1}\sum_b
\widetilde L_i(\xi_{i,t,k}^{b})\). The spherical
moment bound in \cite{marrinan2026zo} gives
\begin{equation}
\E[\|Y_{i,t,k}(v)\|^2\mid\mathfrak H_{t,k}]
\leq C_{\mathbb S}n\widehat L_{i,t,k}^2.
\label{eq:appendix_fedzoo_spherical_moment}
\end{equation}
As a result, the conditional mean and second moment of the direction error satisfy
\begin{subequations}
\begin{align}
 \E[\nu_{i,t,k}\mid\mathfrak H_{t,k}]&=0,
\label{eq:appendix_fedzoo_direction_centered}\\
 \E[\|\nu_{i,t,k}\|^2\mid\mathfrak H_{t,k}]
 &\leq\frac{C_{\mathbb S}n}{Q_t}\widehat L_{i,t,k}^2.
\label{eq:appendix_fedzoo_direction_variance_batch}
\end{align}
\end{subequations}
Jensen's inequality gives
\[
 \widehat L_{i,t,k}^2
 \leq\frac1{B_t}\sum_{b=1}^{B_t}
 \widetilde L_i(\xi_{i,t,k}^{b})^2.
\]
Next, we take conditional expectation in
\eqref{eq:appendix_fedzoo_direction_variance_batch} and obtain
\begin{equation}
 \E[\|\nu_{i,t,k}\|^2\mid\mathcal F_{t,k}]
 \leq\frac{C_{\mathbb S}nL_i^2}{Q_t},
\label{eq:appendix_fedzoo_direction_variance}
\end{equation}
which proves \eqref{eq:fedzoo_direction_error_zero_mean} and
\eqref{eq:fedzoo_direction_error_second_moment}.

To control the aggregate error, we denote the combined local sampling
increment by
\(e_{i,t,k}:=\varepsilon_{i,t,k}
+\nu_{i,t,k}\). Since
\(\varepsilon_{i,t,k}\) is
\(\mathfrak H_{t,k}\)-measurable, while
\eqref{eq:appendix_fedzoo_direction_centered} holds,
\begin{align}
 &\E[\langle\varepsilon_{i,t,k},
 \nu_{i,t,k}\rangle\mid\mathcal F_{t,k}]\nonumber\\
 &\quad=\E[\langle\varepsilon_{i,t,k},
 \E[\nu_{i,t,k}\mid\mathfrak H_{t,k}]\rangle
 \mid\mathcal F_{t,k}]=0.
\label{eq:appendix_fedzoo_mixed_orthogonality}
\end{align}
Combining \eqref{eq:appendix_fedzoo_data_variance},
\eqref{eq:appendix_fedzoo_direction_variance}, and
\eqref{eq:appendix_fedzoo_mixed_orthogonality}, we have
\begin{equation}
 \E[\|e_{i,t,k}\|^2\mid\mathcal F_{t,k}]
 \leq L_i^2\left(\frac1{B_t}
 +\frac{C_{\mathbb S}n}{Q_t}\right).
\label{eq:appendix_fedzoo_combined_local_variance}
\end{equation}

The variables \(e_{i,t,k}\) form conditionally centered sampling-error
increments. For two distinct clients at the same local step, conditional
independence given \(\mathcal F_{t,k}\) makes their cross moment zero. For
two different local steps \(k<\ell\), the earlier increment is
\(\mathcal F_{t,\ell}\)-measurable, whereas
\(\E[e_{j,t,\ell}\mid\mathcal F_{t,\ell}]=0\); hence the tower
property gives
\[
 \E[\langle e_{i,t,k},e_{j,t,\ell}\rangle
 \mid\mathcal F_{t,0}]=0.
\]
These orthogonality relations reduce the conditional second moment of
\(\widetilde\varepsilon_t\) to the sum of the individual second moments.
By \eqref{eq:appendix_fedzoo_combined_local_variance}, we have
\begin{align}
 \E[\|\widetilde\varepsilon_t\|^2
 \mid\mathcal F_{t,0}]
 &=\frac1{K^2}\sum_{i=1}^{M}p_i^2
 \sum_{k=0}^{K-1}
 \E[\|e_{i,t,k}\|^2\mid\mathcal F_{t,0}]\nonumber\\
 &\quad\leq\frac1{K^2}\sum_{i=1}^{M}p_i^2
 \sum_{k=0}^{K-1}L_i^2
 \left(\frac1{B_t}+\frac{C_{\mathbb S}n}{Q_t}\right)\nonumber\\
 &\quad=\frac{\sigma_{\mathrm g,t}^2}{K}.
\label{eq:appendix_fedzoo_aggregate_variance}
\end{align}
Thus, we can obtain the result in
\eqref{eq:fedzoo_aggregate_error_bound}.
\hfill\QED\par

\noindent\emph{Proof of Lemma~\ref{lem:fedzoo_direction_accuracy}:}
At each local step in FedZOO, the direction estimate admits the decomposition
\[
g_{i,t,k}
=\nabla f_{i,\eta}(x_{i,t,k})
+\varepsilon_{i,t,k}+\nu_{i,t,k}.
\]
Since \(\mathcal F_{t,k}\subseteq\mathfrak H_{t,k}\), the law of iterated
expectations applied to \eqref{eq:fedzoo_direction_error_zero_mean} gives
\(\E[\nu_{i,t,k}\mid\mathcal F_{t,k}]=0\). Combining this relation with
\eqref{eq:fedzoo_data_error_zero_mean}, we have
\[
 \E[g_{i,t,k}\mid\mathcal F_{t,k}]
 =\nabla f_{i,\eta}(x_{i,t,k}).
\]
Thus, the estimation error
\(e_{i,t,k}=\varepsilon_{i,t,k}+\nu_{i,t,k}\) has zero conditional mean and the conditional second moment
is bounded in~\eqref{eq:appendix_fedzoo_combined_local_variance}, so \eqref{eq:common_estimator_conditions} holds
with
\[
\omega_{i,t}^2=L_i^2\left(\frac1{B_t}
+\frac{C_{\mathbb S}n}{Q_t}\right),\qquad
\sum_i p_i\omega_{i,t}^2=\sigma_{\mathrm l,t}^2.
\]
We apply Lemma~\ref{lem:common_local_trajectory} with
\(z_t=x_t\), \(z_{i,t,k}=x_{i,t,k}\), and
\(\widehat g_{i,t,k}=g_{i,t,k}\) to obtain
\begin{align}
 \E[\|\mathfrak d_t\|^2\mid\mathcal F_{t,0}]
 &\leq C\rho_t^2
 (\|h_t\|^2+\zeta_\eta^2
 +\sigma_{\mathrm l,t}^2)
\label{eq:appendix_fedzoo_drift_prebound}
\end{align}
when \(K\geq2\). According to Lemma~\ref{lem:smoothing_regularities}, we have
\[
 \|h_t\|
 =\left\|\sum_i p_i\nabla f_{i,\eta}(x_t)\right\|
 \leq\sum_i p_iL_i=\bar L.
\]
By substituting this gradient bound into
\eqref{eq:appendix_fedzoo_drift_prebound}, we can obtain
\eqref{eq:fedzoo_drift_bound}. When \(K=1\), the same result follows from
\(\mathfrak d_t=0\) and the factor \(\chi_K=0\).
The decomposition in \eqref{eq:common_direction_decomposition} gives
\begin{align}
 \|d_t-h_t\|^2
 =\|\mathfrak d_t
 +\widetilde\varepsilon_t\|^2
 \leq2\|\mathfrak d_t\|^2
 +2\|\widetilde\varepsilon_t\|^2.
\label{eq:appendix_fedzoo_total_direction_split}
\end{align}
We take conditional expectation in
\eqref{eq:appendix_fedzoo_total_direction_split} and apply
\eqref{eq:fedzoo_drift_bound} and
\eqref{eq:fedzoo_aggregate_error_bound} to obtain
\begin{align*}
 \E[\|d_t-h_t\|^2
 \mid\mathcal F_{t,0}]
 &\leq2\chi_KC\rho_t^2
 (\bar L^2+\zeta_\eta^2+\sigma_{\mathrm l,t}^2)
 +\frac{2\sigma_{\mathrm g,t}^2}{K}\\
 &=2\mathcal E_t,
\end{align*}
which proves \eqref{eq:fedzoo_total_direction_bound}.
\hfill\QED\par

\noindent\emph{Proofs of Theorem~\ref{thm:fedzoo_convergence} and
Corollary~\ref{cor:fedzoo_asymptotic}:}
We apply Lemma~\ref{lem:inexact_projected_descent} with
\(x=x_t\), \(\widehat d=d_t\), and \(a=\alpha\) to obtain the one-step
projected-descent inequality.
Since \(x_t\) is \(\mathcal F_{t,0}\)-measurable, the
resulting one-step inequality can be conditioned on
\(\mathcal F_{t,0}\):
\begin{align}
 \E[f_\eta(x_{t+1})\mid\mathcal F_{t,0}]
 &\leq f_\eta(x_t)
 -\frac{\alpha}{2}
 \|\mathcal G_{\eta,\alpha}(x_t)\|^2\nonumber\\
 &\qquad+\frac{\alpha}{2}
 \E[\|d_t-h_t\|^2
 \mid\mathcal F_{t,0}]\nonumber\\
 &\leq f_\eta(x_t)
 -\frac{\alpha}{2}
 \|\mathcal G_{\eta,\alpha}(x_t)\|^2
 +\alpha\mathcal E_t,
\label{eq:appendix_fedzoo_conditional_descent}
\end{align}
where the last step uses \eqref{eq:fedzoo_total_direction_bound}. Then, we take
total expectation in \eqref{eq:appendix_fedzoo_conditional_descent} and sum
from \(t=0\) to \(T-1\) to obtain
\begin{align}
 \E[f_\eta(x_T)]
 &\leq f_\eta(x_0)
 -\frac{\alpha}{2}\sum_{t=0}^{T-1}
 \E\|\mathcal G_{\eta,\alpha}(x_t)\|^2
 +\alpha\sum_{t=0}^{T-1}\mathcal E_t.
\label{eq:appendix_fedzoo_telescoping}
\end{align}
According to Lemma~\ref{lem:smoothing_regularities}, we have \(f_\eta(x_T)\geq f_{\inf}\) almost surely. Rearranging
\eqref{eq:appendix_fedzoo_telescoping} gives
\begin{align*}
 \frac{\alpha}{2}\sum_{t=0}^{T-1}
 \E\|\mathcal G_{\eta,\alpha}(x_t)\|^2
 &\leq f_\eta(x_0)-f_{\inf}
 +\alpha\sum_{t=0}^{T-1}\mathcal E_t\\
 &=\Delta_\eta
 +\alpha\sum_{t=0}^{T-1}\mathcal E_t.
\end{align*}
Dividing by \(\alpha T/2\) and applying
\eqref{eq:random_output_identity} proves
\eqref{eq:fedzoo_convergence_bound}. Since \(\alpha>0\) and
\(\Delta_\eta<\infty\) are fixed, both terms on its right-hand side converge
to zero under the condition of Corollary~\ref{cor:fedzoo_asymptotic}; hence,
the corollary follows.
\hfill\QED\par

\noindent\emph{Verification of Remark~\ref{rem:fedzoo_asymptotic_schedule}:}
The conditions \(B_t,Q_t\geq1\) imply
\begin{equation}
 \sigma_{\mathrm l,t}^2
 \leq\mathfrak L_1(1+C_{\mathbb S}n)=:\mathfrak V_{B,Q}.
\label{eq:appendix_fedzoo_uniform_local_variance}
\end{equation}
According to \eqref{eq:fedzoo_aggregate_variance_scale} and the definition in \eqref{eq:common_round_residual},
we have
\begin{align}
 \frac1T\sum_{t=0}^{T-1}\mathcal E_t
 &\leq\chi_KC
 (\bar L^2+\zeta_\eta^2+\mathfrak V_{B,Q})
 \frac1T\sum_{t=0}^{T-1}\rho_t^2\nonumber\\
 &\quad+\frac{\mathfrak L_2}{K}
 \left[\frac1T\sum_{t=0}^{T-1}\frac1{B_t}
 +C_{\mathbb S}n\frac1T\sum_{t=0}^{T-1}\frac1{Q_t}\right].
\label{eq:appendix_fedzoo_average_residual}
\end{align}
Both terms on the right-hand side converge to zero under the sufficient
conditions stated in Remark~\ref{rem:fedzoo_asymptotic_schedule}.

Under \eqref{eq:fedzoo_explicit_schedule}, we introduce the constants
\[
 \mathfrak A_{B,Q}
 :=\mathfrak L_1\left(\frac1{B_0}
 +\frac{C_{\mathbb S}n}{Q_0}\right),\qquad
 \mathfrak B_{B,Q}
 :=\frac{\varkappa}{K}\mathfrak A_{B,Q}.
\]
For every \(t\geq0\),
\[
 \rho_t^2\leq\frac{\rho_0^2}{t+1},\qquad
 \sigma_{\mathrm l,t}^2
 \leq\frac{\mathfrak A_{B,Q}}{t+1},\qquad
 \frac{\sigma_{\mathrm g,t}^2}{K}
 \leq\frac{\mathfrak B_{B,Q}}{t+1},
\]
which imply
\begin{align}
 \mathcal E_t
 &\leq\frac{\chi_KC\rho_0^2}{t+1}
 \left(\bar L^2+\zeta_\eta^2
 +\frac{\mathfrak A_{B,Q}}{t+1}\right)
 +\frac{\mathfrak B_{B,Q}}{t+1}\nonumber\\
 &\leq\frac{\mathfrak C_{B,Q}}{t+1},
\label{eq:appendix_fedzoo_harmonic_residual}
\end{align}
where
\(\mathfrak C_{B,Q}:=\chi_KC\rho_0^2
(\bar L^2+\zeta_\eta^2+\mathfrak A_{B,Q})
+\mathfrak B_{B,Q}\). Finally, we substitute
\eqref{eq:appendix_fedzoo_harmonic_residual} into
\eqref{eq:fedzoo_convergence_bound} and use
\(\sum_{t=0}^{T-1}(t+1)^{-1}\leq1+\log T\) to obtain
\[
 \E\|\mathcal G_{\eta,\alpha}(x_R)\|^2
 \leq\frac{2\Delta_\eta}{\alpha T}
 +\frac{2\mathfrak C_{B,Q}(1+\log T)}{T},
\]
which proves \eqref{eq:fedzoo_log_rate}.
\hfill\QED\par

\noindent\emph{Proof of Corollary~\ref{cor:fedzoo_horizon_rate}:}
For brevity, we denote \(\mathfrak s_T\) by \(s\). Since
\(B_T=\lceil \beta_Bs\rceil\geq \beta_Bs\) and
\(Q_T=\lceil \beta_Qns\rceil\geq \beta_Qns\),
\begin{align}
 \frac1{B_T}+\frac{C_{\mathbb S}n}{Q_T}
 &\leq\frac1{\beta_Bs}+\frac{C_{\mathbb S}}{\beta_Qs}
 =\frac{\mathfrak S_{B,Q}}{s}.
\label{eq:appendix_fedzoo_horizon_sampling_factor}
\end{align}
These budget bounds imply
\begin{subequations}
\begin{align}
 \sigma_{\mathrm l,t}^2
 &\leq\frac{\mathfrak L_1\mathfrak S_{B,Q}}{s}
 \leq\mathfrak L_1\mathfrak S_{B,Q},
\label{eq:appendix_fedzoo_horizon_local_variance}\\
 \frac{\sigma_{\mathrm g,t}^2}{K}
 &\leq\frac{\mathfrak L_2\mathfrak S_{B,Q}}{Ks}
 =\frac{\varkappa\mathfrak L_1\mathfrak S_{B,Q}}{Ks}
 \leq\frac{\mathfrak L_1\mathfrak S_{B,Q}}{T}.
\label{eq:appendix_fedzoo_horizon_global_variance}
\end{align}
\end{subequations}
The last inequality uses
\(s=\max\{1,\varkappa T/K\}\geq\varkappa T/K\). Substituting
\(\rho_t^2=\rho_0^2/T\) and the bounds in
\eqref{eq:appendix_fedzoo_horizon_local_variance} and
\eqref{eq:appendix_fedzoo_horizon_global_variance} into
\eqref{eq:common_round_residual} gives
\begin{align}
 \mathcal E_t
 &\leq\frac{\chi_KC\rho_0^2}{T}
 (\bar L^2+\zeta_\eta^2
 +\mathfrak L_1\mathfrak S_{B,Q})
 +\frac{\mathfrak L_1\mathfrak S_{B,Q}}{T}\nonumber\\
 &=\frac{\mathfrak R_{B,Q}}{T}.
\label{eq:appendix_fedzoo_horizon_round_residual}
\end{align}
We substitute \eqref{eq:appendix_fedzoo_horizon_round_residual} into
\eqref{eq:fedzoo_convergence_bound} and set \(\alpha=1/L_\eta\) to obtain
\begin{align*}
 \E\|\mathcal G_{\eta,1/L_\eta}(x_R)\|^2
 &\leq\frac{2L_\eta\Delta_\eta}{T}
 +\frac{2\mathfrak R_{B,Q}}{T}\\
 &=\frac{2L_\eta\Delta_\eta+2\mathfrak R_{B,Q}}{T},
\end{align*}
which proves \eqref{eq:fedzoo_horizon_rate}.
\hfill\QED\par

\noindent\emph{Proof of Corollary~\ref{cor:fedzoo_complexity}:}
Jensen's inequality gives
\begin{align*}
 \E\|\mathcal G_{\eta,1/L_\eta}(x_R)\|
 &\leq\left(
 \E\|\mathcal G_{\eta,1/L_\eta}(x_R)\|^2\right)^{1/2}\\
 &\leq\sqrt{(2L_\eta\Delta_\eta+2\mathfrak R_{B,Q})/T}.
\end{align*}
The preceding inequality proves \eqref{eq:fedzoo_average_mapping_norm}.
Therefore,
\(T\geq(2L_\eta\Delta_\eta+2\mathfrak R_{B,Q})\epsilon^{-2}\) ensures that
the left-hand side of \eqref{eq:fedzoo_average_mapping_norm} is no larger
than \(\epsilon\). Since \(s\geq1\), the ceiling operation gives
\[
 B_T\leq(\beta_B+1)s,\qquad Q_T\leq(\beta_Q+1)ns.
\]
The resulting function-query count satisfies
\begin{align}
 \mathcal N_{\mathrm{FedZOO}}(T)
 &=2MKT B_TQ_T
 =\mathcal O(MKnT s^2)\nonumber\\
 &=\mathcal O\!\left(MKnT
 +\frac{Mn\varkappa^2}{K}T^3\right),
\label{eq:appendix_fedzoo_query_derivation}
\end{align}
where
\(s^2=\max\{1,(\varkappa T/K)^2\}
\leq1+(\varkappa T/K)^2\). Substituting
\(T=\mathcal O(\epsilon^{-2})\) into
\eqref{eq:appendix_fedzoo_query_derivation} gives
\eqref{eq:fedzoo_query_complexity_in_epsilon}. Thus, the two query bounds establish
\eqref{eq:fedzoo_query_complexity_in_T} and
\eqref{eq:fedzoo_query_complexity_in_epsilon}.
\hfill\QED\par

\section{Proofs for FedVRZO}
\label{app:fedvrzo_proofs}

\noindent\emph{Proof of Lemma~\ref{lem:fedvrzo_sampling_error}:}
For the \(j\)th sample--direction pair in \(\mathcal W_{i,t,k}\), we denote the associated
estimate by
\begin{equation}
Z_j:=\widehat g_{i,\eta}
(x_{i,t,k};\xi_{i,t,k}^{j},v_{i,t,k}^{j}).
\label{eq:appendix_fedvrzo_pair_variable}
\end{equation}
The iterate \(x_{i,t,k}\) is
\(\mathcal F_{t,k}\)-measurable. Conditional on \(\mathcal F_{t,k}\),
\(\xi_{i,t,k}^{j}\) is sampled from \(\mathcal D_i\),
\(v_{i,t,k}^{j}\) is sampled uniformly from \(\mathbb S\), and the two
variables are independent. Consequently,
\eqref{eq:elementary_zo_expectation} gives
\begin{equation}
 \E[Z_j\mid\mathcal F_{t,k}]
 =\nabla f_{i,\eta}(x_{i,t,k})
 =:\mu_{i,t,k}.
\label{eq:appendix_fedvrzo_pair_mean}
\end{equation}
For fixed \(\xi_{i,t,k}^{j}\), Assumption~\ref{ass:regularity} implies that
\(F_i(\cdot,\xi_{i,t,k}^{j})\) is
\(\widetilde L_i(\xi_{i,t,k}^{j})\)-Lipschitz continuous. The spherical
estimator second-moment bound in
\cite{marrinan2026zo} gives
\begin{equation}
\E[\|Z_j\|^2\mid\mathcal F_{t,k},\xi_{i,t,k}^{j}]
\leq C_{\mathbb S}n\widetilde L_i(\xi_{i,t,k}^{j})^2.
\label{eq:appendix_fedvrzo_pair_moment_conditional}
\end{equation}
We take expectation with respect to \(\xi_{i,t,k}^{j}\) and apply
\eqref{eq:client_lipschitz_moment} to obtain
\begin{equation}
 \E[\|Z_j\|^2\mid\mathcal F_{t,k}]
 \leq C_{\mathbb S}nL_i^2.
\label{eq:appendix_fedvrzo_pair_moment}
\end{equation}
The local estimation error can be written as
\[
 e_{i,t,k}
 =\frac1{N_t}\sum_{j=1}^{N_t}
 (Z_j-\mu_{i,t,k}),
\]
and conditional independence of the sample--direction pairs eliminates all cross terms when
the squared norm is expanded. Consequently,
\begin{subequations}
\begin{align}
 \E[e_{i,t,k}\mid\mathcal F_{t,k}]&=0,
\label{eq:appendix_fedvrzo_local_centered}\\
 \E[\|e_{i,t,k}\|^2
 \mid\mathcal F_{t,k}]
 &=\frac1{N_t^2}\sum_{j=1}^{N_t}
 \E[\|Z_j-\mu_{i,t,k}\|^2
 \mid\mathcal F_{t,k}]\nonumber\\
 &\leq\frac{C_{\mathbb S}nL_i^2}{N_t}.
\label{eq:appendix_fedvrzo_local_variance}
\end{align}
\end{subequations}

The aggregate error is a weighted sum of conditionally centered increments.
For different clients at the same local step, their conditional cross moments
are zero by Assumption~\ref{ass:conditional_sampling}. For \(k<\ell\),
\(e_{i,t,k}\) is
\(\mathcal F_{t,\ell}\)-measurable and
\(\E[e_{j,t,\ell}
\mid\mathcal F_{t,\ell}]=0\). Thus, by the tower property,
\[
 \E[\langle e_{i,t,k},
 e_{j,t,\ell}\rangle
 \mid\mathcal F_{t,0}]=0.
\]
The conditional orthogonality relations reduce the second moment in
\eqref{eq:fedvrzo_total_sampling_error} to a sum of individual second moments.
According to \eqref{eq:appendix_fedvrzo_local_variance},
\begin{align}
 \E[\|\widetilde e_t\|^2
 \mid\mathcal F_{t,0}]
 &=\frac1{K^2}\sum_{i=1}^{M}p_i^2
 \sum_{k=0}^{K-1}
 \E[\|e_{i,t,k}\|^2
 \mid\mathcal F_{t,0}]\nonumber\\
 &\leq\frac1{K^2}\sum_{i=1}^{M}p_i^2
 \sum_{k=0}^{K-1}\frac{C_{\mathbb S}nL_i^2}{N_t}
 =\frac{\widehat\sigma_{\mathrm g,t}^2}{K}.
\label{eq:appendix_fedvrzo_aggregate_variance}
\end{align}
Thus, we can obtain the result in
\eqref{eq:fedvrzo_aggregate_error_bound}.
\hfill\QED\par

\noindent\emph{Proof of Lemma~\ref{lem:fedvrzo_direction_accuracy}:}
The zero conditional mean in
\eqref{eq:appendix_fedvrzo_local_centered} and the conditional second-moment
bound in \eqref{eq:appendix_fedvrzo_local_variance} show that the local error
\(e_{i,t,k}\) satisfies \eqref{eq:common_estimator_conditions} with
\[
\omega_{i,t}^2=\frac{C_{\mathbb S}nL_i^2}{N_t},\qquad
\sum_i p_i\omega_{i,t}^2=\widehat\sigma_{\mathrm l,t}^2.
\]
We apply Lemma~\ref{lem:common_local_trajectory} with
\(z_t=x_t\), \(z_{i,t,k}=x_{i,t,k}\), and
\(\widehat g_{i,t,k}=g_{i,t,k}\). For \(K\geq2\), 
\begin{align}
 \E[\|\mathfrak d_t\|^2\mid\mathcal F_{t,0}]
 &\leq C\rho_t^2
 (\|h_t\|^2+\zeta_\eta^2
 +\widehat\sigma_{\mathrm l,t}^2).
\label{eq:appendix_fedvrzo_drift_prebound}
\end{align}
Using Lemma~\ref{lem:smoothing_regularities}, we have
\[
 \|h_t\|
 =\left\|\sum_i p_i\nabla f_{i,\eta}(x_t)\right\|
 \leq\sum_i p_iL_i=\bar L.
\]
We substitute this bound into \eqref{eq:appendix_fedvrzo_drift_prebound} to
obtain \eqref{eq:fedvrzo_drift_bound}. For \(K=1\), the drift
is zero and the same bound follows from \(\chi_K=0\).

For FedVRZO, the direction decomposition in
\eqref{eq:common_direction_decomposition} implies
\begin{align}
 \|d_t-h_t\|^2
 =\|\mathfrak d_t
 +\widetilde e_t\|^2
\leq2\|\mathfrak d_t\|^2
 +2\|\widetilde e_t\|^2.
\label{eq:appendix_fedvrzo_total_direction_split}
\end{align}
We take conditional expectation in
\eqref{eq:appendix_fedvrzo_total_direction_split} and apply
\eqref{eq:fedvrzo_drift_bound} and
\eqref{eq:fedvrzo_aggregate_error_bound} to obtain
\begin{align*}
 \E[\|d_t-h_t\|^2
 \mid\mathcal F_{t,0}]
 &\leq2\chi_KC\rho_t^2
 (\bar L^2+\zeta_\eta^2+\widehat\sigma_{\mathrm l,t}^2)
 +\frac{2\widehat\sigma_{\mathrm g,t}^2}{K}\\
 &=2\widehat{\mathcal E}_t,
\end{align*}
which proves \eqref{eq:fedvrzo_total_direction_bound}.
\hfill\QED\par

\noindent\emph{Proofs of Theorem~\ref{thm:fedvrzo_convergence} and
Corollary~\ref{cor:fedvrzo_asymptotic}:}
We apply Lemma~\ref{lem:inexact_projected_descent} with
\(x=x_t\), \(\widehat d=d_t\), and \(a=\alpha\) to obtain the one-step
projected-descent inequality. Then, we take conditional expectation with
respect to \(\mathcal F_{t,0}\) and apply
\eqref{eq:fedvrzo_total_direction_bound} to obtain
\begin{align}
 \E[f_\eta(x_{t+1})\mid\mathcal F_{t,0}]
 &\leq f_\eta(x_t)
 -\frac{\alpha}{2}
 \|\mathcal G_{\eta,\alpha}(x_t)\|^2
 +\alpha\widehat{\mathcal E}_t.
\label{eq:appendix_fedvrzo_conditional_descent}
\end{align}
Next, we take total expectation and sum from \(t=0\) to \(T-1\) to obtain
\begin{align}
 \E[f_\eta(x_T)]
 \leq f_\eta(x_0)
 -\frac{\alpha}{2}\sum_{t=0}^{T-1}
 \E\|\mathcal G_{\eta,\alpha}(x_t)\|^2
 +\alpha\sum_{t=0}^{T-1}\widehat{\mathcal E}_t.
\label{eq:appendix_fedvrzo_telescoping}
\end{align}
The lower bound \(f_\eta(x_T)\geq f_{\inf}\) and the definition in
\eqref{eq:common_initial_gap} transform
\eqref{eq:appendix_fedvrzo_telescoping} into
\[
 \frac{\alpha}{2}\sum_{t=0}^{T-1}
 \E\|\mathcal G_{\eta,\alpha}(x_t)\|^2
 \leq\Delta_\eta
 +\alpha\sum_{t=0}^{T-1}\widehat{\mathcal E}_t.
\]
Dividing by \(\alpha T/2\) and applying
\eqref{eq:random_output_identity} proves
\eqref{eq:fedvrzo_convergence_bound}. Since \(\alpha>0\) and
\(\Delta_\eta<\infty\) are fixed, both terms on its right-hand side converge
to zero under the condition of Corollary~\ref{cor:fedvrzo_asymptotic}; hence,
the corollary follows.
\hfill\QED\par

\noindent\emph{Verification of Remark~\ref{rem:fedvrzo_asymptotic_schedule}:}
The condition \(N_t\geq1\) implies
\begin{equation}
 \widehat\sigma_{\mathrm l,t}^2
 \leq C_{\mathbb S}n\mathfrak L_1=:\mathfrak V_N.
\label{eq:appendix_fedvrzo_uniform_local_variance}
\end{equation}
Equations \eqref{eq:fedvrzo_aggregate_variance_scale} and
\eqref{eq:fedvrzo_round_residual} give
\begin{align}
 \frac1T\sum_{t=0}^{T-1}\widehat{\mathcal E}_t
 &\leq\chi_KC
 (\bar L^2+\zeta_\eta^2+\mathfrak V_N)
 \frac1T\sum_{t=0}^{T-1}\rho_t^2\nonumber\\
 &\quad+\frac{C_{\mathbb S}n\mathfrak L_2}{K}
 \frac1T\sum_{t=0}^{T-1}\frac1{N_t}.
\label{eq:appendix_fedvrzo_average_residual}
\end{align}
The right-hand side converges to zero under the sufficient conditions stated
in Remark~\ref{rem:fedvrzo_asymptotic_schedule}.

Under \eqref{eq:fedvrzo_explicit_schedule}, we introduce the constants
\[
 \mathfrak A_N:=\frac{C_{\mathbb S}n\mathfrak L_1}{N_0},
 \qquad
 \mathfrak B_N:=\frac{C_{\mathbb S}n\mathfrak L_2}{KN_0}.
\]
For every \(t\geq0\),
\[
 \rho_t^2\leq\frac{\rho_0^2}{t+1},\qquad
 \widehat\sigma_{\mathrm l,t}^2
 \leq\frac{\mathfrak A_N}{t+1},\qquad
 \frac{\widehat\sigma_{\mathrm g,t}^2}{K}
 \leq\frac{\mathfrak B_N}{t+1}.
\]
These estimates imply
\begin{align}
 \widehat{\mathcal E}_t
 &\leq\frac{\chi_KC\rho_0^2}{t+1}
 \left(\bar L^2+\zeta_\eta^2
 +\frac{\mathfrak A_N}{t+1}\right)
 +\frac{\mathfrak B_N}{t+1}\nonumber\\
 &\leq\frac{\mathfrak C_N}{t+1},
\label{eq:appendix_fedvrzo_harmonic_residual}
\end{align}
where
\(\mathfrak C_N:=\chi_KC\rho_0^2
(\bar L^2+\zeta_\eta^2+\mathfrak A_N)
+\mathfrak B_N\). We substitute this bound into
\eqref{eq:fedvrzo_convergence_bound} to obtain
\[
 \E\|\mathcal G_{\eta,\alpha}(x_R)\|^2
 \leq\frac{2\Delta_\eta}{\alpha T}
 +\frac{2\mathfrak C_N(1+\log T)}{T},
\]
which proves \eqref{eq:fedvrzo_log_rate}.
\hfill\QED\par

\noindent\emph{Proof of Corollary~\ref{cor:fedvrzo_horizon_rate}:}
For brevity, we denote \(\mathfrak s_T\) by \(s\). Since
\(N_T=\lceil\beta_N ns\rceil\geq\beta_N ns\),
\begin{subequations}
\begin{align}
 \widehat\sigma_{\mathrm l,t}^2
 &=\frac{C_{\mathbb S}n\mathfrak L_1}{N_T}
 \leq\frac{\mathfrak L_1\mathfrak S_N}{s}
 \leq\mathfrak L_1\mathfrak S_N,
\label{eq:appendix_fedvrzo_horizon_local_variance}\\
 \frac{\widehat\sigma_{\mathrm g,t}^2}{K}
 &=\frac{C_{\mathbb S}n\mathfrak L_2}{KN_T}
 \leq\frac{\varkappa\mathfrak L_1\mathfrak S_N}{Ks}
 \leq\frac{\mathfrak L_1\mathfrak S_N}{T}.
\label{eq:appendix_fedvrzo_horizon_global_variance}
\end{align}
\end{subequations}
The last step uses \(s\geq\varkappa T/K\). We combine these bounds with
\(\rho_t^2=\rho_0^2/T\) to obtain
\begin{align}
 \widehat{\mathcal E}_t
 &\leq\frac{\chi_KC\rho_0^2}{T}
 (\bar L^2+\zeta_\eta^2
 +\mathfrak L_1\mathfrak S_N)
 +\frac{\mathfrak L_1\mathfrak S_N}{T}\nonumber\\
 &=\frac{\mathfrak R_N}{T}.
\label{eq:appendix_fedvrzo_horizon_round_residual}
\end{align}
Finally, we substitute \eqref{eq:appendix_fedvrzo_horizon_round_residual} into
\eqref{eq:fedvrzo_convergence_bound} and set \(\alpha=1/L_\eta\) to obtain
\begin{align*}
 \E\|\mathcal G_{\eta,1/L_\eta}(x_R)\|^2
 &\leq\frac{2}{T}\left(L_\eta\Delta_\eta+\mathfrak R_N\right),
\end{align*}
which proves \eqref{eq:fedvrzo_horizon_rate}.
\hfill\QED\par

\noindent\emph{Proof of Corollary~\ref{cor:fedvrzo_complexity}:}
Jensen's inequality gives
\begin{align*}
 \E\|\mathcal G_{\eta,1/L_\eta}(x_R)\|
 &\leq\left(
 \E\|\mathcal G_{\eta,1/L_\eta}(x_R)\|^2\right)^{1/2}\\
 &\leq\sqrt{(2L_\eta\Delta_\eta+2\mathfrak R_N)/T}.
\end{align*}
The preceding inequality proves \eqref{eq:fedvrzo_average_mapping_norm}.
Therefore,
\(T\geq(2L_\eta\Delta_\eta+2\mathfrak R_N)\epsilon^{-2}\) is sufficient.
The inequality \(s\geq1\) ensures that
\(N_T=\lceil\beta_N ns\rceil\leq(\beta_N+1)ns\). The resulting query count
satisfies
\begin{align}
 \mathcal N_{\mathrm{FedVRZO}}(T)
 &=2MKTN_T
 =\mathcal O(MKnT s)\nonumber\\
 &=\mathcal O\!\left(MKnT+Mn\varkappa T^2\right),
\label{eq:appendix_fedvrzo_query_derivation}
\end{align}
where
\(s=\max\{1,\varkappa T/K\}\leq1+\varkappa T/K\). Substituting
\(T=\mathcal O(\epsilon^{-2})\) into
\eqref{eq:appendix_fedvrzo_query_derivation} gives
\eqref{eq:fedvrzo_query_complexity_in_epsilon}. Thus, the two query bounds establish
\eqref{eq:fedvrzo_query_complexity_in_T} and
\eqref{eq:fedvrzo_query_complexity_in_epsilon}.
\hfill\QED\par

\bibliographystyle{IEEEtran}
\bibliography{mybib}

\end{document}